\documentclass[reprint,superscriptaddress, nofootinbib, amsmath,amssymb,aps,pra]{revtex4-2}

\pdfoutput=1

\usepackage[utf8]{inputenc}
\usepackage{amsthm}
\usepackage{mathrsfs}
\usepackage{physics}
\usepackage{bbm}
\usepackage{bm}
\usepackage{quantikz}
\usepackage[ruled,vlined]{algorithm2e}
\usepackage{dcolumn}
\usepackage{hyperref}
\usepackage{xcolor,soul}
\usepackage[english]{babel}
\usepackage{romannum}
\usepackage{natbib}
\usepackage{enumitem}
\usepackage{overpic}
\usepackage{graphicx,epsfig}
\usepackage{amsfonts}
\usepackage{braket}
 \usepackage{booktabs}
\usepackage{tabularx}
\usepackage{array}
\RestyleAlgo{ruled}

\makeatletter
\newcommand*{\rom}[1]{\expandafter\@slowromancap\romannumeral #1@}
\makeatother

\hypersetup{hidelinks}

\newcommand{\ph}{\overleftarrow{h}}

\DeclareMathOperator*{\argmax}{\arg\!\max}
\DeclareMathOperator{\pr}{Pr}

\usepackage{ragged2e}

\newtheorem{result}{Result}

\newtheorem{theorem}{Theorem}
\newtheorem{lemma}[theorem]{Lemma}
\newtheorem{proposition}[theorem]{Proposition}

\theoremstyle{definition}
\newtheorem{definition}[theorem]{Definition}

\newtheorem{corollary}[theorem]{Corollary}

\def\<{\langle}
\def\>{\rangle}
\definecolor{MGcolor}{RGB}{0,105,105}

\newcommand{\piopt}[0]{\pi^{\mathrm{opt}}}

\newcommand{\M}[0]{\mathsf M}

\begin{document}
\pagenumbering{arabic}

%%%%%%%%%%%%%%%%%%%%%%%%%%%%%%%%%%%%%%

\title{An Irreducible Quantum Advantage in Aligning World Models with Reality}

\author{Josep Lumbreras} 
\email{josep.lz@ntu.edu.sg}
\affiliation{Centre for Quantum  Technologies, Nanyang Technological University, Singapore}
\affiliation{Nanyang Quantum Hub, School of Physical and Mathematical Sciences, Nanyang Technological University, Singapore}
\author{Hailan Ma}
\email{hailanma0413@gmail.com}
\affiliation{Centre for Quantum  Technologies, Nanyang Technological University, Singapore}
\affiliation{Nanyang Quantum Hub, School of Physical and Mathematical Sciences, Nanyang Technological University, Singapore}
\author{Jayne Thompson}
\email{thompson.jayne2@gmail.com}
\affiliation{College of Computing and Data Science, Nanyang Technological University, Singapore}
\affiliation{Centre for Quantum  Technologies, Nanyang Technological University, Singapore}
\affiliation{Nanyang Quantum Hub, School of Physical and Mathematical Sciences, Nanyang Technological University, Singapore}

\author{Mile Gu}
\email{mgu@quantumcomplexity.org}
\affiliation{Centre for Quantum Technologies, Nanyang Technological University, Singapore}
\affiliation{Nanyang Quantum Hub, School of Physical and Mathematical Sciences, Nanyang Technological University, Singapore}

\begin{abstract}World models provide digital simulacra of the true world, allowing agents to be trained and tested before costly real-world deployment. At each time step, they receive an action and generate an observation and reward matching the statistics of the true world. In complex environments where present outcomes depend on events far in the past, this requires memory. One might expect that, by increasing memory, we can always build a model accurately enough to align the optimal agent policies of the real and virtual worlds. We show that this is false for classical world models, even when the true world itself is classical. We construct true worlds for which every finite classical model fails along the same possible trajectory: it either loses the ability to distinguish actions when the true world clearly prefers one, or repeatedly assigns the highest expected reward to suboptimal actions. Its expected-reward estimates also retain a nonvanishing average error. In contrast, each such true world admits a quantum world model using a single qutrit that reproduces it exactly: its reward estimates and preferred actions always match those of the true world, ensuring that the optimal policies of the real and virtual worlds remain perfectly aligned.\end{abstract}

\maketitle

From a robot navigating crowded supermarkets to an autonomous vehicle operating on busy streets, a central ambition of reinforcement learning is to train agents to navigate ever more complex environments~\cite{Sutton1998,mnih2015human,schrittwieser2020mastering,hafner2025mastering}. Such training, however, depends on data that, in physical environments, may be scarce, slow to collect, or economically costly. Meanwhile, a successful agent must be able to act in all potential scenarios, many of which are difficult to test because doing so could endanger people or equipment, or incur significant economic loss~\cite{chua2018deep,m2023model}. World models address this bottleneck by providing surrogate environments in which agents can instead be trained, stress-tested, and benchmarked~\cite{sutton1991dyna,ha2018world,hafner2020dream}. At their operational core, they accept the same actions as the environment where the agent will ultimately be deployed - the \emph{true world} - and generate corresponding observations and rewards over successive time steps. But even setting aside how such models are developed, how do physical resources fundamentally constrain what worlds they can represent?

Memory is a key constraint. Complex true worlds can be highly non-Markovian: previous actions, outcomes, and rewards can change their response to the same action in the distant future. Reproducing this contextual behavior requires memory (see Fig.~\ref{fig:wmodel}). Whenever the model receives an action and generates an outcome, it updates this memory, enabling the consequences of a future action to depend on what came before~\cite{rabiner1989tutorial,KAELBLING199899,littman2001predictive}. In complex environments, relevant context may extend over long horizons, and memory requirements can grow in tandem. Consequently, memory is a central bottleneck for building world models that remain accurate over extended timescales~\cite{robine2023transformer,deng2023s4wm,laird2025memory}.

\begin{figure}[t!]
    \centering
    \includegraphics[width=1.0\linewidth]{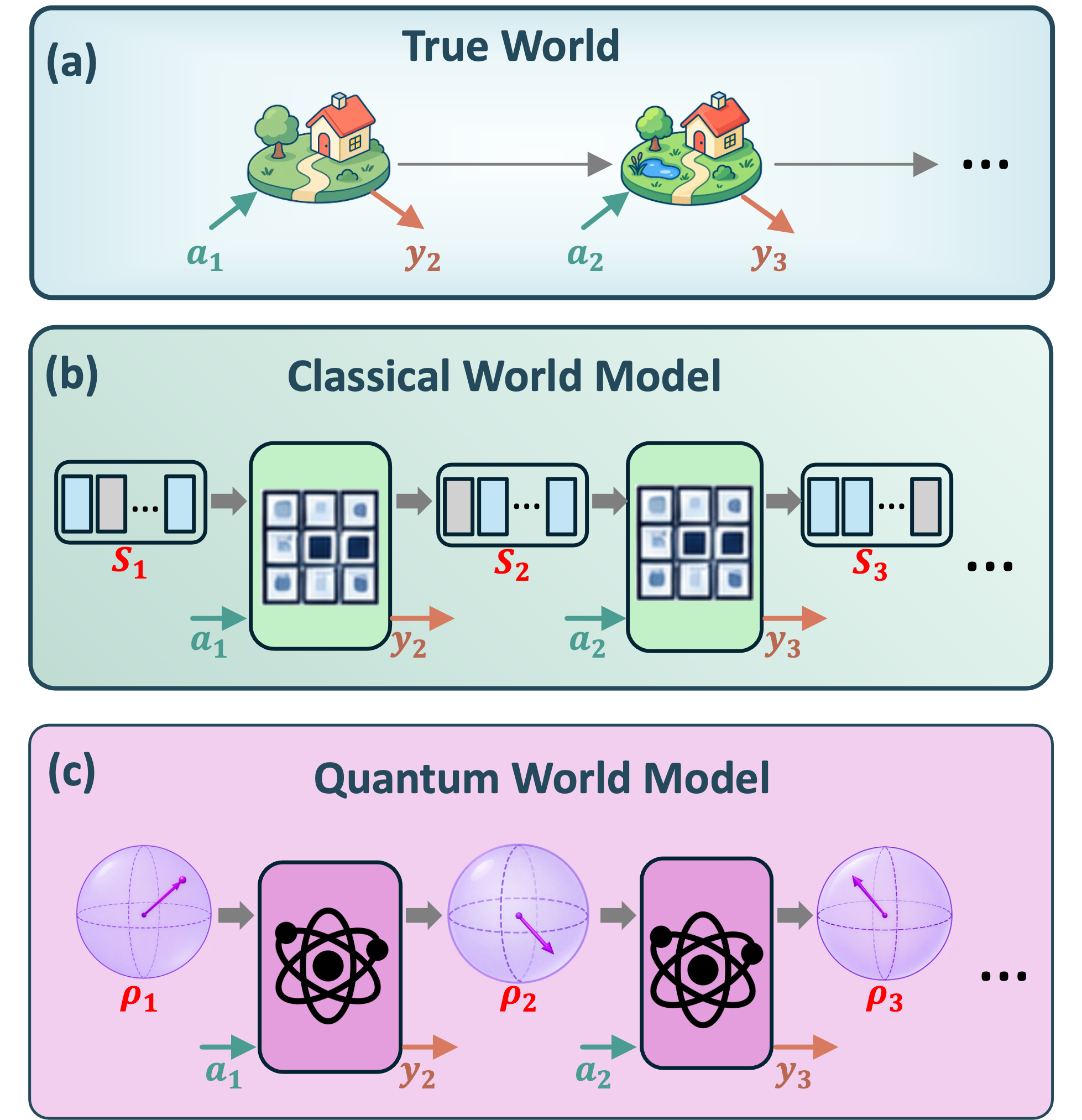}
    \caption{\textbf{Classical and quantum world models}. The true world represents reality (a). At each time step, it receives an action $a_t$ and returns an outcome $y_{t+1}$ that consists of an observation and reward. A world model reproduces this input–output behavior through recurrent internal dynamics, updating its memory from one simulated step to the next. In a classical world model (b), a classical register with a finite number of bits carries the memory, and it evolves under classical stochastic updates. In a quantum world model (c), the memory is encoded in a finite number of qubits and evolves through quantum operations.}
    \label{fig:wmodel}
\end{figure}

This memory burden has operational consequences. A world model implicitly assigns an action-value to each candidate action, representing its expected cumulative reward if taken. If memory limitations distort a world model, the resulting deviancy can erase the separation between a clearly preferable action and its competitors, or reverse their ordering altogether~\cite{grimm2020value,farahmand2011action,bellemare2016increasing}. Even when such failures occur only for specific action-outcome sequences, the effects could be disastrous. An autonomous vehicle may seldom encounter a pedestrian in its path, yet correctly ranking the merits of braking and accelerating is crucial~\cite{o2018scalable,liu2024curse}. Training or benchmarking agents on a world model that reverses this ranking can induce misalignment that adversaries could exploit to cause highly undesirable outcomes.

\emph{Could quantum world models, where context is enabled by repeated interactions with a quantum memory, offer unique advantages in removing this misalignment?} We answer in the affirmative. We introduce a family of true worlds where any finite-memory classical model exhibits an alignment gap that cannot be removed. Along certain fixed future action-outcome trajectories, the expected future rewards assigned to potential actions retain a non-vanishing average error $\varepsilon$ bounded away from $0$. Any classical world model either loses the ability to resolve the true world's preferred action or nominates suboptimal actions as optimal actions at least half of the time - incurring an average reward loss of at least $\varepsilon$. This alignment gap of $\varepsilon$ cannot be reduced by increasing the size of the classical memory, as long as it remains finite. We then introduce a class of quantum models with a single qutrit (see Fig.~\ref{fig:wmodel} (c)) that can avoid these limitations. Every action it judges optimal is also optimal in the true world in every reachable future trajectory. Our work thus demonstrates an irreducible quantum advantage in aligning world models with certain true worlds.

\section{Framework}

\textbf{Modeling True Worlds}. The heart of our problem is to build concise
models of true worlds---ones that reproduce their statistical black-box
behavior from the perspective of any agent interacting with them. We must
therefore first define this behavior operationally. We regard a true world (see Fig.~\ref{fig:wmodel} (a)) as
an environment with which an agent interacts over discrete time steps
$t=0,1,\ldots$. At time $t$, the world receives an action
$A_t\in\mathcal A$ and returns an outcome
$Y_{t+1}=(O_{t+1},R_{t+1})\in\mathcal Y\subseteq
\mathcal O\times\mathbb R$, consisting of an observation
$O_{t+1}\in\mathcal O$ and a reward $R_{t+1}\in\mathbb R$. Larger rewards
represent more desirable outcomes, while costs are represented by negative
rewards. Here, we take the action and outcome alphabets
$\mathcal A$ and $\mathcal Y$ to be finite.

Let $h_t = (a_{t-1},y_t)$ denote the action-outcome pair generated from the $t^{th}$ interaction. Each instance of a true world at time $t$ then has a history $\ph_t = h_1h_2\ldots h_t$. The operational behavior of a true world can then be entirely encapsulated by $\Pr_\star(y\mid \ph, a)$, the probability that it emits outcome $y$ when receiving action $a$.

A faithful world model is a machine that reproduces this conditional action-outcome behavior. A standard finite-state classical realization is a
controlled hidden Markov model~\cite{rabiner1989tutorial,KAELBLING199899,littman2001predictive} - 
a framework that underpins applications ranging from speech recognition to
biological sequence analysis
~\cite{baum1966statistical,rabiner1989tutorial,ephraim2002hidden,krogh1994hidden}. These machines contain a memory $M$ with a finite set $\mathcal{S} = \{1,\cdots , N\}$ of $N$ distinct physical states. When receiving an action $a$, their dynamics can then be completely described by transition elements
\begin{align}
\bigl(D_y^{(a)}\bigr)_{ji}
&:=
\Pr \left(Y_{t+1} = y,S_{t+1} = j|S_t = i,A_t = a\right)
\label{eq:classical-branch-matrix-main}
\end{align}
representing the probability that a machine in memory state $S_t = i$ transitions to $j$ while emitting the outcome $y$ after receiving action $a$. Given an initial probability distribution  $z^{\mathrm{in}}_k = \Pr(S_0 = k)$ over memory states and action sequence $a_0a_1\ldots$, the dynamics $\bigl(D_y^{(a)}\bigr)_{ji}$ then completely determines the model's probability of emitting outcome $\Pr_{\mathsf M}(y\mid \ph, a)$ on seeing action $a$ at time $t$ for any history $\ph$ (see Appendix~\ref{sec:classical_realization} for details). A world model is then faithful if it is operationally indistinguishable from the true world $\mathrm{Pr}_\star$, such that $\Pr_\M(y\mid \ph, a)= \Pr_\star(y\mid \ph, a)$ for all $\ph$ and $a$. Each world model is thus specified by the tuple $\mathsf M
=
\left(
\mathcal A,
\mathcal Y,
\mathcal S,
z^{\mathrm{in}},
\mathbf D
\right)$, where $\mathbf D = \{
D_y^{(a)}\}$ is the collection of $N \times N$ transition matrices describing transition dynamics on the model memory for each action-outcome pair $(a,y)$.

Physically, such a model operates as a sequence of stochastic interactions on its memory $M$ (see Fig.~\ref{fig:wmodel} (b)). Thus, $M$ enables a world model to exhibit complex non-Markovian behavior: without it, a world model's outcome behavior cannot depend on history. Thus, the number $N$ of distinct configurations available to the memory system $M$ provides a measure of the complexity required to model a true world.~\footnote{Specifically, this measure is sometimes referred to as its generative complexity for stochastic models~\cite{marzen2017nearly}. Note that it is distinct from the statistical complexity used to measure the minimal memory needed to store an agent's belief state of the environment~\cite{shalizi2001computational}. This is because environmental HMMs need not be unifilar, whereas belief state updates are~\cite{ruebeck2018}.} Operationally, $N$ is often referred to as the physical \emph{memory dimension} of $M$, defined as the largest number of states that can be
perfectly distinguished in a single use
~\cite{gu2012quantum,Gallego_2010,brunner2014dimension,ghafari2019,
hardy2001quantumtheoryreasonableaxioms}.

\textbf{Rewards and Value Functions}. In practice, a perfectly faithful model can have immense generative complexity. Any practical world model is often an approximation -  identifying a model where $\mathrm{Pr}(y\mid \ph, a)$  is a sufficiently good approximation of $\mathrm{Pr}_\star(y\mid \ph, a)$. But what constitutes being sufficiently good? To answer this, we first need to review the main impetus for world models: true-world replacements for training and benchmarking agents to obtain policies that yield higher rewards.

Specifically, world models are designed to house agents. An agent's behavior is dictated by their policy: a probability distribution $\pi(a|\ph_t)$ governing what actions $a$ the agent would take on seeing history $\ph_t$. When acting on a world model $\M$ with action-outcome response $\mathrm{Pr}_M(y|\ph, a)$, each policy induces a sequence of rewards governed by random variables  $R_t|_{\pi, \M}$ governing the reward at each time-step $t$. Let $0 \leq \gamma < 1$ denote the \emph{discount factor}, which captures the preference for immediate over future rewards: when $\gamma=0$, only the next reward is considered, whereas as $\gamma \to 1$, rewards are weighted increasingly equally over time. The resulting discounted cumulative reward, or simply the \emph{return}, is represented by the random variable
\begin{equation}
G_{\pi,\M}
:=
\sum_{k=0}^{\infty}\gamma^k R_{k+1}|_{\pi,\M},
\label{eq:true-world-policy-return}
\end{equation}
which has played a dominant role in benchmarking the efficacy of a policy in reinforcement learning~\cite{Sutton1998,puterman2014markov}. We can then introduce
$\piopt_{\M}$ as the theoretical optimal policy in a given world model $\M$, one that achieves the maximum expected return $\langle G_{\pi,\M} \rangle$. Here and throughout, policies are understood as history-dependent decision
rules: the optimal policy specifies which action to take after every possible
history encountered during interaction with the world. For true worlds with input-output response $\mathrm{Pr}_\star(y|\ph,a)$, we obtain return $G_{\pi, \star}$ and true optimal policy $\piopt_{\star}$. Indeed, since the true world is operationally indistinguishable from a perfectly faithful model $\M = \star$, our subsequent exposition holds both for world models and true worlds. 

Given a particular history $\ph$, the reward potential of taking different actions can then be captured by the \emph{action-value} function. Consider the policy $\piopt_{\M}|_a$ that involves taking immediate action $a_t = a$ at history $\ph$, and thereafter making decisions according to the optimal policy $\piopt_\M$. The action-value function 
\begin{equation}
Q_{\M,a}^{\mathrm{opt}}(\ph) = \left\langle
\sum_{k=0}^{\infty}
\gamma^k R_{t+k+1}|_{\piopt_M|_a,\M,\ph} \right\rangle
\end{equation}
then captures the maximum expected reward we can potentially extract after taking a possibly non-optimal action $a$ according to world model $\M$. Taking the supremum of this quantity over all $a$ then gives us the value function 
\begin{equation}
V_{\M}^{\mathrm{opt}}(\ph) = \left\langle
\sum_{k=0}^{\infty}
\gamma^k R_{t+k+1}|_{\piopt_\M,\M,\ph} \right\rangle
\end{equation}
that represents the reward potential of that particular history $\ph$ under world model $\M$. If given for the true world $\star$, these value functions would immediately allow an agent to identify the best action at each time to maximize expected reward. %Meanwhile, the efficacy of any candidate policy or action can be quantified by how close it achieves such an optimal state. 

\textbf{Benchmarking World Models}. These quantities immediately provide operationally meaningful methods to assess candidate models. Consider a candidate model $\M$ of a true world $\star$; an obvious measure is to look at the differences in their action-value and value functions
\begin{align}\nonumber
e_{\mathrm{Q}}^{\M}(\ph)
&= \max_{a \in \mathcal{A}}
\left|
Q_{\star,a}^{\mathrm{opt}} (\ph )
-
Q_{\M,a}^{\mathrm{opt}} (\ph )
\right|, \\
e_{V}^{\M}(\ph) 
&=
\left|
V_\star^{\mathrm{opt}} (\ph )
-
V_\M^{\mathrm{opt}} (\ph )
\right| 
\label{eq:classical-optimal-value-error-main}
\end{align}
Given a particular history $\ph$, the first captures the maximum disagreement between $\M$ and $\star$ on the reward potential of various possible actions. The second captures their disagreement on the reward potential of $\ph$ itself. For such disagreements to be significant, they should not be isolated to single points in time. To formalize this, let the reachable action-outcome trajectory $\mathscr F = \ph_1,\ph_2,\ldots$ be an infinite sequence of action-outcome histories that have strictly positive conditional probability of occurrence in the true world. That is, (i) $\ph_t=h_1h_2\cdots h_t$ for some fixed sequence of
action--outcome pairs $h_t=(a_{t-1},y_t)$, and
(ii) $\Pr_\star(y_t\mid\ph_{t-1},a_{t-1})>0$ for every $t$. The quantities

\begin{align}
\overline{e}_{\mathrm{Q}}^{\M}|_\mathscr F
&:=
\liminf_{T\to\infty}
\frac{1}{T}
\sum_{t=1}^{T}
e_{\mathrm{Q}}^{\M}(\overleftarrow{h}_t),
\label{eq:mean-classical-action-value-error-main}
\\
\overline{e}_{V}^{\M}|_\mathscr F
&:=
\liminf_{T\to\infty}
\frac{1}{T}
\sum_{t=1}^{T}
e_{V}^{\M}(\overleftarrow{h}_t),
\label{eq:mean-classical-optimal-value-error-main}
\end{align}
then capture the deviation of assigned action-values and values, averaged across time, for different trajectories that an agent in the true world can experience. Thus, deploying an $\M$ where they strongly deviate to benchmark various policies could lead us to very different - and likely erroneous - conclusions about its efficacy. This motivates us to define the following measure of model deviancy:
\begin{definition} [Value Deviancy]\label{def:world-model-deviancy}
A world model $\M$ is $\epsilon$-deviant if there exists a reachable action-outcome trajectory $\mathscr F$ such that $\overline{e}_{V}^{\M}|_\mathscr{F} \geq \epsilon$, and $\overline{e}_{\mathrm{Q}}^{\M}|_\mathscr F \geq \epsilon$. That is, its average disagreement with the true world in the reward potential along this trajectory is at least $\epsilon$ for both values and action-values.
\end{definition}

A second approach to benchmarking candidate world models is to focus on how deviations can lead to different conclusions about optimal agent actions. Along these lines, we first introduce the decision margin, which captures how accurately we need to estimate action-rewards to decide the optimal action based on a world model $M$. Specifically, given a history $\ph$, define $a = a_{\M}^{opt}|_{\ph_t}$ as the action that leads to the optimal reward according to $M$, and $a' = a'_{\M}|_{\ph_t}$ as the second-best action - the one that attains maximum reward subject to the condition that $a' \neq a$. The decision margin,
\begin{equation}
g_\M(\overleftarrow{h}) = Q^{\mathrm{opt}}_{M,a}(\ph_t) -  Q^{\mathrm{opt}}_{\M,a'}(\ph_t)\label{eq:classical-decision-margin-main}
\end{equation}
represents the action-value gap between taking the best action vs its closest competitor according to the world model $\M$. When  $\M = \star$, the true action-value gap $g_\star(\overleftarrow{h})$ is considered a measure of how robust the optimal policy is in $\star$, as a perturbation of up to $g_\star(\overleftarrow{h})/2$ cannot change conclusions on optimal action~~\cite{farahmand2011action,bellemare2016increasing}. This then allows us to define another potential point of failure for a candidate classical model - \emph{loss of decision resolution}.

\begin{definition}[Loss of decision resolution]
\label{def:loss-decision-resolution}
Let $\mathscr F=h_1h_2\cdots$ be a reachable action--outcome
trajectory, with prefixes $\ph_t=h_1\cdots h_t$. For
$\varepsilon>0$, a world model $\M$ loses decision resolution
of magnitude $\varepsilon$ along $\mathscr F$ if, for every
$\delta>0$ and every $T\in\mathbb N$, there exists $t\geq T$
such that $g_\star(\ph_t)\geq \varepsilon$, $g_\M(\ph_t)<\delta$.
We say simply that $\M$ loses decision resolution along
$\mathscr F$ if this condition holds for some $\varepsilon>0$.
\end{definition}

Thus, along one possible continuing interaction, there is always improvement when taking the best action over its closest competitor in the true world. However, a model $\M$ suffering loss of decision resolution may rank them as progressively closer in value, such that agents being trained via $\M$ will find it increasingly difficult to make the correct decision. 

A complementary question is whether the model's preferred action is correct. Let $a_{\M} = a^{opt}_{\M}|_{\ph_t}$ be a selected action that maximize the action-value in a world model $\M$. At a given history $\ph_t$, an agent trained to perform optimally on $\M$, when deployed in the true world, would thus suffer a loss of
\begin{align}
\ell_\M (\ph_t)
&:=
V_\star^{\mathrm{opt}}(\ph_t) -
Q^{\mathrm{opt}}_{\star,a_\M} \!(\ph_t ),
\label{eq:classical-decision-loss-main}
\end{align}
even if all subsequent actions are chosen optimally. It is also commonly known as the regret of the current decision and naturally vanishes when $\M$ is faithful. Of course, an agent that performs optimally on $\M$ can continue to make non-optimal decisions, which is then captured by the
\emph{mean decision loss}:

\begin{definition}[Mean decision loss]
\label{def:mean-decision-loss}
For $\varepsilon>0$, a world model $\M$ exhibits mean decision loss of at least $\varepsilon$ along a reachable action-outcome trajectory
$\mathscr F = h_1,h_2,
\ldots$ if
\begin{align}
\overline{\ell}_{\M}(\mathscr F)
&:=
\liminf_{T\to\infty}
\frac{1}{T}
\sum_{t=1}^{T}
\ell_{\M}(\overleftarrow{h}_t) \geq \epsilon
\label{eq:mean-classical-decision-loss-main}
\end{align}
\end{definition}

This quantity represents the asymptotic average loss incurred when decisions are optimized using a distorted model. Decision resolution and decision loss capture complementary requirements: a
world model should preserve a clear true-world preference, and the action it
prefers should be optimal when deployed. 

\section{Results}

\textbf{Complex True Worlds}. Our first result is to establish that true worlds can be very complex. Recall that the models introduced above simply as world models, denoted by $\mathsf M$, are classical, with dynamics governed by the stochastic transition matrices $D_y^{(a)}$ in~\eqref{eq:classical-branch-matrix-main}. To make this distinction explicit, we henceforth refer to them as classical world models and denote them by the tuple $\mathsf C
=
\left(
\mathcal A,
\mathcal Y,
\mathcal S,
z^{\mathrm{in}},
\mathbf D
\right)$, where $\mathcal S$ is the set of physical states contained in its memory. We say that a classical world is finite if this set is finite - a requirement that must be true for any world model that can be physically realized. Ideally, we would then be able to simulate every true world arbitrarily closely in terms of model deviancy, loss of decision resolution, and mean decision loss. Our first result establishes that for model deviancy, this gap cannot be closed (see Appendix~\ref{subsec:fdrn-planning-value-errors}):

\begin{result}\label{res:world-model-deviancy}
There exist true worlds and a fixed $\varepsilon>0$ such that, for each
of these worlds, every finite classical world model $\mathsf C$ is
$\varepsilon$-deviant.
\end{result}

This implies that there will always be action-outcome trajectories where our model $ \mathsf {C} $ predicts action-values and values that disagree with the true world by an average of at least $\epsilon$ through the trajectory. 

Next, we establish that this fallibility of classical models extends to loss of decision resolution, and mean decision loss. To do so, we first introduce a definition of a \emph{treacherous true world} - one that cannot be replaced by a finite-memory classical model without serious degradations when used to optimise agent actions:

\begin{definition}[Classically treacherous true world]
\label{def:classically-treacherous-world}
For $\varepsilon>0$, a true world is classically
$\varepsilon$-treacherous if there exists a reachable
action--outcome trajectory $\mathscr F$ such that, for any
finite-memory classical model $\mathsf C$, one of the following holds:
\begin{enumerate}
\item[(a)] $\mathsf C$ loses decision resolution of magnitude at least $\varepsilon$ along $\mathscr F$;
\item[(b)] $\mathsf C$ exhibits a mean decision loss of at least
$\varepsilon$ and recommends a suboptimal action at least half the
time, asymptotically, along $\mathscr F$.
\end{enumerate}
\end{definition}

We emphasize that the constant $\varepsilon$ applies to \emph{every} classical
memory dimension and every classical world model - and does not approach $0$ in the limit of large classical memory. In a classically treacherous true world, no finite-dimensional classical
world model reproduces the decision-relevant statistics well enough to support
reliable action selection along $\mathscr F$. Its action-values either fail
to separate the candidate actions or lead the agent to deploy persistently
suboptimal actions in the true world. Our second result establishes that classical treacherous true worlds exist:

\begin{result}\label{res:classically-treacherous-world} There exist true worlds that are classically $\varepsilon$ treacherous - any finite-memory classical model that attempts to model such a true world will either suffer loss of decision resolution of magnitude $\varepsilon$ or recommend suboptimal decisions resulting in a mean decision loss of at least $\varepsilon$ along certain action-outcome trajectories.
\end{result}

For the construction of such true worlds, see
Appendix~\ref{sec:fdrn-classical-gap}. The decision separation is proved in Appendix~\ref{subsec:fdrn-model-selected-decisions}. Thus, replacing such treacherous environments with classical world models for benchmarking agents, agent training, or policy planning may therefore leave the resulting decisions vulnerable to a sophisticated adversary.

\textbf{Quantum world models.} 
Can quantum models alleviate these fundamental limitations? Recall that an $N$-dimensional classical memory supports $N$ physical states, which we labeled by integers $k = 1,2,\ldots N$. In contrast, a quantum $d$-level system can prepare
such states in quantum superposition, such that each pure state of the system can be described by any normalized superposition $\ket{\phi} = \sum^d_{k=1} c_k \ket{k}$ in a $d$-dimensional Hilbert space. The basis states $\ket{k}$ are perfectly distinguishable and play the same role as the $N$ distinct configurations of a classical
memory. A general probabilistic mixture in which state $\ket{\phi_j}$ is prepared with
probability $p_j$ is described by a positive, unit-trace density operator
$\rho=\sum_j p_j\ket{\phi_j}\bra{\phi_j}$, which generalizes a classical
probability distribution over physical states.

When the quantum memory receives an action $a$, its dynamics are completely
described by outcome-labelled quantum operations
$\{\mathcal E_y^{(a)}\}_{y\in\mathcal Y}$. Each
$\mathcal E_y^{(a)}$ is completely positive and trace nonincreasing, while $\sum_{y\in\mathcal Y}\mathcal E_y^{(a)}$ is trace preserving for every $a\in\mathcal A$. Thus, for each action $a$, the operations
$\{\mathcal E_y^{(a)}\}_{y\in\mathcal Y}$ form a quantum instrument
~\cite{monras2016,fanizza2024quantum}.

To make the correspondence with the classical transition elements explicit,
let $J_t=i$ denote that the incoming quantum memory is prepared in state $\ket{i}$, and let $J_{t+1}=j$ denote the result
of reading the outgoing memory in the same basis. The instrument elements then satisfy
\begin{align}\nonumber
\bra{j}
\mathcal E_y^{(a)}
(\ket{i} \!\bra{i})\ket{j}
&:=
\Pr\left(
Y_{t+1}=y,J_{t+1}=j
\mid
J_t=i,A_t=a
\right).
\end{align}
This is the joint probability that the model emits outcome $y$ and its memory
is read as $j$, conditioned on the incoming basis state $i$ and supplied
action $a$. It is the direct quantum counterpart of
$\bigl(D_y^{(a)}\bigr)_{ji}$. Unlike in the classical case, these
basis-resolved probabilities do not completely specify the dynamics, since
the operations $\mathcal E_y^{(a)}$ also describe the evolution of
superpositions and coherences.

Given an initial memory prepared in state $\rho^{\mathrm{in}}$ and a sequence of supplied
actions, the instrument operations completely determine the model's
conditional outcome probabilities $\Pr_{\mathsf Q}(y\mid\ph,a)$ for every
generated history $\ph$. A quantum world model is faithful when it is
operationally indistinguishable from the true world, such that
$\Pr_{\mathsf Q}(y\mid\ph,a)=\Pr_\star(y\mid\ph,a)$ for every reachable
history $\ph$, action $a$, and outcome $y$. Then a \emph{quantum world model} is specified by the tuple
$\mathsf Q =
\left(
\mathcal A,
\mathcal Y,
\mathcal H_Q,
\rho^{\mathrm{in}},
\boldsymbol{\mathcal E}
\right)$,
where
$\boldsymbol{\mathcal E}
=
\{\mathcal E_y^{(a)}\}_{(a,y)\in\mathcal A\times\mathcal Y}$
is the collection of instrument operations describing the memory dynamics
for every action--outcome pair $(a,y)$.

Physically, such a model operates through a sequence of quantum-instrument
interactions on its memory, in direct analogy with the stochastic
interactions of a classical world model. The number $d$ is its physical
memory dimension: the largest number of memory states that can be perfectly
distinguished in a single use.

\textbf{Quantum advantage for model-based decisions.} We now combine the preceding classical limitations with an exact quantum realization, obtaining a strict separation between classical and quantum world models for the same true world. We begin with its most direct operational consequence: the action ultimately deployed by the agent. A world model generates simulated futures from which the agent evaluates its candidate actions; the action assigned the largest value is then selected for deployment in the true world. The first quantum result shows that the physical realization of the model's memory can determine whether this procedure identifies a true-world-optimal action.

\begin{result}
\label{res:finite_quantum_decision_advantage}
For some fixed $\varepsilon>0$, there exists a family of classically $\varepsilon$-treacherous true worlds, each of which admits an exact quantum world model $\mathsf Q$ with a single qutrit of memory. This separation holds for all discount factors $\gamma\in[0,1)$, with the same $\varepsilon$.
\end{result}

\textbf{Quantum advantage for value estimation.}
The previous result concerns the action ultimately selected from the model's
action values. We now ask the more stringent question of how accurately the model reproduces the values themselves. The separation remains robust: no finite classical memory, however large, can eliminate the
dimension-independent gap in the action-values and optimal value, whereas the same qutrit world model reproduces them exactly. The advantage therefore cannot be overcome merely by allocating more finite classical storage; it arises from the physical encoding of the model's memory.

\begin{result}
\label{res:finite_quantum_planning_advantage}
For the same family of true worlds, there is a fixed $\varepsilon>0$
such that every finite-memory classical world model is
$\varepsilon$-deviant, whereas the corresponding single-qutrit quantum
world model $\mathsf Q$ is exact.
\end{result}

The last two results describe complementary consequences of the same
representational limitation. The first concerns the action selected using
the model, while the second concerns the numerical action values used to
compare the candidate actions. Increasing the size of a finite classical
memory may postpone the discrepancy to longer histories, but cannot remove
its asymptotic average. In contrast, the same three-dimensional quantum
memory reproduces the conditional dynamics, action values, and
model-based decisions exactly.

The Methods Section~\ref{sec:methods} gives an overview of the true-world dynamics, why every finite classical model fails to reproduce them, and how we construct the corresponding exact quantum world model. The true
world is defined formally in Appendix~\ref{subsec:fdrn-controlled-world}. The decision separation is
proved in Appendix
Subsection~\ref{subsec:fdrn-model-selected-decisions}, while the action-value
and optimal-value bounds are proved together in Appendix
Subsection~\ref{subsec:fdrn-planning-value-errors}. The exact qutrit
world model is constructed and verified in Appendix
Section~\ref{sec:fdrn-quantum-realization}.

\section{Discussion}

Here, we showed how a world model processes information fundamentally changes its capacity to align with the true world - both in estimating reward potential and identifying the actions needed to realize it. Every finite-dimensional classical model unavoidably assigns erroneous action-values along certain action--outcome trajectories, with a mean error bounded away from zero independently of memory dimension. Along these trajectories, each model either makes vanishing distinctions between candidate actions when the true world has a clear preference or selects a suboptimal action at least half the time. It therefore cannot reliably align model-optimal decisions with reality, and increasing its memory cannot remove this misalignment. A quantum model with a single qutrit, by contrast, reproduces every action-conditioned future after every reachable history, yielding exact values and true-world-optimal actions. To the best of our knowledge, this is the first work to establish such an irreducible quantum advantage in world-model alignment, separating a fixed finite quantum system from all finite-dimensional classical counterparts. The comparison applies at the level of recurrent physical memory in general controlled stochastic input--output machines, encompassing finite hidden-state and partially observable Markov decision process-style generative models~\cite{rabiner1989tutorial,KAELBLING199899,cassandra_94} and connecting to recurrent methods for partial observability and finite-precision architectures with finite-automaton characterizations~\cite{hausknecht2015deep,zhang2019learning,korsky2019computational}.

The broader consequence of this separation is that the physical medium used to store a world model's internal state cannot always be treated as merely an implementation detail. Modern research has made major advances in how such states are learned and used for prediction and control, from Dyna-style model-based reinforcement learning to recurrent latent simulators~\cite{sutton1991dyna,ha_2018,pmlr-v97-hafner19a,hafner2020dream,hafner2025mastering}, with recent extensions to language-based agents~\cite{yu2026reinforcement}. A complementary line learns predictive representations directly in latent space~\cite{lecun2022path,assran2023self,bardes2024revisiting,maes2026leworldmodel}. Despite their different architectures and training objectives, these approaches generally assume that the learned state is carried by a conventional classical memory. Our results show that this assumption can impose a fundamental limit: there exist true worlds for which no representation supported by finite
classical memory can preserve alignment, regardless of how it is parameterized or learned, whereas the single-qutrit model preserves it exactly. Our separation therefore identifies a physical resource that complements advances in neural representation learning rather than competing with them. Learning determines what internal state a world model constructs; its physical encoding can determine what that state can faithfully represent. A natural next step is to harness this alignment advantage within learned neural world models and understand what our results imply under realistic noise, finite-shot estimation, and training constraints.

We also emphasize that our results concern the modeling of entirely classical true worlds: although their internal memories are quantum, our quantum world models interact with agents entirely through classical random variables representing actions, observations, and rewards. The resulting quantum advantage is therefore directly relevant to settings in which conventional world models are used. Beyond this, our alignment advantage may provide another building block towards quantum-enhanced reinforcement learning. Quantizing an agent's internal memory and processing can yield memory and energetic advantages that grow without bound for suitable families of strategies~\cite{elliot_qagent,thompson_energeticadvantage}.
Meanwhile, coherent quantum access to a world model or environment can enable different trajectories to be explored in superposition, leading to speed-ups in learning~\cite{zeng2023quantum,dunjko_enhanced16}. It would be exciting to determine how these advantages can be combined, enabling potential simultaneous speedups in learning, world-model alignment, and reduced memory and energy costs during inference.

\section{Methods}
\label{sec:methods}

\subsection{The resettable FRDN clock}\label{subsec:methods-renewal-clock-world}

Our results use the same true world, and we now state how it is constructed. The true world is built around a resettable clock. After every reset, it
independently draws a hidden lifetime $L\in\mathbb N_0$. If allowed to
continue, the clock produces $L$ consecutive Ticks followed by a Break, at
which point it resets and draws a new lifetime. Viewed first as an uncontrolled process, each reset starts a new independent run. A run consists of $L$ Ticks followed by a Break and therefore lasts $L+1$ steps. The Breaks are renewal events, with independent and identically
distributed inter-renewal times $L+1$. The underlying clock is thus a discrete-time renewal process. 

The renewal process above describes how the clock behaves when each run is
simply allowed to unfold. To obtain the true world used in our
results, we retain this renewal law but introduce actions that determine
whether the current run advances, is tested, or is deliberately reset.
Whenever a reset occurs, the clock independently draws a fresh lifetime
from the same distribution. The action set is
$\mathcal A=\{W,M,P\}$, and the observation set is
$\mathcal O=\{T,B\}$, where $T$ denotes a Tick and $B$ the end of
the current run. At each step, the clock receives one action and returns an
observation together with an action-dependent reward.

Immediately after a reset, the clock has age $t=0$. Wait ($W$) allows the
current run to continue for one further step. If the clock returns a Tick
($T$), the run survives and its age increases from $t$ to $t+1$. If it
returns a Break ($B$), the run ends and the age resets to zero. Thus, the
clock's age is the number of consecutive Wait--Tick pairs since its most
recent reset.

Maintain ($M$) represents preventive maintenance. It incurs a fixed cost,
ends the current run, and resets the age to zero without testing whether the
run would have survived another step. Probe ($P$) performs precisely this
test: at age $t$, it has the same Tick--Break probabilities as Wait, but the
clock resets after either observation.

The three actions therefore offer distinct ways of interacting with the
clock. Maintain accepts a certain cost in exchange for an immediate reset.
Probe acts as a one-step wager on the remaining lifetime: a Tick can produce
a positive reward, whereas a Break can incur a cost. Probe may therefore be preferable when another Tick is sufficiently likely, while Maintain may be preferable when a Break is more likely. Wait has a different role because a
Tick leaves the current run active and exposes the clock at the next age.
The complete outcome kernel and reward assignment are specified in
Appendix~\ref{subsec:fdrn-controlled-world}. See Fig.~\ref{fig:FRDN} for an illustrative summary of this true world.

Let $\mathscr F_{\mathrm{tick}}=h_1h_2\cdots$ denote the trajectory in
which every action--outcome pair $h_t$ records a Wait action followed by a
Tick. The history after $t$ interactions is
$\ph_t=h_1h_2\cdots h_t$, with $\ph_0=\emptyset$, and therefore
contains $t$ consecutive Wait--Tick interactions following a reset. This
is the trajectory used in all four results. Every finite history $\ph_t$ is
reachable, and the decision at clock age $t$ is evaluated conditional on
this history. We denote such histories as $\ph_t^{\mathrm{tick}}$

Given $\ph_t^{\mathrm{tick}}$, the preceding Ticks imply that the hidden lifetime satisfies
$L\geq t$. If the next action is Wait or Probe, the conditional probability
of returning another Tick is therefore
\begin{align}
S(t)
&:=
\mathrm{Pr}_\star(T\mid\ph_t^{\mathrm{tick}},W)
=
\mathrm{Pr}_\star(T\mid\ph_t^{\mathrm{tick}},P)
\nonumber\\
&=
\mathrm{Pr}(L\geq t+1\mid L\geq t)
=
\frac{\Pr(L\geq t+1)}{\Pr(L\geq t)}.
\label{eq:fdrn-conditional-tick}
\end{align}
Thus, $S(t)$ is the age-dependent statistic that enters the evaluation of
the available actions at $\ph_t$. A world model need not store $S(t)$
explicitly, but its memory and readout must jointly reproduce this dependence
to predict the next observation and assign the actions their correct values.

This construction defines a family of true worlds, indexed by the
lifetime law and the rewards. We study a concrete member inspired by early examples of stochastic processes with finite-dimensional linear
representations from Fox, Rubin, Dharmadhikari and Nadkarni (FRDN)~\cite{dharmadhikari1963sufficient,fox1968,dharmadhikari1970}. Fix $\lambda\in(0,1/2]$ and
$\alpha\in\mathbb R$ with $\alpha/\pi\notin\mathbb Q$, and set
\begin{align}
\Pr(L=\ell)
&=
\lambda^\ell
\sin^2\!\left(\frac{\ell\alpha}{2}\right),
\qquad \ell\geq1, 
\label{eq:sinusoidal-renewal-law-main}
\end{align}
and $\Pr(L=0)
=
1-\sum_{\ell=1}^{\infty}\Pr(L=\ell)$. Substituting the above lifetime law~\eqref{eq:sinusoidal-renewal-law-main} into the conditional survival
probability~\eqref{eq:fdrn-conditional-tick} gives $S(t)
=
f(t\alpha)$ for $ t\geq1$, where $f$ is a continuous, nonconstant, and $2\pi$-periodic function. Thus,
the sinusoidal dependence of the lifetime distribution is inherited by the conditional probability of observing another Tick, producing an oscillatory dependence on the clock age. The closed form of $f$ and its derivation are given in Appendix~\ref{subsec:fdrn-controlled-world}.

The irrational phase increment of discrete ages $t$ in $f(t\alpha)$ prevents the age dependence from synchronizing with any finite cycle. Even if the clock ages are divided into any finite collection of regularly repeating classes, the probabilities
within each class continue to explore the full profile of $f$. Operationally,
the prediction-relevant information carried by the clock therefore never
reduces to a finite periodic label. 

The rewards translate this predictive structure into decisions. For the reward assignment used in our main results, the true optimal action continues to switch between Probe and Maintain as the clock ages: Probe is preferred when $S(t)$ lies above a fixed threshold, and Maintain when it lies below. This switching persists for every discount factor $\gamma\in[0,1)$. It is not essential that these particular two actions compete. Other reward choices for the same renewal clock make the optimal action switch between Wait and Maintain instead, as discussed in Appendix~\ref{subsec:fdrn-model-selected-decisions}. Thus, the fact that Wait is dominated in our main construction is a convenient way of isolating a clean decision boundary, rather than a structural property of the world.

\begin{figure}
    \centering
    \includegraphics[width=1\linewidth]{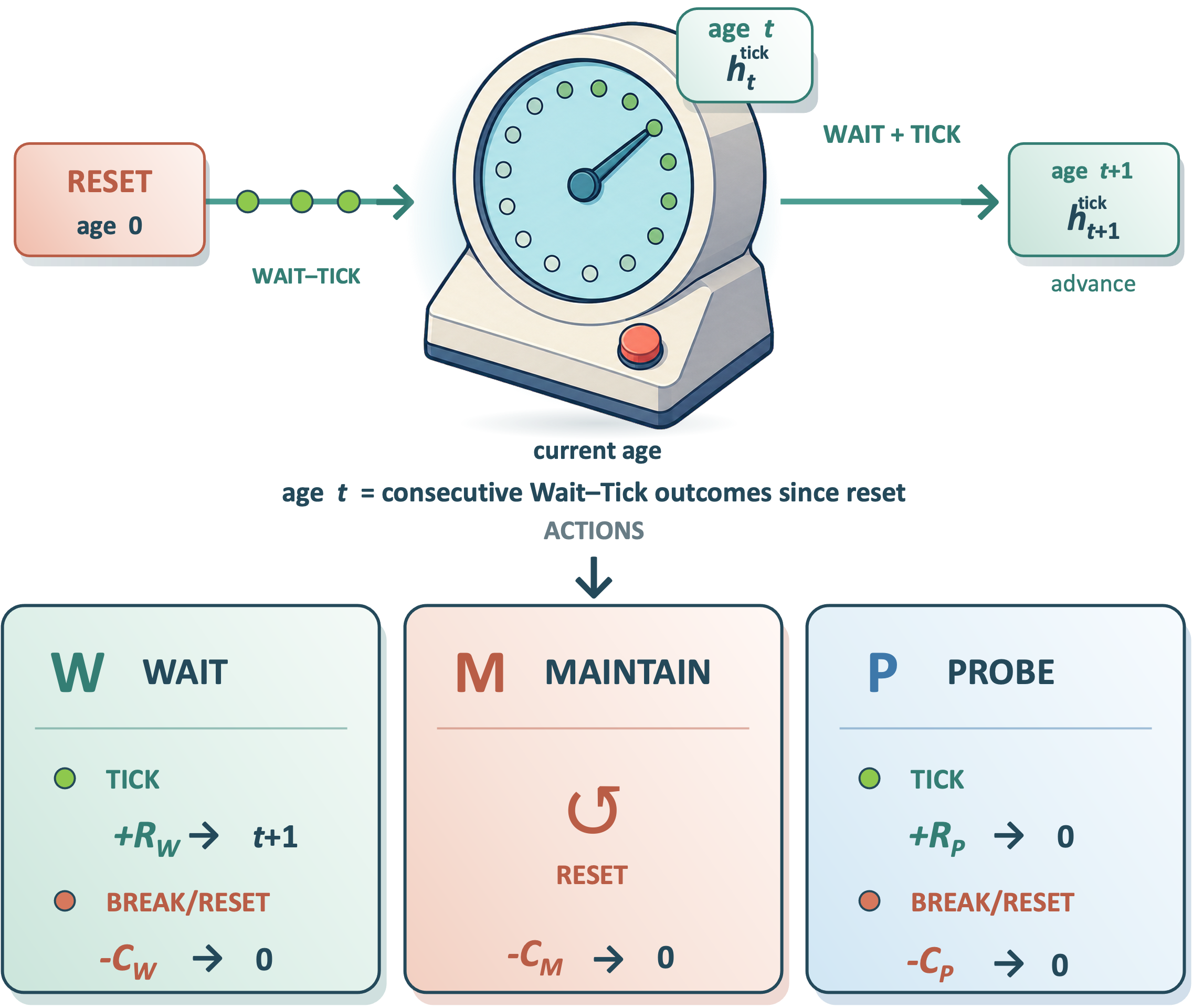}
    \caption{\textbf{The resettable FRDN clock world.}
    The displayed signs illustrate a representative payoff regime: Wait offers a reward $+R_W$ if the clock advances but risks a failure cost $-C_W$; Maintain accepts the known cost $-C_M$ of an immediate reset; and Probe also resets, with reward $+R_P$ or cost $-C_P$ determined by whether the clock would have advanced. The diagram thus contrasts risky continuation under Wait, a certain outcome under Maintain, and an outcome-dependent bet under Probe, creating a nontrivial decision problem.}
    \label{fig:FRDN}
\end{figure}

\subsection{Why finite classical memory fails and a qutrit succeeds}
\label{subsec:methods-classical-quantum-separation}

The relevant contrast is how the two types of memory respond to the
increasing clock age along the fixed reachable trajectory
$\mathscr F_{\mathrm{tick}}$. Every additional Wait--Tick interaction
applies the same update to a classical world model's memory.
Perron--Frobenius theory implies that repeated application of this update to
any finite classical memory eventually approaches a finite collection of
limiting behaviours
~\cite{seneta2006nonnegative,puterman2014markov}. Thus, for some finite
spacing $p$, the model's conditional predictions converge separately along
the interlaced sequences of ages $t=np+r$, with
$r\in\{0,\ldots,p-1\}$.

The true clock does not settle into such a finite pattern. As noted above, the conditional probability of one more Tick satisfies $S(t)=f(t\alpha)$, where $f$ is continuous, nonconstant, and $2\pi$-periodic. Since $\alpha/\pi\notin\mathbb Q$, for every finite spacing $p$, the phase increment $p\alpha$ remains irrational relative to $2\pi$. Weyl equidistribution~\cite{kuipers2012uniform} therefore implies that the phases $(np+r)\alpha$ explore the full circle within every interlaced sequence of ages. Consequently, $S(np+r)$ continues to sample
the same nonconstant profile of $f$ rather than converging to one limiting
probability.

The action-dependent rewards transfer this persistent age dependence to the
action-values and values. In our construction, Probe and Maintain both reset
the clock, so their future contributions are the same and their comparison
retains the oscillation of $S(t)$. The value obtained by selecting the
highest-valued action retains a corresponding age dependence. This remains
true for every discount factor $\gamma\in[0,1)$. Combining these facts with
Perron--Frobenius theory and Weyl equidistribution gives a positive lower
bound on the asymptotic mean action-value and value errors of every finite
classical world model. A common bound can be chosen independently of the
memory dimension, the particular classical model, and the discount factor.
Increasing the finite memory may postpone the discrepancy to later clock
ages, but cannot make either asymptotic mean error vanish. This is the
$\varepsilon$-deviancy of
Definition~\ref{def:world-model-deviancy}, established in
Result~\ref{res:world-model-deviancy}.

The same mismatch also affects which action is assigned the highest value.
The true preferred action switches between Probe and Maintain as the clock
ages. Every finite classical world model must therefore either become
arbitrarily indecisive at increasingly late histories where the true
preference remains separated by a fixed positive amount, or recommend a
truly suboptimal action at least half of the time along
$\mathscr F_{\mathrm{tick}}$ and incur a positive mean decision loss. This
is the $\varepsilon$-treachery of
Definition~\ref{def:classically-treacherous-world}, established in
Result~\ref{res:classically-treacherous-world}.

In contrast, the same clock is reproduced by a concrete world model whose
memory is a single qutrit. This model tracks the increasing clock age through
a phase in its quantum memory. After the first Tick following Wait, the
memory lies in the two-dimensional subspace spanned by
$\{|0\rangle,|1\rangle\}$. Let $\Pi_{01}:=
|0\rangle\langle0|
+
|1\rangle\langle1|$
be the projector onto this subspace, and let $X$ and $Z$ denote the
corresponding Pauli operators. The unitary phase rotation is
$U_\alpha=e^{i\alpha Z/2}$, while the complete instrument operation
associated with a Tick following Wait is
\begin{align}
\mathcal E_{\mathrm{tick}}^{(W)}(\rho)
&:=
\lambda
\Bigl(
e^{-rX}U_\alpha e^{rX}\Pi_{01}
\Bigr)
\rho
\Bigl(
e^{-rX}U_\alpha e^{rX}\Pi_{01}
\Bigr)^\dagger .
\label{eq:qutrit-clock-phase-update-main}
\end{align}
Here $U_\alpha$ advances the phase by $\alpha$. The surrounding factors
$e^{\pm rX}$ adjust the relative amplitudes so that the phase also determines
the Tick probability. Without them, the map
$\rho\mapsto\lambda U_\alpha\rho U_\alpha^\dagger$ would give the constant
Tick probability $\lambda$. Because these factors are inverses, adjacent
copies cancel when the operation is repeated, allowing the phase rotations
to accumulate. The parameter $r$ is chosen so that the resulting
probabilities reproduce~\eqref{eq:fdrn-conditional-tick}; its value is
derived in Appendix~\ref{subsec:fdrn-quantum-realization}. Since
$\alpha/\pi$ is irrational, the accumulated phase never closes into a finite
cycle. Different clock ages are instead associated with generally
nonorthogonal qutrit states along this phase orbit.

Along $\mathscr F_{\mathrm{tick}}$, this is precisely the operation applied
after every Wait--Tick interaction. The qutrit memory therefore follows the
same histories $\ph_t$ along which the finite classical models are assessed.
The remaining instrument operations are constructed in
Appendix~\ref{subsec:fdrn-quantum-realization}, and their agreement with the
true world after every reachable history is verified in
Appendix~\ref{sec:exact_qutrit_worldmodel}. The finite-classical
decision-loss and value-error bounds are proved in
Appendices~\ref{subsec:fdrn-model-selected-decisions}
and~\ref{subsec:fdrn-planning-value-errors}, respectively.

\subsection{Numerical illustration of the separation}

We complement the analytical results by fitting classical world models with
different memory dimensions $N$ to the Wait--Tick dynamics of the clock. For
each $N$, we optimize the initial probabilities of the $N$ internal memory
states and the probabilities of producing a Tick while moving between these
states after a Wait action. The models are fitted over the clock ages
$t=0,\ldots,1000$, with each age weighted by its probability of occurrence.
The complete fitting and evaluation procedure is given in
Appendix~\ref{app:FRDN-classical-numerical}. Since the optimization is
nonconvex, the curves below show the best models obtained numerically; the
separation itself follows from the analytical result above. We write
$\mathrm{Pr}_{\mathsf C}(\mathrm{Tick}\mid\ph,W)$ for the probability returned by a classical
model when it is initialized after history $\ph$.

We consider the representative clock parameters
$\lambda=0.4$ and $\alpha=\pi/\sqrt{2}$.
Figure~\ref{fig:classical_benchmark-1} compares the true world probability
$\mathrm{Pr}_\star(\mathrm{Tick}\mid \ph_t^{\mathrm{tick}},W)$ with the corresponding
probabilities predicted by the fitted classical models. The qutrit world
model reproduces the true curve exactly. Increasing $N$ allows a classical
model to follow the true probability over a larger initial range of clock
ages. At later ages, its prediction loses the nonrepeating dependence on $t$
and approaches the finitely many limiting predictions described above. For
the fitted models shown here, the limiting prediction is effectively a
single value.

An error in the predicted Tick probability is therefore directly an error in the return induced by the world model for Wait. Figure~\ref{fig:classical_benchmark-2} shows the uniform average of this error over $\overleftarrow{h}_1^{\mathrm{tick}},\ldots,\ph_t^{\mathrm{tick}}$. Since Wait is one of
the candidate actions, this quantity lower-bounds the largest action-value
error across the candidate actions. Increasing $N$ postpones the discrepancy
to longer histories, but Theorem~\ref{thm:fdrn_q_gap} (see Appendix~\ref{subsec:fdrn-planning-value-errors}) guarantees that no finite $N$ eliminates its asymptotic average.

\begin{figure}
	\centering
\centerline{\includegraphics[width=1\linewidth]{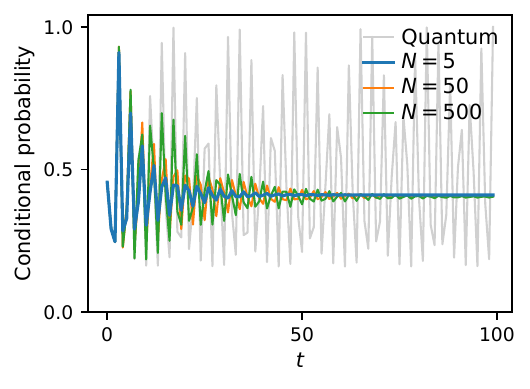}}
	\caption{\textbf{Conditional Tick probabilities.}
Probability of one additional Tick after $\ph_t$ for
$\lambda=0.4$ and $\alpha=\pi/\sqrt{2}$. The gray curve is the true-world
probability, which the qutrit world model reproduces exactly. The colored
curves show the predictions of the fitted classical models with memory
dimensions $N=5$ (blue), $N=50$ (orange), and $N=500$ (green). Increasing
$N$ allows the model to reproduce the nonrepeating dependence on the clock
age over a larger initial range. At later ages, the predictions approach the
finite limiting behavior imposed by finite classical memory.}

\label{fig:classical_benchmark-1}
\end{figure}

\begin{figure}
	\centering
\centerline{\includegraphics[width=1\linewidth]{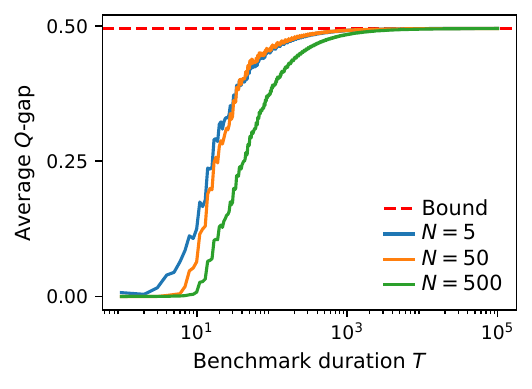}}
\caption{\textbf{Average Wait action-value error for one-step setting.} 
The one-step rewards are set to
$R_W=C_W=C_M=C_P=1$ and $R_P=-1$, with $\gamma=0$.
Consequently, the plotted quantity for each $T$ is
$\frac{1}{T}\sum_{t=1}^{T}
\left|
Q_{\star,W}^{\mathrm{opt}}(\ph_t)
-
Q_{M,W}^{\mathrm{opt}}(\ph_t)
\right|$.
%%The rewards are set as $C_M=C_W$, $C_P=C_M$, and $R_P=-C_M$ and the discount factor $\gamma = 0$. 
The dashed red line marks the dimension-independent lower bound on its
asymptotic average derived in
Appendix~\ref{subsec:gap_conditional_probability}. Blue, orange, and green correspond to
$N=5$, $N=50$, and $N=500$, respectively. Here the average Q-gap represents the errors of classical models, and lower curves mean better models with low error. Larger memories postpone the
discrepancy to later clock ages but do not eliminate its asymptotic average. The qutrit world model has zero error for every $T$; its curve coincides with
the horizontal axis and is omitted for clarity.}
 
\label{fig:classical_benchmark-2}
\end{figure}

\section*{Acknowledgements}

JL thanks Alessandro Luongo and Aditya Chidambaram for their help in revising the manuscript. This work is supported by the National Research Foundation of Singapore through the NRF Investigatorship Program (Award No. NRF-NRFI09-0010), the National Quantum Office, hosted in A*STAR, under its Centre for Quantum Technologies Funding Initiative (S24Q2d0009), the RIE 2025 AQAS projects S25Q9D001 and S25Q9D002, the Singapore Ministry of Education Tier 1 Grant RT4/23 and RG91/25 and the RIE25 Japan-Singapore Joint Call on Quantum (Project ID H25-MRO3490). HL gratefully acknowledges support from Schmidt Sciences, LLC through the Eric and Wendy Schmidt AI in Science Postdoctoral Fellowship.

\paragraph*{Use of generative AI.}
The authors derived the main technical results and produced main manuscript drafts. OpenAI's ChatGPT and Codex were used to assist with correctness checks and proofreading. All content was reviewed and verified by the authors.

\section*{Code Availability Statement} 

Code and data for reproducing the results in this work are available at \url{https://github.com/tuliplan/quantum-world-model-RL}.

\clearpage
\appendix

\noindent\textbf{Roadmap.}
Section~\ref{sec:world_models} develops the general framework for true worlds
and classical and quantum world models. In particular,
Theorem~\ref{thm:value_equivalence_exact_world_models} shows that exact world
models preserve optimal values, action values and optimal decisions.
Section~\ref{sec:fdrn-classical-gap} then constructs the true
world and the fixed reachable all-Tick trajectory
$\mathscr F_{\mathrm{tick}}$ used to establish all four results.
The proofs proceed as follows:
\begin{itemize}

\item \textbf{World-model deviancy.}
Theorems~\ref{thm:fdrn_q_gap} and~\ref{thm:fdrn_value_gap}, proved in
Subsection~\ref{subsec:fdrn-planning-value-errors}, show that every finite
classical world model has mean action-value and optimal-value errors bounded
below by a common positive constant. This proves
Result~\ref{res:world-model-deviancy}.

\item \textbf{Classically treacherous true worlds.}
Theorem~\ref{thm:fdrn-greedy-action-obstruction} and
Corollaries~\ref{cor:fdrn-loss-decision-resolution}
and~\ref{cor:fdrn-decision-regret-gap}, proved in
Subsection~\ref{subsec:fdrn-model-selected-decisions}, show that every finite
classical world model either loses decision resolution or incurs persistent
decision loss through suboptimal actions. This proves
Result~\ref{res:classically-treacherous-world}.

\item \textbf{Quantum advantage for model-based decisions.}
Section~\ref{sec:fdrn-quantum-realization} constructs a world model with a
single qutrit and proves in
Theorem~\ref{thm:qutrit_realization_exactplanning} that it reproduces the
FRDN true world exactly after every reachable history. Combining this
exact realization with the classical treachery result proves
Result~\ref{res:finite_quantum_decision_advantage}.

\item \textbf{Quantum advantage for value estimation.}
Combining the classical value-error bounds with the same exact qutrit
realization, whose action-value and optimal-value errors vanish after every
reachable history, proves
Result~\ref{res:finite_quantum_planning_advantage}.

\end{itemize}
Section~\ref{app:FRDN-classical-numerical} gives the numerical procedures
used for the illustrations.

\noindent\textbf{Notation.}
Throughout the appendix,
$\mathbb N=\{1,2,\ldots\}$,
$\mathbb N_0=\{0,1,2,\ldots\}$, and
$[N]=\{1,\ldots,N\}$.
For a finite set $\mathcal X$, let $\Delta(\mathcal X)$ denote the
probability distributions on $\mathcal X$; equivalently,
$\Delta_{N-1}=\Delta([N])$.
Finally, we write
\begin{align}
(x)_+
&:=
\max\{x,0\},
&
\mathbbm 1\{E\}
&:=
\begin{cases}
1, & \text{if $E$ holds},\\
0, & \text{otherwise},
\end{cases}
\end{align}
for the positive part of $x\in\mathbb R$ and the indicator of an event or
condition $E$, respectively.

\section{World models and value functions}
\label{sec:world_models}

For the definitions below, use the following fixed notation. Let
$\mathcal A$ be a finite action set, let
$\mathcal Y\subseteq\mathcal O\times\mathbb R$ be a finite alphabet of
observation--reward outcomes. For
$y=(o_y,r_y)\in\mathcal Y$, write $o_y$ for its observation component and
$r_y$ for its reward component. We assume that rewards are bounded:
$|r_y|\leq R_{\max}$ for every $y\in\mathcal Y$.

For $t\geq0$, define
\begin{align}\label{eq:history_set}
    \mathscr{H}_t
    &:=
    (\mathcal A\times\mathcal Y)^t,
    &
    \mathscr H
    &:=
    \bigcup_{t\geq0}\mathscr{H}_t .
\end{align}
A length-$t$ history prefix can be written as
\begin{align}
\ph_t
&=
h_1h_2\cdots h_t
=
(a_0,y_1,\ldots,a_{t-1},y_t),\\
h_k
&:=
(a_{k-1},y_k), \quad
|\ph_t|
=
t,
\end{align}
with $\ph_0=\emptyset$. When no time index is needed, we write
$h\in\mathscr H$ for a generic finite history.

For $h\in\mathscr H_s$ and $g\in\mathscr{H}_t$, write
$h\mathbin{\smallfrown}g\in\mathscr H_{s+t}$ for their concatenation.
For a one-step extension by $a\in\mathcal A$ and $y\in\mathcal Y$, we
continue to use the shorter notation $hay$.

A deterministic history string carries no true-world or world-model label.
Its generating source is instead indicated by the probability law or by the
corresponding random variable. Uppercase letters denote random variables and
lowercase letters their realizations.

\subsection{World models}

We first define the true world---the environment in reinforcement-learning
terminology---and the world model used as a
conditional simulator.

\begin{definition}\label{def:true-controlled-world}
A \emph{true world} is specified by the tuple
$\mathsf W_\star=(\mathcal A,\mathcal Y,\mathscr H,\mathrm{Pr}_\star)$, where
\begin{enumerate}[label=(\roman*)]
    \item $\mathcal A$ is the finite set of actions that the true world can receive;
    
    \item $\mathcal Y\subseteq\mathcal O\times\mathbb R$ is the finite set of
    possible observation--reward outcomes $y=(o,r)$;
    
    \item $\mathscr H$ is the common finite action--outcome history space defined above~\eqref{eq:history_set};
    
    \item
    $\mathrm{Pr}_\star:\mathscr H\times\mathcal A\to\Delta(\mathcal Y)$ is the
    conditional outcome law, so that
    $\mathrm{Pr}_\star(y\mid h,a)$ is the probability of outcome $y$ after history
    $h$ when the agent takes action $a$.
\end{enumerate}
Its reachable history tree $\mathscr H_\star\subseteq\mathscr H$ is the
smallest set containing $\emptyset$ such that $hay\in\mathscr H_\star$
whenever $h\in\mathscr H_\star$, $a\in\mathcal A$, $y\in\mathcal Y$, and
$\mathrm{Pr}_\star(y\mid h,a)>0$.
\end{definition}

\begin{definition}[World model]
\label{def:world_model}
A
\emph{world model} is specified by the tuple
\begin{align}
\mathsf M
=
\left(
\mathcal A,
\mathcal Y,
\mathcal M,
m^{\mathrm{in}},
\mathrm{Pr}_M,
T_M
\right),
\end{align}
where $\mathcal M$ is the state space of the model's recurrent memory,
$m^{\mathrm{in}}\in\mathcal M$ is its initial memory state, $\mathcal{A}$ is the set of actions, $\mathcal Y\subseteq\mathcal O\times\mathbb R$ is the set of observations-rewards,
\begin{align}
\mathrm{Pr}_M:
\mathcal M\times\mathcal A
\longrightarrow
\Delta(\mathcal Y)
\end{align}
is its conditional outcome law, and
\begin{align}
T_M
=
\left\{
T_{M,y}^{a}:\mathcal M\longrightarrow\mathcal M
\right\}_{(a,y)\in\mathcal A\times\mathcal Y}
\end{align}
is its family of outcome-conditioned memory updates. Thus, from memory state
$m$ and supplied action $a$, the model generates
$Y\sim \mathrm{Pr}_M(\cdot\mid m,a)$ and updates its memory to
$T_{M,Y}^{a}(m)$. The value of $T_{M,y}^{a}(m)$ on a branch for which
$\mathrm{Pr}_M(y\mid m,a)=0$ may be chosen arbitrarily.
\end{definition}

This abstract representation isolates the recurrent input--output interface
needed to define value functions. At this level, $\mathrm{Pr}_M$ and $T_M$ are
operational maps and need not be specified independently in a physical
realization. For the classical and quantum world models considered below,
they are induced, respectively, by the transition matrices $D_y^{(a)}$ and
the instrument operations $\mathcal E_y^{(a)}$, which directly describe the
physical dynamics of the model's memory.

An \emph{encoder} for $\mathsf M$ is a separate initialization interface
\begin{align}
E_M:
\mathscr H
\longrightarrow
\mathcal M,
\qquad
E_M(\emptyset)
=
m^{\mathrm{in}}.
\end{align}
For a world model query following history $h$, the encoder supplies the memory
state $E_M(h)$ from which the rollout is initialized. Once initialized, the rollout evolves
entirely through $\mathrm{Pr}_M$ and $T_M$.

The encoder belongs to the agent and provides the interface through which the
world model is queried. For a query history $h$, it initializes the model's
recurrent memory in the state $E_M(h)$, after which the model generates the
rollout through its internal dynamics. The memory dimension of a world model
refers exclusively to this recurrent physical memory: it is the largest
number of memory states that can be perfectly distinguished without error in
a single use~\cite{gu2012quantum,Gallego_2010,brunner2014dimension,ghafari2019}.

A world model is consistent and exact with respect to a true world if $\mathrm{Pr}_M$ reproduces the outcome probability of the true world and both
descriptions of the outcome-conditioned memory updates $T_M$ and the encoder $E_M$ remain synchronized on reachable true-world branches. We formalize this notion below.

\begin{definition}[Predictive consistency and exactness]
\label{def:predictive_consistency}
Let $\mathsf M$ be a world model and let $E_M$ be an encoder for it. The pair
$(\mathsf M,E_M)$ is \emph{predictively consistent} with
$\mathsf W_\star$ on $\mathscr G\subseteq\mathscr H$ if, for every
$h\in\mathscr G$, $a\in\mathcal A$, and $y\in\mathcal Y$,
\begin{align}
\mathrm{Pr}_M(y\mid E_M(h),a)
=
\mathrm{Pr}_\star(y\mid h,a).
\label{eq:predictive-consistency-probability}
\end{align}
The pair is \emph{exact} on $\mathscr G$ if it is predictively consistent
there and, whenever $\mathrm{Pr}_\star(y\mid h,a)>0$,
\begin{align}
T_{M,y}^{a}\!\left(E_M(h)\right)
=
E_M(hay).
\label{eq:predictive-consistency-update}
\end{align}
It is an \emph{exact realization} of the true world if it is exact on the
reachable history tree $\mathscr H_\star$. When the encoder is fixed by
context, we use the shorter statement that $\mathsf M$ is exact relative to
$E_M$.
\end{definition}

\subsection{Value functions in world models}
\label{sec:values_world_model}

Throughout this subsection, fix an abstract recurrent world model
$\mathsf M$ together with an encoder $E_M$. The history-indexed rollout laws
and value functions therefore depend on the pair $(\mathsf M,E_M)$. For
notational economy, we suppress the encoder dependence from their subscripts.

The world model generates simulated outcomes, whereas the agent converts
the resulting reward sequences into scores or value functions in order to evaluate its current policy. Now we formally define the concept of policy.

\begin{definition}
\label{def:history-policy}
A policy is a stochastic kernel from the
complete classical history space to the action set, equivalently a map
\begin{align}
    \pi:
    \mathscr H
    \longrightarrow
    \Delta(\mathcal A),
    \qquad
    g
    \longmapsto
    \pi(\cdot\mid g).
\end{align}
Thus, for every $g\in\mathscr H$,
\begin{align}
    \pi(a\mid g)
    &\geq
    0,
    &
    \sum_{a\in\mathcal A}\pi(a\mid g)
    &=
    1.
\end{align}
Let
\begin{align}
    \Pi_{\mathscr H}
    :=
    \left\{
        \pi:
        \mathscr H\longrightarrow\Delta(\mathcal A)
    \right\}
    \label{eq:history-policy-class}
\end{align}
denote the class of all such policies.
\end{definition}

The same policy $\pi\in\Pi_{\mathscr H}$ is used in the true-world and
world-model continuation laws below. During simulation, it is conditioned on
the complete query history extended by the generated continuation.

Fix a query history $h\in\mathscr H$ and a policy
$\pi\in\Pi_{\mathscr H}$. A model-generated rollout begins from
\begin{align}
    H_0^M
    &:=
    h,
    &
    M_0
    &:=
    E_M(h).
\end{align}
At rollout depth $k\geq0$,
\begin{align}
    A_k^M
    &\sim
    \pi(\cdot\mid H_k^M),
    \label{eq:first_recurrent}
    \\
    Y_{k+1}^M
    =
    (O_{k+1}^M,R_{k+1}^M)
    &\sim
    \mathrm{Pr}_M(\cdot\mid M_k,A_k^M),
    \\
    H_{k+1}^M
    &:=
    H_k^M A_k^M Y_{k+1}^M,
    \\
    M_{k+1}
    &:=
    T_{M,Y_{k+1}^M}^{A_k^M}(M_k).
    \label{eq:history-policy-model-process}
\end{align}
Thus, the policy conditions on the accumulated classical history $H_k^M$,
whereas the world model generates its next outcome using only its recurrent
memory $M_k$.

The conditional true-world benchmark begins from the same query history,
\begin{align}
    H_0^\star
    &:=
    h,
\end{align}
and evolves according to
\begin{align}
    A_k^\star
    &\sim
    \pi(\cdot\mid H_k^\star),
    \\
    Y_{k+1}^\star
    =
    (O_{k+1}^\star,R_{k+1}^\star)
    &\sim
    \mathrm{Pr}_\star(\cdot\mid H_k^\star,A_k^\star),
    \\
    H_{k+1}^\star
    &:=
    H_k^\star A_k^\star Y_{k+1}^\star .
    \label{eq:history-policy-true-process}
\end{align}

More explicitly, let
\begin{align}
    \zeta_n
    =
    (a_0,y_1,\ldots,a_{n-1},y_n)
\end{align}
be a deterministic continuation of length $n\in\mathbb{N}_0$ for a history $h$. For $1\leq k\leq n$, let
\begin{align}
    \zeta_k
    &:=
    (a_0,y_1,\ldots,a_{k-1},y_k),
    &
    \zeta_0
    &:=
    \emptyset
\end{align}
denote its length-$k$ prefix. The complete history at rollout depth $k$ is
then $h\mathbin{\smallfrown}\zeta_k$.

Define the model-memory state associated with this continuation recursively
\begin{align}
    m_0(h,\zeta_0)
    &:=
    E_M(h),
    \nonumber\\
    m_{k+1}(h,\zeta_{k+1})
    &:=
    T_{M,y_{k+1}}^{a_k}
    \bigl(m_k(h,\zeta_k)\bigr),
    \label{eq:recurrent_histories_memories}
\end{align}
for $0\leq k<n.$ The probability of $\zeta_n$ under the world model rollout is
\begin{align}
    \pr_{M,\pi}^{h}(\zeta_n)
    =
    \prod_{k=0}^{n-1}
    \pi\!\left(
        a_k
        \mid
        h\mathbin{\smallfrown}\zeta_k
    \right)
    \mathrm{Pr}_M\!\left(
        y_{k+1}
        \mid
        m_k(h,\zeta_k),a_k
    \right).
    \label{eq:history-policy-model-trajectory-law}
\end{align}
The corresponding true-world probability continuation law after $h$ is
\begin{align}
    \pr_{\star,\pi}^{h}(\zeta_n)
    =
    \prod_{k=0}^{n-1}
    \pi\!\left(
        a_k
        \mid
        h\mathbin{\smallfrown}\zeta_k
    \right)
    \mathrm{Pr}_\star\!\left(
        y_{k+1}
        \mid
        h\mathbin{\smallfrown}\zeta_k,a_k
    \right).
    \label{eq:history-policy-true-trajectory-law}
\end{align}
The continuation laws above determine the corresponding rollout
expectations, denoted by
$\mathbb E_{M,\pi}^{h}$ and $\mathbb E_{\star,\pi}^{h}$.

For $X\in\{M,\star\}$ and $a\in\mathcal A$, we write
$\mathbb E_{X,\pi}^{h,a}$ for expectation under the corresponding rollout
initialized at $h$, with only the first action-selection rule replaced by
\begin{align}
    A_0=a
    \qquad\text{almost surely}.
\end{align}
All outcome-generation and memory-update rules remain unchanged, and the
policy $\pi$ supplies the actions from rollout depth $k=1$ onward. In the fixed-first-action laws, $\pi(\cdot\mid h)$ is not used: the policy
controls only the actions at rollout depths $k\geq1$.

Now we define the policy values for the true world and world model. Unless stated otherwise, all quantities are defined for a fixed
discount factor $\gamma\in[0,1)$.

\begin{definition}[Policy values]
\label{def:model_policy_value_functions}
Let $\pi\in\Pi_{\mathscr H}$, $h\in\mathscr H$, $\gamma \in [0,1)$ and
$a\in\mathcal A$. For $X\in\{M,\star\}$, define
\begin{align}
    V_X^\pi(h)
    &:=
    \mathbb E_{X,\pi}^{h}
    \left[
        \sum_{k=0}^{\infty}\gamma^kR_{k+1}
    \right],
    \\
    Q_X^\pi(h,a)
    &:=
    \mathbb E_{X,\pi}^{h,a}
    \left[
        \sum_{k=0}^{\infty}\gamma^kR_{k+1}
    \right].
\end{align}
Thus, $V_X^\pi(h)$ follows $\pi$ from the first action, whereas
$Q_X^\pi(h,a)$ takes $a$ first and follows $\pi$ thereafter.
\end{definition}

For later use, define the expected one-step rewards of the model in memory
state $m$ and of the true world after history $h$ by
\begin{align}
    r_M(m,a)
    &:=
    \sum_{y\in\mathcal Y}
    \mathrm{Pr}_M(y\mid m,a)r_y,
    \\
    r_\star(h,a)
    &:=
    \sum_{y\in\mathcal Y}
    \mathrm{Pr}_\star(y\mid h,a)r_y.
    \label{eq:one-step-expected-rewards}
\end{align}
When $\gamma=0$, the action-values reduce to these expected one-step
rewards.

\subsubsection{Optimal values and Bellman equations}
\label{subsec:optimal-values-bellman}

The material in this subsection is standard in reinforcement learning and
dynamic programming. We briefly recall the definitions and results needed
below, including Bellman recursions, contraction and fixed-point
characterizations, and their connection with policy optimization. For
complete treatments and rigorous proofs, see~\cite{Sutton1998,puterman2014markov,
bertsekas2012dynamic,szepesvari2010algorithms}.

The true world is Markov when the complete history is used as its state,
whereas a rollout of the world model is Markov in its recurrent memory state.
Accordingly, for bounded functions
$u:\mathscr H\to\mathbb R$ and $v:\mathcal M\to\mathbb R$, define
\begin{align}
    (\mathcal T_\star u)(h)
    :=
    \max_{a\in\mathcal A}
    \Bigg[
        &r_\star(h,a)
        \nonumber\\
        &+
        \gamma
        \sum_{y\in\mathcal Y}
        \mathrm{Pr}_\star(y\mid h,a)u(hay)
    \Bigg],
    \label{eq:true-optimal-bellman-operator}
\end{align}
and
\begin{align}
    (\mathcal T_Mv)(m)
    :=
    \max_{a\in\mathcal A}
    \Bigg[
        &r_M(m,a)
        \nonumber\\
        &+
        \gamma
        \sum_{y\in\mathcal Y}
        \mathrm{Pr}_M(y\mid m,a)
        v(T_{M,y}^a(m))
    \Bigg].
    \label{eq:optimal-bellman-operator}
\end{align}
Since rewards are bounded and $\gamma<1$, both operators are
$\gamma$-contractions in the supremum norm. They therefore have unique bounded
fixed points. Denote these fixed points
temporarily by $U_\star$ and $u_M$,
\begin{align}
    U_\star
    &=
    \mathcal T_\star U_\star,
    &
    u_M
    &=
    \mathcal T_Mu_M.
    \label{eq:preoptimal-bellman-fixed-points}
\end{align}

We next identify these fixed points using the policy values defined above.
Let
\begin{align}
    G_n
    :=
    \sum_{k=0}^{n-1}\gamma^kR_{k+1}
\end{align}
be the $n$-step return and $G_\infty:=\sum_{k=0}^{\infty}\gamma^kR_{k+1}$. Backward induction gives
\begin{align}
    (\mathcal T_\star^n0)(h)
    &=
    \sup_{\pi\in\Pi_{\mathscr H}}
    \mathbb E_{\star,\pi}^{h}[G_n],
    \label{eq:true-finite-horizon-value-iteration}
    \\
    (\mathcal T_M^n0)(E_M(h))
    &=
    \sup_{\pi\in\Pi_{\mathscr H}}
    \mathbb E_{M,\pi}^{h}[G_n],
    \label{eq:model-finite-horizon-value-iteration}
\end{align}
where $0$ denotes the zero function. Indeed, conditioning on the first action
and outcome produces the Bellman recursions: the first action is optimized,
and after each outcome the later actions may be chosen separately on the
resulting history branch. Conversely, these choices define a single history
policy because each generated outcome is included in the history observed by
the policy.

If $|R_{k+1}|\leq R_{\max}$ for every $k\geq0$,
\begin{align}
    |G_\infty-G_n|
    \leq
    \frac{R_{\max}\gamma^n}{1-\gamma}
\end{align}
uniformly over policies and initial conditions. Note that since $\gamma\in[0,1)$ we have $G_\infty$ bounded. Hence the finite-horizon
suprema converge to the corresponding infinite-horizon suprema. Since value
iteration also converges to the unique fixed points,
\begin{align}
    U_\star(h)
    &=
    \sup_{\pi\in\Pi_{\mathscr H}} V_\star^\pi(h),
    \label{eq:true-fixed-point-policy-supremum}
    \\
    u_M(E_M(h))
    &=
    \sup_{\pi\in\Pi_{\mathscr H}} V_M^\pi(h).
    \label{eq:model-fixed-point-policy-supremum}
\end{align}
The same argument with the first action fixed gives the corresponding
action-value identities. We can therefore introduce the optimal quantities
without ambiguity.

\begin{definition}[Optimal true-world value functions]
\label{def:true-optimal-planning-values}
For $h\in\mathscr H$ and $a\in\mathcal A$, define
\begin{align}
    V_\star^{\mathrm{opt}}(h)
    &:=
    \sup_{\pi\in\Pi_{\mathscr H}}
    V_\star^\pi(h),
    \label{eq:true-optimal-value}
    \\
    Q_\star^{\mathrm{opt}}(h,a)
    &:=
    \sup_{\pi\in\Pi_{\mathscr H}}
    Q_\star^\pi(h,a).
    \label{eq:true-optimal-action-value}
\end{align}
\end{definition}

The fixed-point identification established above gives
\begin{align}
    V_\star^{\mathrm{opt}}(h)
    =
    U_\star(h).
\end{align}
Consequently, the true-world optimal action-value satisfies
\begin{align}
    Q_\star^{\mathrm{opt}}(h,a)
    =
    r_\star(h,a)
    +
    \gamma
    \sum_{y\in\mathcal Y}
    \mathrm{Pr}_\star(y\mid h,a)
    V_\star^{\mathrm{opt}}(hay),
    \label{eq:true-optimal-bellman-relations}
\end{align}
and
\begin{align}
    V_\star^{\mathrm{opt}}(h)
    =
    \max_{a\in\mathcal A}
    Q_\star^{\mathrm{opt}}(h,a).
    \label{eq:true-optimal-value-action-value-relation}
\end{align}

Now we define the optimal values in the world model.

\begin{definition}[Optimal world-model value functions]
\label{def:model-optimal-planning-values}
For $h\in\mathscr H$ and $a\in\mathcal A$, define
\begin{align}
    V_M^{\mathrm{opt}}(h)
    &:=
    \sup_{\pi\in\Pi_{\mathscr H}}
    V_M^\pi(h),
    \label{eq:model-optimal-value}
    \\
    Q_M^{\mathrm{opt}}(h,a)
    &:=
    \sup_{\pi\in\Pi_{\mathscr H}}
    Q_M^\pi(h,a).
    \label{eq:model-optimal-action-value}
\end{align}
\end{definition}

The world-model Bellman operator acts on memory space. We therefore
write its fixed point and associated action-value function as
\begin{align}
    v_M^{\mathrm{opt}}(m)
    &:=
    u_M(m),
    \\
    q_M^{\mathrm{opt}}(m,a)
    &:=
    r_M(m,a)
    +
    \gamma
    \sum_{y\in\mathcal Y}
    \mathrm{Pr}_M(y\mid m,a)
    v_M^{\mathrm{opt}}
    \bigl(T_{M,y}^a(m)\bigr).
    \label{eq:bellman_action_value_optimal}
\end{align}
They satisfy
\begin{align}
    v_M^{\mathrm{opt}}(m)
    =
    \max_{a\in\mathcal A}
    q_M^{\mathrm{opt}}(m,a).
    \label{eq:model-optimal-value-relation}
\end{align}

The history-indexed value functions are related to these memory-state
functions through the encoder
\begin{align}
    V_M^{\mathrm{opt}}(h)
    &=
    v_M^{\mathrm{opt}}(E_M(h)),
    \label{eq:model-optimal-value-bridge}
    \\
    Q_M^{\mathrm{opt}}(h,a)
    &=
    q_M^{\mathrm{opt}}(E_M(h),a).
    \label{eq:model-optimal-q-bridge}
\end{align}
Hence
\begin{align}
    V_M^{\mathrm{opt}}(h)
    =
    \max_{a\in\mathcal A}
    Q_M^{\mathrm{opt}}(h,a).
\end{align}

All suprema above are pointwise in the displayed history $h$ and range
over $\Pi_{\mathscr H}$. For a value $V^{\mathrm{opt}}(h)$, the policy
selects the first and all subsequent actions. For an action-value
$Q^{\mathrm{opt}}(h,a)$, the displayed first action $a$ is fixed and the
policy selects the actions from rollout depth $k=1$ onward. Thus Bellman optimality defines a function of the current state. For \emph{the true world, the state is
the complete history}. For the model, the optimal history-indexed quantities
depend on the supplied history through the memory state $E_M(h)$.

\subsection{Exact models preserve value functions}

The next theorem shows that exactness preserves the complete trajectory law
under every history policy, and therefore preserves both policy-evaluation and
optimal value functions.

\begin{theorem}
\label{thm:value_equivalence_exact_world_models}
Let $\mathsf M$ be a world model and let $E_M$ be an encoder such that
$(\mathsf M,E_M)$ is an exact realization of $\mathsf W_\star$ in the sense
of Definition~\ref{def:predictive_consistency}. Then, for every
$\pi\in\Pi_{\mathscr H}$, every reachable query history
$h\in\mathscr H_\star$, every $n\geq0$, and every continuation $\zeta_n$,
\begin{align}
    \pr_{M,\pi}^{h}(\zeta_n)
    =
    \pr_{\star,\pi}^{h}(\zeta_n).
    \label{eq:exact-model-finite-trajectory-equality}
\end{align}
The analogous equality holds when the first action is forced to $a\in\mathcal A$ in both rollouts.
Consequently,
\begin{align}
    V_M^\pi(h)
    &=
    V_\star^\pi(h),
    &
    Q_M^\pi(h,a)
    &=
    Q_\star^\pi(h,a).
\end{align}
Moreover,
\begin{align}
    V_M^{\mathrm{opt}}(h)
    &=
    v_M^{\mathrm{opt}}(E_M(h))
    =
    V_\star^{\mathrm{opt}}(h),
    \\
    Q_M^{\mathrm{opt}}(h,a)
    &=
    q_M^{\mathrm{opt}}(E_M(h),a)
    =
    Q_\star^{\mathrm{opt}}(h,a).
\end{align}
Consequently, the sets of optimal actions also agree
\begin{align}
    \argmax_{a\in\mathcal A}
    Q_M^{\mathrm{opt}}(h,a)
    =
    \argmax_{a\in\mathcal A}
    Q_\star^{\mathrm{opt}}(h,a).
\end{align}
\end{theorem}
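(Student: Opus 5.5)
The argument proceeds by induction on the continuation length, followed by passage to values and then to optimal quantities.

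\textbf{Step 1: trajectory equality by induction.} The key invariant is the following. Suppose $h\in\mathscr H_\star$ and the prefix $\zeta_k$ has positive true-world probability $\pr_{\star,\pi}^{h}(\zeta_k)>0$. Then $h\mathbin{\smallfrown}\zeta_k\in\mathscr H_\star$ and
\begin{align}
m_k(h,\zeta_k)=E_M(h\mathbin{\smallfrown}\zeta_k).
\end{align}
The base case $k=0$ is the definition $m_0=E_M(h)$.

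For the inductive step, predictive consistency~\eqref{eq:predictive-consistency-probability} at the reachable history $h\mathbin{\smallfrown}\zeta_k$ makes the $(k+1)$-th factors of~\eqref{eq:history-policy-model-trajectory-law} and~\eqref{eq:history-policy-true-trajectory-law} coincide. The policy factors are identical by construction. If the true outcome probability is positive, then $h\mathbin{\smallfrown}\zeta_{k+1}\in\mathscr H_\star$ by the definition of the reachable tree. The update condition~\eqref{eq:predictive-consistency-update} then gives
\begin{align}
m_{k+1}=T^{a_k}_{M,y_{k+1}}\bigl(E_M(h\mathbin{\smallfrown}\zeta_k)\bigr)=E_M(h\mathbin{\smallfrown}\zeta_{k+1}).
\end{align}

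The only care needed is the zero-probability case. If some factor vanishes (a policy factor, or equal outcome factors), both products are zero from that point on, so the equality holds regardless of what the memory looks like afterwards. This is also why exactness is only required on reachable branches. The forced-first-action version is identical, with the first policy factor replaced by $\mathbbm 1\{a_0=a\}$ in both laws.

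\textbf{Step 2: policy values.} With equality of all finite-dimensional trajectory laws, the expectations of the truncated returns $G_n$ agree. Both return series are dominated by $R_{\max}/(1-\gamma)$, and the tail bound $|G_\infty-G_n|\le R_{\max}\gamma^n/(1-\gamma)$ holds uniformly. Hence
\begin{align}
V_M^\pi(h)=V_\star^\pi(h),\qquad Q_M^\pi(h,a)=Q_\star^\pi(h,a).
\end{align}
Equivalently, one can note that the laws on infinite paths agree, since they agree on cylinder sets.

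\textbf{Step 3: optimal quantities.} Take the supremum over $\pi\in\Pi_{\mathscr H}$ on both sides, using Definitions~\ref{def:true-optimal-planning-values} and~\ref{def:model-optimal-planning-values}; this gives $V_M^{\mathrm{opt}}(h)=V_\star^{\mathrm{opt}}(h)$ and the same for $Q$. The identifications with $v_M^{\mathrm{opt}}(E_M(h))$ and $q_M^{\mathrm{opt}}(E_M(h),a)$ come from the bridge identities~\eqref{eq:model-optimal-value-bridge}--\eqref{eq:model-optimal-q-bridge}, which rest on the fixed-point characterization~\eqref{eq:model-fixed-point-policy-supremum} and its first-action analogue.

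One subtlety needs checking: the history-policy supremum in the model must equal the memory-based Bellman fixed point. This was asserted earlier via backward induction, and I would take it as given. Finally, equal $Q$-functions on $\mathcal A$ have equal argmax sets, which gives the last claim.

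\textbf{Main obstacle.} The real content is the bookkeeping in Step 1. Exactness constrains the memory only along reachable branches, so the induction invariant must be conditioned on positive prefix probability. One must then argue separately that unreachable continuations contribute zero mass in both laws.
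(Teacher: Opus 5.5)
Your proposal is correct and follows essentially the same route as the paper. Both arguments use an induction on continuation length that carries equality of prefix probabilities together with the synchronization $m_k(h,\zeta_k)=E_M(h\mathbin{\smallfrown}\zeta_k)$ on positive-probability prefixes, then take suprema and invoke the bridge identities. Your Step~2, which passes from finite-dimensional trajectory laws to values via the uniform tail bound on $|G_\infty-G_n|$, simply makes explicit a step the paper states without detail.
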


\begin{proof}
Fix $h\in\mathscr H_\star$ and
$\pi\in\Pi_{\mathscr H}$. For a continuation prefix $\zeta_k$, set
\begin{align}
    g_k
    &:=
    h\mathbin{\smallfrown}\zeta_k,
    &
    \mu_k
    &:=
    m_k(h,\zeta_k).
\end{align}
We prove simultaneously that the model and true-world probabilities of every
prefix agree and that every prefix having positive common probability
satisfies
\begin{align}
    g_k
    &\in
    \mathscr H_\star,
    &
    \mu_k
    &=
    E_M(g_k).
    \label{eq:exact-model-memory-synchronization}
\end{align}

At $k=0$, both laws assign probability one to the empty continuation,
$g_0=h\in\mathscr H_\star$, and $\mu_0=E_M(h)$. Suppose the claims hold for
$\zeta_k$. If its common probability is zero, every extension of that prefix
has probability zero under both laws. Otherwise, exactness of the pair $(\mathsf M,E_M)$ gives
\begin{align}
    \mathrm{Pr}_M(y\mid\mu_k,a)
    =
    \mathrm{Pr}_M(y\mid E_M(g_k),a)
    =
    \mathrm{Pr}_\star(y\mid g_k,a)
\end{align}
for every $a\in\mathcal A$ and $y\in\mathcal Y$. Together with the induction hypothesis, multiplying by the common prefix
probability and the common policy factor gives
\begin{align}
    \pr_{M,\pi}^{h}(\zeta_{k+1})
    &=
    \pr_{M,\pi}^{h}(\zeta_k)
    \pi(a\mid g_k)
    \mathrm{Pr}_M(y\mid\mu_k,a)
    \nonumber\\
    &=
    \pr_{\star,\pi}^{h}(\zeta_k)
    \pi(a\mid g_k)
    \mathrm{Pr}_\star(y\mid g_k,a)
    \nonumber\\
    &=
    \pr_{\star,\pi}^{h}(\zeta_{k+1}),
\end{align}

If the extended prefix has positive probability, then
$\mathrm{Pr}_\star(y\mid g_k,a)>0$. Hence $g_kay\in\mathscr H_\star$, and the memory-update condition in~\eqref{eq:predictive-consistency-update} gives
\begin{align}
    T_{M,y}^{a}(\mu_k)
    =
    T_{M,y}^{a}(E_M(g_k))
    =
    E_M(g_kay).
\end{align}
This completes the induction and proves
\eqref{eq:exact-model-finite-trajectory-equality}.

The same induction applies when the first action is fixed to
$a\in\mathcal A$. At rollout depth zero, both the model and true-world
processes receive the same supplied action $a$; from depth one onward, both
use the same policy $\pi$. Therefore
\begin{align}
    V_M^\pi(h)
    &=
    V_\star^\pi(h),
    &
    Q_M^\pi(h,a)
    &=
    Q_\star^\pi(h,a).
\end{align}
Taking suprema over $\Pi_{\mathscr H}$ gives the optimal
value and action-value equalities. The memory-state identities follow from
\eqref{eq:model-optimal-value-bridge} and
\eqref{eq:model-optimal-q-bridge}.
\end{proof}

\subsection{Finite-dimensional classical world models}\label{sec:classical_realization}
Here we formally define classical world models.
We state some notation that we will use for classical world models. For real vectors $u,v \in\mathbb R^d$, write $\langle u,v\rangle$ for their inner product, and write $\mathbf 1 \in\mathbb R^d$ for the all-ones vector. The identity matrix is $\mathbb{I}_{d\times d}\in\mathbb{R}^{d\times d}$. A non-negative matrix $D\in\mathbb{R}^{d\times d}_{\geq 0}$ is column-substochastic if $\langle\mathbf 1,D u \rangle
    \leq
    \langle\mathbf 1,u \rangle$ for every non-negative vector $u\in\mathbb{R}^d_{\geq 0}$. We call $D$ column-stochastic if the
inequality is an equality for every $u\in\mathbb{R}^d_{\geq 0}$.
When the column convention is clear, we simply say substochastic and
stochastic. Now we formally define our notion of classical world model.

A classical world model of memory dimension $N$ uses a stochastic physical
memory with $N$ perfectly distinguishable configurations, labeled by
$i\in[N]$. A general state of this memory is represented by a
probability vector $z\in\Delta ([N])$. In POMDP terminology, $z$ is the
distribution over the $N$ internal configurations; physically, its entries are the
preparation probabilities of one $N$-state memory system. We use $N$ for this physical memory dimension, independently of the continuum
of possible state distributions.

The dynamics of this memory take the standard form of a
finite-state POMDP, equivalently the
non-negative hidden-state subclass of a controlled observable-operator
model~\cite{jaeger2000observable,vidyasagar2011complete}. This representation assigns a
non-negative matrix to each action--outcome branch. It is particularly
convenient here because both branch probabilities and posterior distributions over states are
obtained from products of the same matrices, in direct parallel with the
quantum-instrument representation that we use for quantum world models~\cite{vidyasagar2011complete,Hsu2008,fanizza2024quantum}. Related operator
representations have also been useful in recent statistical analyses of
POMDPs~\cite{jin2020sample,liu2022partially}.

\begin{definition}
\label{def:classical-world-model}
An $N$-dimensional \emph{classical world model} is specified by
\begin{align}
\mathsf C_N
&=
\left(
\mathcal A,
\mathcal Y,
[N],
z^{\mathrm{in}},
\mathbf D
\right),
&
\mathbf D
&:=
\left\{
D_y^{(a)}
\right\}_{(a,y)\in\mathcal A\times\mathcal Y},
\end{align}
where $[N]=\{1,\ldots,N\}$ labels the perfectly distinguishable physical
memory configurations,
$z^{\mathrm{in}}\in\Delta([N])$ is the initial memory state, $\mathcal{A}$ is the set of actions, $\mathcal Y\subseteq\mathcal O\times\mathbb R$ is the set of observations-rewards, and each
$D_y^{(a)}\in\mathbb R_{\geq0}^{N\times N}$ is column-substochastic, and the family satisfies
\begin{align}
\mathbf 1^{\mathsf T}
\left(
\sum_{y\in\mathcal Y}D_y^{(a)}
\right)
=
\mathbf 1^{\mathsf T}
\qquad
\text{for every }a\in\mathcal A.
\end{align}
The entry $(D_y^{(a)})_{ji}$ is the joint probability that the model
generates $y$ and moves from memory configuration $i$ to configuration $j$
when supplied with action $a$.
\end{definition}

The corresponding abstract memory-state space is
$\mathcal Z_C=\Delta([N])$. The branch matrices induce the outcome law
\begin{align}
\mathrm{Pr}_C(y\mid z,a)
&:=
\left\langle
\mathbf 1,D_y^{(a)}z
\right\rangle
\label{eq:classical-world-model-probability}
\end{align}
and, whenever this probability is nonzero, the conditional memory update
\begin{align}
T_{C,y}^{a}(z)
&:=
\frac{D_y^{(a)}z}
{\left\langle\mathbf 1,D_y^{(a)}z\right\rangle}.
\label{eq:classical-world-model-update}
\end{align}
After fixing $z_{\mathrm{ref}}$, $\mathsf C_N$ induces the abstract recurrent
tuple $(\mathcal A,\mathcal Y,\mathcal Z_C,z^{\mathrm{in}},
\mathrm{Pr}_C,T_C)$ required by Definition~\ref{def:world_model}.
The maps $\mathrm{Pr}_C$ and $T_C$ are derived from $\mathbf D$ and the
fixed zero-probability convention.

For history-indexed queries, write
$z_0:=z^{\mathrm{in}}$ and, for
$h=(a_0,y_1,\ldots,a_{t-1},y_t)$, define
\begin{align}
D_h
&:=
D_{y_t}^{(a_{t-1})}\cdots D_{y_1}^{(a_0)},
&
D_{\emptyset}
&:=
\mathbb I_{N\times N}.
\end{align}
The encoder associated with the model is
\begin{align}
E_C(h)
=
\frac{D_hz_0}
{\left\langle\mathbf 1,D_hz_0\right\rangle}
\label{eq:encoder_classical}
\end{align}
whenever the denominator is nonzero, and
$E_C(h)=z_{\mathrm{ref}}$ otherwise. This encoder is derived from the
model's branch dynamics and is the standard forward-filtering preparation obtained by
conditioning the initial state $z^{\mathrm{in}}$ through the same branch
matrices $D_y^{(a)}$ that govern the subsequent simulation
~\cite{rabiner1989tutorial,KAELBLING199899,vidyasagar2011complete}. Moreover, for strictly positive probability histories, its form is unique in order to fulfill relation~\eqref{eq:predictive-consistency-update} associated to the dynamics of the matrices $\mathbf{D}$. The encoder remains a
separate interface and is not a component of $\mathsf C_N$, but it is
not an additional free parameter in the classical results.
Throughout the classical results below, all history-indexed quantities use this canonical encoder.

\subsection{Finite-dimensional quantum world models}
\label{subsec:quantum-instrument-world-models}

A quantum world model of memory dimension $d$ uses a $d$-level physical
memory with Hilbert space $\mathcal H_Q\simeq\mathbb C^d$.
Write $\mathsf L(\mathcal H_Q)$ for the linear operators on $\mathcal H_Q$ and define
\begin{align}
\mathcal S(\mathcal H_Q)
:=
\left\{
\rho\in\mathsf L(\mathcal H_Q):
\rho\succeq0,\ \operatorname{Tr}\rho=1
\right\}.
\end{align} Its states are
density operators $\rho\in\mathcal S(\mathcal H_Q)$. A set of quantum states
can be perfectly distinguished in a single use only when their supports are
mutually orthogonal, so such a set contains at most $d$ states. For each candidate action, a quantum world model replaces non-negative branch matrices by quantum
instrument elements, forming a quantum instrument for each action~\cite{monras2016,fanizza2024quantum}. The same definitions have also been used to describe quantum POMDPs~\cite{barry2014quantum,lumbreras2026reinforcement}.

\begin{definition}
\label{def:quantum-world-model}
A $d$-dimensional \emph{quantum world model} is specified by
\begin{align}
\mathsf Q_d
&=
\left(
\mathcal A,
\mathcal Y,
\mathcal H_Q,
\rho^{\mathrm{in}},
\boldsymbol{\mathcal E}
\right),
&
\boldsymbol{\mathcal E}
&:=
\left\{
\mathcal E_y^{(a)}
\right\}_{(a,y)\in\mathcal A\times\mathcal Y},
\end{align}
where $\mathcal H_Q\simeq\mathbb C^d$ is the Hilbert space of the memory,
$\rho^{\mathrm{in}}\in\mathcal S(\mathcal H_Q)$ is its initial state, and
each
\begin{align}
\mathcal E_y^{(a)}:
\mathsf L(\mathcal H_Q)
\longrightarrow
\mathsf L(\mathcal H_Q)
\end{align}
is completely positive and trace nonincreasing. For every
$a\in\mathcal A$,
$\sum_{y\in\mathcal Y}\mathcal E_y^{(a)}$ is trace preserving, so
$\{\mathcal E_y^{(a)}\}_{y\in\mathcal Y}$ forms a quantum instrument.
\end{definition}

The corresponding abstract memory-state space is
$\mathcal Z_Q=\mathcal S(\mathcal H_Q)$. The instrument operations induce
the outcome law
\begin{align}
\mathrm{Pr}_Q(y\mid\rho,a)
&:=
\operatorname{Tr}\!\left[
\mathcal E_y^{(a)}(\rho)
\right]
\label{eq:quantum-world-model-probability}
\end{align}
and, whenever this probability is nonzero, the conditional memory update
\begin{align}
T_{Q,y}^{a}(\rho)
&:=
\frac{\mathcal E_y^{(a)}(\rho)}
{\operatorname{Tr}[\mathcal E_y^{(a)}(\rho)]}.
\label{eq:quantum-world-model-update}
\end{align}
Choose  a reference state
$\rho_{\mathrm{ref}}\in\mathcal Z_Q$ and set
$T_{Q,y}^{a}(\rho)=\rho_{\mathrm{ref}}$ on zero-probability branches. Thus,
$\mathrm{Pr}_Q$ and $T_Q$ provide the abstract recurrent representation required by
Definition~\ref{def:world_model}; they are induced by
$\boldsymbol{\mathcal E}$.

For history-indexed queries, write
$\rho_0:=\rho^{\mathrm{in}}$ and, for
$h=(a_0,y_1,\ldots,a_{t-1},y_t)$, define
\begin{align}
\mathcal E_h
&:=
\mathcal E_{y_t}^{(a_{t-1})}
\circ\cdots\circ
\mathcal E_{y_1}^{(a_0)},
&
\mathcal E_{\emptyset}
&:=
\operatorname{id}.
\end{align}
The encoder associated with the instrument dynamics is
\begin{align}
E_Q(h)
=
\frac{\mathcal E_h(\rho_0)}
{\operatorname{Tr}[\mathcal E_h(\rho_0)]}
\label{eq:encoder_quantum}
\end{align}
whenever the denominator is nonzero, and
$E_Q(h)=\rho_{\mathrm{ref}}$ otherwise.  The expression~\eqref{eq:encoder_quantum} identifies the memory state
associated with a history. For every history $h$ satisfying
$\operatorname{Tr}[\mathcal E_h(\rho_0)]>0$, it obeys the update
relation~\eqref{eq:predictive-consistency-update} induced by
$\boldsymbol{\mathcal E}$. On zero-probability histories,
$\rho_{\mathrm{ref}}$ is an arbitrary fixed convention.

The classical and quantum specifications above induce operational recurrent
representations of the form in Definition~\ref{def:world_model}. In each
case, the outcome law and conditional memory update are derived from the
outcome-labelled physical dynamics rather than supplied as independent model
data. Their respective memory dimensions are $N$ and $d$, independently of
the number of statistical states in $\Delta([N])$ or density operators in
$\mathcal S(\mathcal H_Q)$.

\section{The FRDN true world and finite-dimensional classical gap}
\label{sec:fdrn-classical-gap}

This section constructs the true world used to prove our main results. We begin with the
Fox--Rubin--Dharmadhikari--Nadkarni (FRDN) renewal
process~\cite{dharmadhikari1963sufficient,fox1968,
dharmadhikari1970,vidyasagar2011complete}. The original process generates a
stochastic sequence and has neither actions nor rewards. Its output
probabilities admit a finite-dimensional linear realization: they can be
computed through products of fixed finite-dimensional matrices. Nevertheless,
the same probabilities cannot be generated by any finite-state hidden Markov
model. We use them below to define the action-conditioned dynamics of a
controlled world and assign rewards to its possible outcomes. 

\subsection{The FRDN true world}
\label{subsec:fdrn-controlled-world}

We first specify the renewal law that determines the outcome probabilities of
the true world between two resets. We will equip our agent with the actions Wait, Probe and Maintain. Suppose that, after a reset, the agent
repeatedly chooses Wait. The true world draws an auxiliary random lifetime
$L\in\mathbb N_0$. Conditional on $L=\ell$, it produces Tick (an observation) on the first
$\ell$ Wait actions and Break (another observation) on the next one, after which the process resets.
At each reset, a fresh independent copy of $L$ is drawn.

Let
$p_\ell:=\mathrm{Pr}(L=\ell)$ be the probability that a run contains exactly $\ell$
Ticks. For the FRDN process, fix $\lambda\in(0,1/2]$ and
$\alpha\in\mathbb R$ such that $\alpha/\pi\notin\mathbb Q$, and define
\begin{align}
    p_\ell
    &:=
    \lambda^\ell
    \sin^2\!\left(\frac{\ell\alpha}{2}\right),
    \qquad \ell\geq1,
    \\
    p_0
    &:=
    1-\sum_{\ell=1}^{\infty}p_\ell .
    \label{eq:fdrn-lifetime-law}
\end{align}
This is a probability law because
$\sum_{\ell\geq1}p_\ell\leq\lambda/(1-\lambda)\leq1$.

After $t$ consecutive Ticks, the observed history implies that $L\geq t$.
We therefore define the survival probability
\begin{align}
    \Phi(t)
    &:=
    \mathrm{Pr}(L\geq t)
    =
    \sum_{\ell=t}^{\infty}p_\ell,
    \quad t\geq1,
    &
    \Phi(0)&:=1,
    \label{eq:survival_tail}
\end{align}
and the conditional probability of one additional Tick,
\begin{align}
    S(t)
    &:=
    \mathrm{Pr}(L\geq t+1\mid L\geq t)
    =
    \frac{\Phi(t+1)}{\Phi(t)},
    \label{eq:true_tick_prob}
\end{align}
for $t\geq 0$. The irrationality assumption implies $p_\ell>0$ for every $\ell\geq1$.
Consequently, $\Phi(t)>0$ for every $t$, so the conditional probability
$S(t)$ in~\eqref{eq:true_tick_prob} is well defined.

These quantities are probabilities: $p_\ell$ is a
probability mass, $\Phi(t)$ is a survival probability, and $S(t)$ is a
conditional probability. They determine the true-world kernel below: after
$t$ consecutive Wait--Tick outcomes since the last reset, the next Wait
produces Tick with probability $S(t)$ and Break with probability $1-S(t)$.

For $t\geq1$, summing the geometric series gives
\begin{align}
    \Phi(t)
    &={}
    \lambda^t\bigl[A-B\cos(t\alpha+\varphi)\bigr],
    \label{eq:fdrn-tail-closed-form}
    \\
    A
    &:=
    \frac{1}{2(1-\lambda)}, \,\,
    B
    :={}
    \frac{1}{2|1-\lambda e^{i\alpha}|},
\end{align}
where
$(1-\lambda e^{i\alpha})^{-1}
 =|1-\lambda e^{i\alpha}|^{-1}e^{i\varphi}$.
Since $A>B>0$,
\begin{align}
    S(t)
    &={}
    f(t\alpha),
    \qquad t\geq1,
    \label{eq:fdrn-survival-closed-form}
    \\
    f(x)
    &:={}
    \lambda
    \frac{A-B\cos(x+\alpha+\varphi)}
         {A-B\cos(x+\varphi)} .
    \label{eq:periodic_funtion_tick}
\end{align}
The function $f$ is continuous, $2\pi$-periodic, and nonconstant. Moreover,
$f(x)\in[0,1]$: the irrational orbit
$\{t\alpha\bmod 2\pi\}$ is dense since $\alpha / \pi \notin \mathbb Q$, $f(t\alpha)=S(t)\in[0,1]$, and $f$ is
continuous. For $t\ge1$, $S(t)=f(t\alpha)$. The value of $S(0)$ is defined
separately and plays no role in the asymptotic lower bound.

We now use this probability to define the true controlled world. For an
outcome label $y_o^a$, the superscript records the action $a$, while the
subscript records the observation $o$. The action index is mnemonic and is
not an additional component of the observation--reward pair. For special
reward choices, labels associated with different actions may denote the same
element of the outcome alphabet; this is unambiguous because each branch is
indexed by both its action and its outcome.

\begin{definition}[FRDN true world]
\label{def:fdrn-controlled-world}
In the sense of Definition~\ref{def:true-controlled-world}, the
\emph{FRDN true controlled world} is
\begin{align}
    \mathsf W_\star^{\mathrm{FRDN}}
    &:=
    \bigl(
        \mathcal A_{\mathrm{FRDN}},
        \mathcal Y_{\mathrm{FRDN}},
        \mathscr H,
        \mathrm{Pr}_\star
    \bigr).
    \label{eq:true_fdrn}
\end{align}
Its action set is
\begin{align}
    \mathcal A_{\mathrm{FRDN}}
    &:=
    \{W,M,P\},
\end{align}
where $W$, $M$, and $P$ denote Wait, Maintain, and Probe, respectively.

Let $\mathcal O:=\{T,B\}$ be the observation set, where $T$ denotes Tick
and $B$ denotes Break. Fix parameters
$C_W,C_M,C_P>0$ and $R_W,R_P\in\mathbb R$, and define the possible
observation--reward outcomes by
\begin{align}
    y_T^W&:=(T,R_W),
    &
    y_B^W&:=(B,-C_W),
    &
    y_B^M&:=(B,-C_M),
    \\
    y_T^P&:=(T,R_P),
    &
    y_B^P&:=(B,-C_P).
\end{align}
The outcome set is
\begin{align}
    \mathcal Y_{\mathrm{FRDN}}
    &:=
    \{y_T^W,y_B^W,y_B^M,y_T^P,y_B^P\}.
\end{align}

For $h\in\mathscr H$, let $\ell(h)$ be the length of the terminal sequence
of consecutive Wait--Tick action--outcome pairs in $h$. Equivalently,
\begin{align}
    \ell(\emptyset)
    &:=
    0,
    \\
    \ell(hay)
    &:=
    \begin{cases}
        \ell(h)+1,
        & a=W\ \text{and}\ y=y_T^W,\\
        0,
        & \text{otherwise}.
    \end{cases}
    \label{eq:fdrn-history-clock}
\end{align}
With $S(t)$ denoting the conditional Tick probability defined
in~\eqref{eq:true_tick_prob}, the true outcome law is
\begin{align}
    \mathrm{Pr}_\star(y\mid h,W)
    &:=
    \begin{cases}
        S(\ell(h)),
        & y=y_T^W,\\
        1-S(\ell(h)),
        & y=y_B^W,\\
        0,
        & \text{otherwise},
    \end{cases}
    \\
    \mathrm{Pr}_\star(y\mid h,M)
    &:=
    \begin{cases}
        1,
        & y=y_B^M,\\
        0,
        & \text{otherwise},
    \end{cases}
    \\
    \mathrm{Pr}_\star(y\mid h,P)
    &:=
    \begin{cases}
        S(\ell(h)),
        & y=y_T^P,\\
        1-S(\ell(h)),
        & y=y_B^P,\\
        0,
        & \text{otherwise}.
    \end{cases}
    \label{eq:fdrn-true-kernel}
\end{align}
\end{definition}
Here and throughout the FRDN construction, the history space is instantiated
using the FRDN alphabets:
\begin{align}
\mathscr H
:=
\bigcup_{t\geq0}
\left(
\mathcal A_{\mathrm{FRDN}}
\times
\mathcal Y_{\mathrm{FRDN}}
\right)^t.
\end{align}

The lifetime $L$ is an auxiliary construction used to specify the renewal
law, while $\ell(h)$ is a deterministic function of the observed history;
neither introduces an additional state variable into the true-world tuple.
The Wait--Tick branch increments $\ell(h)$, whereas Wait--Break, Maintain,
and either Probe outcome reset it to zero. Using the one-step reward
definition~\eqref{eq:one-step-expected-rewards}, the expected immediate
rewards are
\begin{align}\label{eq:immediate_reward_expected}
    r_\star(h,W)
    &={}
    (R_W+C_W)S(\ell(h))-C_W,
    \\
    r_\star(h,M)
    &={}
    -C_M,
    \\
    r_\star(h,P)
    &={}
    (R_P+C_P)S(\ell(h))-C_P.
\end{align}

For the controlled FRDN world, the reachable history tree
$\mathscr H_\star$ is generated by the kernels above. Equivalently, a finite
string is reachable if and only if every appended outcome has positive conditional
probability given the preceding history and queried action.

\subsection{Convergence of finite-dimensional classical memories via Perron--Frobenius}

The goal of this subsection is to isolate the finite-dimensional classical
memory constraint imposed by Perron--Frobenius theory on non-negative matrices.
Along the all-Tick trajectory, a finite-dimensional classical world model updates
its distributions over states by repeatedly applying the same non-negative Tick branch and
renormalizing. Perron--Frobenius theory implies that such $N$-dimensional non-negative matrix dynamics cannot
track an irrational rotation forever: after passing to finitely many
arithmetic subsequences, its normalized distributions over states converge. More precisely, for some
period $p$, each residue class $r\in\lbrace 0,\ldots,p-1 \rbrace$ collects the times
$t=r+kp$ for $k\in\mathbb{N}_0$, and the classical memory along each such subsequence has a limiting
state. This will imply that the classical Tick predictions become asymptotically
constant on each residue class.

Let $\mathsf C_N$ be an arbitrary $N$-dimensional classical world model in
the sense of Definition~\ref{def:classical-world-model}, with action set
$\mathcal A_{\mathrm{FRDN}}$ and outcome alphabet
$\mathcal Y_{\mathrm{FRDN}}$. Let
\begin{align}
    D_T
    &:=
    D_{y_T^W}^{(W)}
\end{align}
be its Wait--Tick branch matrix, and let
$z_0\in\Delta_{N-1}$ be its initial distribution over states.

Define the all-Tick action--outcome trajectory by
\begin{align}
\mathscr F_{\mathrm{tick}}
&:=
h_1h_2\cdots,
&
h_t
&:=
(W,y_T^W),
\qquad t\geq1,
\nonumber\\
\ph_0
&:=
\emptyset,
&
\ph_t
&:=
h_1h_2\cdots h_t,
\qquad t\geq1.
\label{eq:fdrn-tick-history}
\end{align}
Thus,
\begin{align}
\ph_{t+1}
&=
\ph_t W y_T^W,
&
\ell(\ph_t)
&=
t.
\end{align}
Moreover,
\begin{align}
\mathrm{Pr}_\star(y_T^W\mid\ph_t,W)
&=
S(t)
=
\frac{\Phi(t+1)}{\Phi(t)}
>
0,
\end{align}
so every prefix $\ph_t$ is reachable.

Each $\ph_t$ can serve as the common root of either a true-world
continuation or a world-model rollout, with
\begin{align}
H_0^\star
=
H_0^M
=
\ph_t.
\end{align}
A general history $h$ satisfying $\ell(h)=t$ need not equal $\ph_t$; only
its terminal Wait--Tick suffix has length $t$.

Along these prefixes, define the model survival probability and conditional
Tick prediction by
\begin{align}
\Phi_C(t)
&:=
\left\langle
\mathbf 1,D_T^t z_0
\right\rangle,
\label{eq:fdrn-classical-tail}
\\
S_C(t)
&:=
\mathrm{Pr}_C
\!\left(
y_T^W\mid E_C(\ph_t),W
\right).
\label{eq:fdrn-classical-tick-probability}
\end{align}
Here, $\Phi_C(t)$ is the probability that the model assigns to the
length-$t$ all-Tick prefix, whereas $S_C(t)$ is its conditional probability
of one further Tick. Whenever $\Phi_C(t)>0$,
\begin{align}
E_C(\ph_t)
&=
\frac{D_T^t z_0}{\Phi_C(t)},
&
S_C(t)
&=
\frac{\Phi_C(t+1)}{\Phi_C(t)}.
\label{eq:fdrn-classical-ratio}
\end{align}
If $\Phi_C(t_0)=0$ for some $t_0$, then $\Phi_C(t)=0$ for every
$t\geq t_0$. By the fixed-reference convention following~\eqref{eq:encoder_classical}, the encoded memory state and its Tick
prediction are then eventually constant, so the claims below are immediate.
We therefore treat the case $\Phi_C(t)>0$ for every $t$.

Our main technical tool is Perron--Frobenius theory, which characterizes the
asymptotic behavior of powers of $D_T$ and hence of the conditional
probability $S_C(t)$. We use the following standard facts
that can be found in~\cite{seneta2006nonnegative,berman1994nonnegative}.

\paragraph{Perron--Frobenius theory for non-negative matrices.}
We use the column-vector convention of
Definition~\ref{def:classical-world-model}. $D_{ji}>0$ means that one
step can carry mass from state $i$ to state $j$. Equivalently, the directed
graph of a non-negative matrix $D\in\mathbb R_{\geq0}^{N\times N}$ has an
edge $i\to j$ whenever $D_{ji}>0$; then $(D^t)_{ji}>0$ precisely when there
is a positive-weight path of length $t$ from $i$ to $j$. A set of states is
\emph{strongly connected} if each state can reach every other, and $D$ is
\emph{irreducible} when its graph is strongly connected.

The strongly connected components can be ordered so that a simultaneous
permutation of rows and columns puts $D$ in \emph{Frobenius normal form},
\begin{align}
    \Pi D\Pi^{\mathsf T}
    =
    \begin{pmatrix}
        B_1 & 0 & \cdots & 0\\
        *   & B_2 & \ddots & \vdots\\
        \vdots & \ddots & \ddots & 0\\
        * & \cdots & * & B_m
    \end{pmatrix}.
    \label{eq:fdrn-frobenius-normal-form}
\end{align}
Each diagonal block $B_j$ is irreducible or a $1\times1$ zero block, while
the off-diagonal blocks describe paths between distinct components. We call
$B_j$ \emph{reachable from} a non-negative vector $z$ if some mass initially
in the support of $z$ can enter that component; equivalently, there exist $\tau\geq0$ and a state $i$ in $B_j$ such that
$(D^\tau z)_i>0$. Blocks that are
not reachable from $z$ never contribute to $D^t z$.

For an irreducible block $B$, let
$\rho_B:=\max\{|\mu|:\mu\in\operatorname{spec}(B)\}$ be its spectral radius.
Its period is the greatest common divisor of the lengths of all closed paths
from any fixed state back to itself,
\begin{align}
    h_B:=\gcd\{t\geq1:(B^t)_{ii}>0\};
\end{align}
the value is independent of $i$. The Perron--Frobenius theorem gives positive
left and right eigenvectors at $\rho_B$, and states that the eigenvalues on
the spectral circle $|\mu|=\rho_B$ are exactly
\begin{align}
    \rho_B e^{2\pi i k/h_B},
    \qquad k=0,1,\ldots,h_B-1.
    \label{eq:fdrn-pf-peripheral-spectrum}
\end{align}
These are the \emph{peripheral eigenvalues}. Since
$h_B\leq\dim B\leq N$, the integer
\begin{align}\label{eq:lcm}
    L_N:=\operatorname{lcm}(1,2,\ldots,N)
\end{align}
is divisible by every possible block period. Hence every peripheral
eigenvalue $\mu$ of $B$ obeys
\begin{align}
    \mu^{L_N}=\rho_B^{L_N}.
    \label{eq:fdrn-period-killing}
\end{align}
Passing to a fixed residue class modulo $L_N$ therefore removes all
Perron--Frobenius phases. 

We now prove the main technical result used to establish the value-function gap.

\begin{lemma}
\label{lem:fdrn-classical-residue-convergence}
Let $D_T\in\mathbb R_{\geq0}^{N\times N}$ be column-substochastic and $ L_N:=\operatorname{lcm}(1,2,\ldots,N)$. Let $z_0\in\Delta_{N-1}$, and define
\begin{align}
    \Phi_C(t):=\langle\mathbf 1,D_T^t z_0\rangle.
\end{align}
Assume that $\Phi_C(t)>0$ for all sufficiently large $t$. Then, for every
$q\in\mathbb R_{\geq0}^{N}$ with $q\leq\mathbf 1$ and every
$r\in\{0,1,\ldots,L_N-1\}$, there exists $c_{q,r}\in[0,1]$ such that
\begin{align}
    \lim_{n\to\infty}
    \frac{\langle q,D_T^{nL_N+r}z_0\rangle}
         {\langle\mathbf 1,D_T^{nL_N+r}z_0\rangle}
    =c_{q,r}.
    \label{eq:fdrn-readout-residue-limit}
\end{align}
In particular, taking $q=D_T^{\mathsf T}\mathbf 1$ gives constants
$c_r\in[0,1]$ such that
\begin{align}
    S_C(nL_N+r)
    =
    \frac{\Phi_C(nL_N+r+1)}{\Phi_C(nL_N+r)}
    \longrightarrow c_r.
    \label{eq:fdrn-classical-ratio-residue-limit}
\end{align}
\end{lemma}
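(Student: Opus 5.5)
The proof proceeds by a Jordan-form expansion of $D_T^t z_0$, with Perron--Frobenius used to control the phases of the dominant terms. Write $D:=D_T$. First I would reduce to the states reachable from $\operatorname{supp}(z_0)$. This set $\mathcal R$ is closed under the directed graph of $D$, so $D^t z_0$ is supported on $\mathcal R$ for all $t$. Hence I may replace $D$ by its principal submatrix $D'$ on $\mathcal R$; it is again nonnegative and substochastic, and every diagonal block of its Frobenius normal form~\eqref{eq:fdrn-frobenius-normal-form} is reachable from $z_0$. Let $\rho'$ be the spectral radius of $D'$. Since $\Phi_C(t)>0$ for all large $t$, $D'$ is not nilpotent, so $\rho'>0$. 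The spectrum of $D'$ is the union of the block spectra, so every eigenvalue of modulus $\rho'$ is a peripheral eigenvalue of some block $B$ with $\rho_B=\rho'$. By~\eqref{eq:fdrn-pf-peripheral-spectrum} and~\eqref{eq:fdrn-period-killing}, every such eigenvalue $\mu$ satisfies $\mu^{L_N}=\rho'^{L_N}$.

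Next I would expand via the Jordan decomposition,
\begin{align}
D'^t z_0=\sum_{\mu}\sum_{k\ge0} t^k\mu^t v_{\mu,k}\qquad(t\ge N),
\end{align}
with fixed vectors $v_{\mu,k}$. The key claim, which I expect to be the main obstacle, is that the dominant modulus actually appearing with a nonzero coefficient is exactly $\rho'$. Otherwise some non-peripheral eigenvalue of a larger block could in principle dominate with an uncontrolled phase.

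To prove the claim, fix a reachable block $B$ with $\rho_B=\rho'$ and let $u>0$ be its left Perron vector. Write $x_B(t)$ for the restriction of $D'^t z_0$ to the states of $B$. Because inflow from earlier components is nonnegative, $x_B(t+1)\ge B\,x_B(t)$ entrywise. Hence $\langle u,x_B(t+1)\rangle\ge\rho'\langle u,x_B(t)\rangle$. Reachability gives $x_B(\tau)\neq0$ for some $\tau$, so $\|D'^t z_0\|\ge c\,\rho'^t$ for all $t\ge\tau$. Therefore some $v_{\mu,k}$ with $|\mu|=\rho'$ is nonzero. Let $k^*$ be the largest $k$ for which some $v_{\mu,k}$ with $|\mu|=\rho'$ is nonzero. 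All other terms are then $o(t^{k^*}\rho'^t)$.

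Along $t=nL_N+r$ the peripheral phases freeze, which gives
\begin{align}
\frac{D'^{\,nL_N+r}z_0}{\rho'^{\,nL_N+r}(nL_N+r)^{k^*}}\longrightarrow w_r:=\sum_{|\mu|=\rho'}(\mu/\rho')^r v_{\mu,k^*}.
\end{align}
Each $w_r$ is a limit of nonnegative vectors, so $w_r\ge0$. Applying $D'$ to this ratio shows that $w_{r+1}=D'w_r/\rho'$, with indices taken mod $L_N$. Consequently, if one $w_r$ vanished, all of them would. But the values $(\mu/\rho')$ are distinct $L_N$-th roots of unity, so by Vandermonde invertibility, $w_r=0$ for all $r$ would force every $v_{\mu,k^*}$ with $|\mu|=\rho'$ to be zero. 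This contradicts the choice of $k^*$. Hence $w_r\ge0$ and $w_r\ne0$, so $\langle\mathbf 1,w_r\rangle>0$.

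Dividing numerator and denominator of~\eqref{eq:fdrn-readout-residue-limit} by $\rho'^{t}t^{k^*}$ then gives the limit $c_{q,r}=\langle q,w_r\rangle/\langle\mathbf 1,w_r\rangle$. Since $0\le q\le\mathbf 1$ and $w_r\ge0$, we have $c_{q,r}\in[0,1]$. Finally, take $q=D^{\mathsf T}\mathbf 1$. Column-substochasticity gives $0\le q\le\mathbf 1$, and $\langle q,D^t z_0\rangle=\Phi_C(t+1)$. Combining this with~\eqref{eq:fdrn-classical-ratio} yields~\eqref{eq:fdrn-classical-ratio-residue-limit}.
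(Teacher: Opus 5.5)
Your proof is correct. It uses the same ingredients as the paper: a Jordan expansion, the fact that $L_N$ kills every peripheral phase, and a left-Perron-vector lower bound of order $\rho^t$ on a dominant reachable block. Your inequality $x_B(t+1)\geq Bx_B(t)$ is the paper's ``keep only paths that enter $B$ and stay there'' step.

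Where you differ is in how you rule out cancellation of the dominant terms.

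\textbf{The paper's route.} It first absorbs $D_T^r$ into a renormalized initial vector $z_r$, so it only ever studies the subsequence $t=nL_N$. It then works with the two scalar sequences $\langle\mathbf 1,D^tz\rangle$ and $\langle q,D^tz\rangle$. The lower bound immediately forces the peripheral polynomial $p(n)$ of the denominator to be nonzero, and its degree $m$ is allowed to depend on $r$. The domination $0\leq\langle q,D^tz\rangle\leq\langle\mathbf 1,D^tz\rangle$ then caps the numerator's peripheral polynomial at degree $m$.

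\textbf{Your route.} You fix a single vector-valued expansion of $D'^tz_0$ and one normalization $\rho'^t t^{k^*}$ for all residues. The lower bound only gives order $\rho'^t$ along each residue class, not $t^{k^*}\rho'^t$. So you genuinely need the extra step, the cyclic relation $D'w_r=\rho' w_{r+1}$ together with Vandermonde invertibility, to exclude $w_r=0$.

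\textbf{What each buys.} Your extra step gives a slightly stronger, uniform conclusion:
\begin{itemize}
\item the same growth order $t^{k^*}\rho'^t$ on every residue class;
\item the limiting memory states $w_r/\langle\mathbf 1,w_r\rangle$ directly, which is Corollary~\ref{cor:fdrn-classical-belief-residue-convergence}, with the cyclic structure $D'w_r\propto w_{r+1}$ made explicit.
\end{itemize}
The paper's per-residue scalar argument is shorter and never needs to compare different residue classes.
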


\begin{proof}
Fix $r$ and abbreviate $D:=D_T$ and $L:=L_N$. Column substochasticity gives
\begin{align}
    \Phi_C(t+1)
    =\langle D^{\mathsf T}\mathbf 1,D^t z_0\rangle
    \leq\langle\mathbf 1,D^t z_0\rangle
    =\Phi_C(t).
\end{align}
Thus, if $\Phi_C(t)$ is positive for all sufficiently large $t$, it is in fact
positive for every $t$. We may therefore define the normalized column vector
\begin{align}
    z_r
    :=
    \frac{D^r z_0}{\langle\mathbf 1,D^r z_0\rangle}
    \in\Delta_{N-1}.
    \label{eq:fdrn-residue-initial-belief}
\end{align}
Since $D^{nL+r}=D^{nL}D^r$,
\begin{align}
    \frac{\langle q,D^{nL+r}z_0\rangle}
         {\langle\mathbf 1,D^{nL+r}z_0\rangle}
    =
    \frac{\langle q,D^{nL}z_r\rangle}
         {\langle\mathbf 1,D^{nL}z_r\rangle}.
    \label{eq:fdrn-residue-reduction}
\end{align}
It is therefore enough to study the subsequence $nL$ from an arbitrary
initial distribution $z$; below we write $z=z_r$.

Put $D$ in Frobenius normal form~\eqref{eq:fdrn-frobenius-normal-form} and delete all blocks that are not reachable from $z$.
They never receive mass from $z$, so this does not change $D^t z$ or either
scalar in~\eqref{eq:fdrn-residue-reduction}. We henceforth restrict $D$ to the reachable blocks. Let
\begin{align}
    \rho:=\max_j\rho_{B_j}
\end{align}
be the largest spectral radius among its diagonal blocks. The
hypothesis $\Phi_C (t) > 0$ implies $\rho>0$.

The main contribution of $D^t$ in the inner product $\langle\mathbf 1,D^t z\rangle$ has exponential rate $\rho$. Choose a reachable block $B$ with spectral radius $\rho_B=\rho$. By
reachability, there exist $p\geq0$ and a state $a$ in $B$ such that
$(D^p z)_a>0$. Let $e_a$ denote the basis vector of that state
within the block, and let $u>0$ be a left Perron vector of $B$, written as
$B^{\mathsf T}u=\rho u$, and let $\mathbf 1_B$ denote the all-ones vector on
that block. For some $\eta>0$, $\mathbf 1_B\geq\eta u$ componentwise.
Keeping only paths that enter $B$ at $a$ and subsequently remain in $B$ gives,
for $t\geq p$,
\begin{align}
  a(t)
    :=  \langle\mathbf 1,D^t z\rangle
    &\geq
    (D^p z)_a\,\langle\mathbf 1_B,B^{t-p}e_a\rangle
    \nonumber\\
    &\geq
   (D^p z)_a\eta\,\langle u,B^{t-p}e_a\rangle \\
   & =
    (D^p z)_a\eta \langle u,e_a\rangle \rho^{t-p}.
    \label{eq:fdrn-perron-lower-bound}
\end{align}
Thus,
\begin{align}\label{eq:fdrn-perron-lower-boundv2}
    a(t) \geq C_p \rho^t , \quad C_p :=   (D^p z)_a\eta \langle u,e_a\rangle \rho^{-p} > 0.
\end{align}

We also use the standard consequence of Jordan normal form that, for any
finite matrix $D$ and vectors $u,v$, the scalar sequence
$\langle u,D^t v\rangle$ is a finite sum of polynomial--exponential terms
$p_\mu(t)\mu^t$, where $\mu$ ranges over eigenvalues of $D$
~\cite[Sec.~3.1]{Horn_Johnson_1985}. In particular, for all sufficiently large $t$,
\begin{align}
    a(t)
    =
    \langle\mathbf 1,D^t z\rangle
    =
    \sum_{\mu\in\operatorname{spec}(D)\setminus\{0\}}
    p_\mu(t)\mu^t,
    \label{eq:fdrn-jordan-scalar-expansion}
\end{align}
where each $p_\mu$ is a polynomial of $t$. Every eigenvalue of $D$ lies in a
diagonal Frobenius block, and hence has modulus at most $\rho$. Moreover, if
$|\mu|=\rho$, then $\mu$ belongs to a block of spectral radius $\rho$ and is
therefore peripheral for that block. Perron-Frobenius~\eqref{eq:fdrn-period-killing}
then gives $\mu^{nL}=\rho^{nL}$. The lower bound~\eqref{eq:fdrn-perron-lower-bound} ensures that the terms with
$|\mu|=\rho$ do not all cancel after passing to $t=nL$. Indeed, after
substituting $t=nL$ in~\eqref{eq:fdrn-jordan-scalar-expansion}, all terms with
$|\mu|<\rho$ are exponentially smaller, while every term with $|\mu|=\rho$
satisfies $\mu^{nL}=\rho^{nL}$ by~\eqref{eq:fdrn-period-killing}. Hence the
terms with $|\mu|=\rho$ combine into $\rho^{nL}p(n)$ for a real polynomial
$p$. If $p$ were the zero polynomial, then
$a(nL)=o(\rho^{nL})$, contradicting~\eqref{eq:fdrn-perron-lower-boundv2}.
Therefore $p$ is not zero. Let $m$ be its degree and $A$ its leading
coefficient. Then
\begin{align}
    a(nL)
    =
    \rho^{nL}n^m\bigl(A+o(1)\bigr).
    \label{eq:fdrn-denominator-asymptotic}
\end{align}
Since $a(nL)>0$ for all sufficiently large $n$, necessarily $A>0$.

For the numerator in~\eqref{eq:fdrn-residue-reduction}, set
\begin{align}
    b_q(t):=\langle q,D^t z\rangle.
\end{align}
It has a Jordan expansion of the same form, and the componentwise inequality
$0\leq q\leq\mathbf 1$ gives
\begin{align}
    0\leq b_q(t)\leq a(t).
    \label{eq:fdrn-readout-domination}
\end{align}
After substituting $t=nL$, the terms with $|\mu|=\rho$ in the numerator
combine into $\rho^{nL}\widetilde p(n)$ for another real polynomial
$\widetilde p$, while all terms with $|\mu|<\rho$ are exponentially smaller.
The inequality~\eqref{eq:fdrn-readout-domination} implies that
$\widetilde p$ has degree at most $m$: if it had larger degree, then
$b_q(nL)$ would eventually either be negative or larger than $a(nL)$.
Therefore
\begin{align}
    b_q(nL)
    =
    \rho^{nL}n^m\bigl(B_q+o(1)\bigr),
    \label{eq:fdrn-numerator-asymptotic}
\end{align}
where $B_q=0$ when the numerator has strictly smaller order. Dividing
\eqref{eq:fdrn-readout-domination} by $\rho^{nL}n^m$ and taking the limit
gives $0\leq B_q\leq A$. Therefore
\begin{align}
    \frac{\langle q,D^{nL}z\rangle}
         {\langle\mathbf 1,D^{nL}z\rangle}
    \longrightarrow
    \frac{B_q}{A}\in[0,1].
\end{align}
Together with~\eqref{eq:fdrn-residue-reduction}, this proves
\eqref{eq:fdrn-readout-residue-limit}.

Finally, column substochasticity implies
$0\leq D^{\mathsf T}\mathbf 1\leq\mathbf 1$, and
\begin{align}
    \langle D^{\mathsf T}\mathbf 1,D^t z_0\rangle
    =
    \langle\mathbf 1,D^{t+1}z_0\rangle
    =\Phi_C(t+1).
\end{align}
The choice $q=D^{\mathsf T}\mathbf 1$ therefore yields
\eqref{eq:fdrn-classical-ratio-residue-limit}.
\end{proof}

\begin{corollary}
\label{cor:fdrn-classical-belief-residue-convergence}
Under the hypotheses of
Lemma~\ref{lem:fdrn-classical-residue-convergence}, for every residue $r$
there exists $z_{\infty,r}\in\Delta_{N-1}$ such that
\begin{align}
    E_C(\ph_{nL_N+r})
    =
    \frac{D_T^{nL_N+r}z_0}
         {\langle\mathbf 1,D_T^{nL_N+r}z_0\rangle}
    \longrightarrow z_{\infty,r}.
    \label{eq:fdrn-classical-belief-residue-limit}
\end{align}
Consequently, every continuous function of the classical distribution over states converges to
a constant on each residue class.
\end{corollary}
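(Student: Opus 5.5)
We will derive the corollary by applying Lemma~\ref{lem:fdrn-classical-residue-convergence} coordinatewise. Fix a residue $r\in\{0,\ldots,L_N-1\}$ and, for each memory configuration $i\in[N]$, take the readout $q=e_i$, the $i$-th standard basis vector. It satisfies $0\leq e_i\leq\mathbf 1$, so the lemma applies and gives a constant $c_{e_i,r}\in[0,1]$ with
\begin{align}
\frac{\langle e_i,D_T^{nL_N+r}z_0\rangle}{\langle\mathbf 1,D_T^{nL_N+r}z_0\rangle}\longrightarrow c_{e_i,r}.
\end{align}
The left-hand side is exactly the $i$-th entry of $E_C(\ph_{nL_N+r})$, by the expression~\eqref{eq:fdrn-classical-ratio} for the encoder along the all-Tick prefixes. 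We then define $z_{\infty,r}$ as the vector with entries $(z_{\infty,r})_i:=c_{e_i,r}$. Since there are only finitely many coordinates, convergence in each coordinate gives convergence of the whole vector, in any norm.

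Next we check that $z_{\infty,r}$ lies in $\Delta_{N-1}$. Every term of the sequence is a probability vector: its entries are nonnegative and sum to one, because $\sum_i e_i=\mathbf 1$. Nonnegativity and the linear constraint $\langle\mathbf 1,z\rangle=1$ are both preserved under limits, since $\Delta_{N-1}$ is closed. Equivalently, applying the lemma with $q=\mathbf 1$ gives the limit $1$, and linearity of the limit gives $\sum_i c_{e_i,r}=1$.

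We must also handle the degenerate case, which is excluded by the lemma's hypothesis. If $\Phi_C(t_0)=0$ for some $t_0$, then $\Phi_C(t)=0$ for all $t\ge t_0$. By the fixed-reference convention, $E_C(\ph_t)=z_{\mathrm{ref}}$ eventually, so the limit trivially exists and equals $z_{\mathrm{ref}}$.

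Finally, the ``consequently'' clause follows from the sequential characterization of continuity. If $F:\Delta_{N-1}\to\mathbb R$ is continuous, then $F(E_C(\ph_{nL_N+r}))\to F(z_{\infty,r})$. In particular this covers the Tick prediction $S_C$, any one-step expected reward $r_C(\cdot,a)$, and any other continuous readout of the distribution over states.

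No step is a real obstacle. The only point needing care is that the lemma's limits are taken with respect to the same normalizing denominator for every readout $q$, so the coordinate limits combine into a single limiting vector. This holds as stated in the lemma.
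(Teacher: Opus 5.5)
Your proposal is correct and matches the paper's proof: you apply Lemma~\ref{lem:fdrn-classical-residue-convergence} with $q=e_j$ to get convergence of each coordinate, then note that the limiting coordinates are nonnegative and sum to one, so they define $z_{\infty,r}\in\Delta_{N-1}$. Your separate treatment of the case $\Phi_C(t_0)=0$ lies outside the lemma's hypothesis and is not needed here (the paper handles it in the later theorems), but it is harmless.
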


\begin{proof}
For the $j$th standard basis vector $e_j$,
\begin{align}
    \left[E_C(\ph_{nL_N+r})\right]_j
    =
    \frac{\langle e_j,D_T^{nL_N+r}z_0\rangle}
         {\langle\mathbf 1,D_T^{nL_N+r}z_0\rangle}.
\end{align}
Since $0\leq e_j\leq\mathbf 1$, the lemma gives convergence of every
coordinate. The limiting coordinates are non-negative and sum to one, so
they define $z_{\infty,r}\in\Delta_{N-1}$.
\end{proof}

\subsection{Gap for the conditional probabilities}\label{subsec:gap_conditional_probability}

We now turn the convergence of finite classical memories established in Lemma~\ref{lem:fdrn-classical-residue-convergence} and Corollary~\ref{cor:fdrn-classical-belief-residue-convergence} into an operational separation for prediction, action-values, and values along the Tick histories.

We start with a standard  tool that converts the problem of approximating the sequence $f(t\alpha)$ along arithmetic subsequences into the simpler problem of approximating the function $f$ by constants in phase average. Recall that $f$ is given in~\eqref{eq:periodic_funtion_tick} and $S(t) = f(t\alpha)$ for $t\geq 1$.

We use Weyl equidistribution to compare the sequence $f(t\alpha)$ with the
phase average of $f$. Since $L_N\alpha/(2\pi)\notin\mathbb Q$, Weyl
equidistribution~\cite[Chap.~1]{kuipers2012uniform} gives, for every residue
$r\in\{0,\ldots,L_N-1\}$ and every Riemann-integrable
$2\pi$-periodic function $g$,
\begin{align}
\lim_{K\to\infty}
\frac{1}{K}
\sum_{n=0}^{K-1}
g\!\left((nL_N+r)\alpha\right)
&=
\frac{1}{2\pi}
\int_0^{2\pi}
g(x)\,dx.
\label{eq:fdrn-weyl-residue-average}
\end{align}
In particular,~\eqref{eq:fdrn-weyl-residue-average} applies both to the
continuous functions used in the value bounds and to the indicator
functions used in~\eqref{eq:fdrn-half-tick-residue}, since the latter have
only finitely many discontinuities.

We shall repeatedly use the following stability consequence of Weyl
equidistribution.

\begin{lemma}
\label{lem:cesaro-weyl-stability}
Fix $r\in\{0,\ldots,L_N-1\}$. Let
$F:\mathbb R\to\mathbb R$ be continuous and $2\pi$-periodic, and let
$c_n\to c$ in $\mathbb R$. Then
\begin{align}
    \lim_{K\to\infty}
    \frac1K
    \sum_{n=0}^{K-1}
    \left|
        F\bigl((nL_N+r)\alpha\bigr)-c_n
    \right| \\
    =
    \frac{1}{2\pi}
    \int_0^{2\pi}
    |F(x)-c|\,dx.
    \label{eq:cesaro-weyl-stability}
\end{align}
\end{lemma}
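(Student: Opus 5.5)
The plan is to split the error into a part that vanishes by convergence of $c_n$ and a part handled directly by Weyl equidistribution~\eqref{eq:fdrn-weyl-residue-average}. For each $n$, the reverse triangle inequality gives
\begin{align}
\Bigl|\,\bigl|F((nL_N+r)\alpha)-c_n\bigr|-\bigl|F((nL_N+r)\alpha)-c\bigr|\,\Bigr|
\leq |c_n-c|.
\end{align}
We average this over $n=0,\ldots,K-1$. The Cesàro mean of the null sequence $|c_n-c|$ tends to zero, since Cesàro averaging preserves limits. Hence the two averages
\begin{align}
\frac1K\sum_{n=0}^{K-1}\bigl|F((nL_N+r)\alpha)-c_n\bigr|
\quad\text{and}\quad
\frac1K\sum_{n=0}^{K-1}\bigl|F((nL_N+r)\alpha)-c\bigr|
\end{align}
differ by a quantity that vanishes as $K\to\infty$. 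Consequently, if one of these limits exists, the other exists and takes the same value.

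It remains to evaluate the limit with the fixed constant $c$. The function $g(x):=|F(x)-c|$ is continuous, because it is the composition of the continuous $F$ with $|\cdot-c|$. It is also $2\pi$-periodic, since $F$ is. In particular, $g$ is Riemann integrable on $[0,2\pi]$.

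We therefore apply~\eqref{eq:fdrn-weyl-residue-average} with this $g$ and the fixed residue $r$. This is legitimate because $L_N\alpha/(2\pi)\notin\mathbb Q$: $\alpha/\pi$ is irrational and $L_N$ is a positive integer. The Weyl limit then gives $\frac1{2\pi}\int_0^{2\pi}|F(x)-c|\,dx$, and combining this with the previous step yields~\eqref{eq:cesaro-weyl-stability}.

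There is no substantive obstacle. The only point needing care is that the Weyl statement in~\eqref{eq:fdrn-weyl-residue-average} is for a fixed function, whereas the summand involves the varying $c_n$. The uniform bound $|c_n-c|$, which does not depend on the phase, removes this issue through the Cesàro argument above, without needing any uniformity in the equidistribution itself.
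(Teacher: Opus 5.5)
Your proposal is correct and follows essentially the same route as the paper's proof. Both use the termwise reverse triangle inequality, show that the Ces\`aro mean of $|c_n-c|\to0$ vanishes, and apply Weyl equidistribution to the continuous $2\pi$-periodic function $x\mapsto|F(x)-c|$.
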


\begin{proof}
The reverse triangle inequality gives
\begin{align}
    &\left|
        \frac1K
        \sum_{n=0}^{K-1}
        \left|
            F\bigl((nL_N+r)\alpha\bigr)-c_n
        \right|
        \right.
        \nonumber\\[-0.3em]
    &\hspace{4em}\left.
        -
        \frac1K
        \sum_{n=0}^{K-1}
        \left|
            F\bigl((nL_N+r)\alpha\bigr)-c
        \right|
    \right|
    \nonumber\\
    &\qquad\leq
    \frac1K
    \sum_{n=0}^{K-1}|c_n-c|.
    \label{eq:cesaro-weyl-perturbation}
\end{align}
Since $c_n\to c$, convergence implies
\begin{align}
    \lim_{K\to\infty}
    \frac1K
    \sum_{n=0}^{K-1}|c_n-c|
    =
    0.
    \label{eq:cesaro-convergent-perturbation}
\end{align}
On the other hand, Weyl equidistribution~\eqref{eq:fdrn-weyl-residue-average} applied to the continuous
$2\pi$-periodic function $x\mapsto|F(x)-c|$ gives
\begin{align}
    \lim_{K\to\infty}
    \frac1K
    \sum_{n=0}^{K-1}
    \left|
        F\bigl((nL_N+r)\alpha\bigr)-c
    \right|
    =
    \frac{1}{2\pi}
    \int_0^{2\pi}|F(x)-c|\,dx.
\end{align}
Combining the above proves~\eqref{eq:cesaro-weyl-stability}.
\end{proof}

The following constant will be the one appearing in our lower bounds,
\begin{align}
    \kappa_{\mathrm{FRDN}}
    :=
    \min_{c\in[0,1]}
    \frac{1}{2\pi}
    \int_0^{2\pi}|f(x)-c|\,dx.
    \label{eq:kappa-fdrn}
\end{align}
Because $f$ is continuous and nonconstant,
$\kappa_{\mathrm{FRDN}}>0$. 
The following proposition combines this
constant approximation gap with the residue-class convergence established
above.

\begin{proposition}[Finite-dimensional classical prediction gap]
\label{prop:fdrn_tick_probability_uniform_average_gap}
For every finite-dimensional classical world model over
$(\mathcal A_{\mathrm{FRDN}},\mathcal Y_{\mathrm{FRDN}})$, in the sense of
Definition~\ref{def:classical-world-model},
\begin{align}
    \liminf_{T\to\infty}
    \frac1T
    \sum_{t=1}^{T}
    |S(t)-S_C(t)|
    \geq
    \kappa_{\mathrm{FRDN}}.
    \label{eq:fdrn-probability-gap}
\end{align}
\end{proposition}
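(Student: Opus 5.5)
We will combine the residue-class convergence of Lemma~\ref{lem:fdrn-classical-residue-convergence} with the Ces\`aro--Weyl stability of Lemma~\ref{lem:cesaro-weyl-stability}. The whole average over $t$ is split into the $L_N$ interlaced arithmetic subsequences $t=nL_N+r$. On each subsequence the classical prediction converges to a constant, while the true prediction keeps sampling the full profile of $f$.

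\textbf{Reduction to constant limits.} Fix a classical model $\mathsf C_N$ and set $L:=L_N$.

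First suppose $\Phi_C(t)>0$ for all $t$. Then Lemma~\ref{lem:fdrn-classical-residue-convergence} gives constants $c_r\in[0,1]$ with $S_C(nL+r)\to c_r$ for each $r\in\{0,\ldots,L-1\}$.

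Otherwise $\Phi_C(t_0)=0$ for some $t_0$. By monotonicity $\Phi_C(t)=0$ for all $t\ge t_0$, and by the fixed-reference convention $E_C(\ph_t)=z_{\mathrm{ref}}$ for those $t$. Hence $S_C(t)$ is eventually equal to the single constant $c:=\langle\mathbf 1,D_Tz_{\mathrm{ref}}\rangle\in[0,1]$. This is again convergence to a constant on every residue class, so both cases are handled uniformly.

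\textbf{Per-residue bound.} For each $r$, the true prediction is $S(nL+r)=f((nL+r)\alpha)$ whenever $nL+r\ge1$; the single term $t=0$ is irrelevant to Ces\`aro limits. Apply Lemma~\ref{lem:cesaro-weyl-stability} with $F=f$ (continuous and $2\pi$-periodic) and $c_n=S_C(nL+r)\to c_r$. This gives
\begin{align}
\lim_{K\to\infty}\frac1K\sum_{n=0}^{K-1}|S(nL+r)-S_C(nL+r)|=\frac{1}{2\pi}\int_0^{2\pi}|f(x)-c_r|\,dx\ge\kappa_{\mathrm{FRDN}},
\end{align}
where the inequality uses $c_r\in[0,1]$ and the definition~\eqref{eq:kappa-fdrn}.

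\textbf{Recombination.} For large $T$, write $T=KL+s$ with $0\le s<L$. Then
\begin{align}
\frac1T\sum_{t=1}^T|S(t)-S_C(t)|\ \ge\ \frac{1}{T}\sum_{r=0}^{L-1}\sum_{n=0}^{K-1}|S(nL+r)-S_C(nL+r)|\ -\ \frac{1}{T},
\end{align}
where the correction accounts for the $t=0$ term, which is bounded by $1$. Since $T/(KL)\to1$, the right-hand side tends to the average over $r$ of the per-residue limits. That average is at least $\kappa_{\mathrm{FRDN}}$, and taking $\liminf_{T\to\infty}$ yields~\eqref{eq:fdrn-probability-gap}. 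The terms with $t\in(KL,T]$ are nonnegative, so dropping them is harmless.

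\textbf{Main obstacle.} The only delicate point is making sure every model really fits the constant-limit framework. This means handling the degenerate case where $\Phi_C$ vanishes, checking that the Perron--Frobenius limits $c_r$ lie in $[0,1]$ so that $\kappa_{\mathrm{FRDN}}$ is the right lower bound, and noting that $\kappa_{\mathrm{FRDN}}$ does not depend on $N$. All of these follow from the earlier results. The heavy lifting, namely the convergence along residue classes modulo $L_N$ and the irrationality of $L_N\alpha/2\pi$ underlying Weyl equidistribution, is already provided by Lemmas~\ref{lem:fdrn-classical-residue-convergence} and~\ref{lem:cesaro-weyl-stability}.
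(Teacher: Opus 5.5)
Your proposal is correct and follows the paper's proof essentially step for step. You use residue-class convergence from Lemma~\ref{lem:fdrn-classical-residue-convergence}, apply the Ces\`aro--Weyl stability lemma on each class, and recombine while discarding the $t=0$ term and the incomplete final block. The only cosmetic difference is the degenerate case $\Phi_C(t_0)=0$: you absorb it into the residue-class framework (an eventually constant sequence converges on every class), whereas the paper handles it separately by applying Weyl equidistribution directly to the full sequence.
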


\begin{proof}
First suppose that $\Phi_C(t_0)=0$ for some $t_0$. Since $D_T$ is
column-substochastic, $\Phi_C(t+1)\le \Phi_C(t)$,
so $\Phi_C(t)=0$ for every $t\ge t_0$. By the fixed-reference convention following~\eqref{eq:encoder_classical}, we then have
$E_C(\ph_t)=z_{\rm ref}$ for all
$t\geq t_0$. Therefore
\begin{align}
S_C(t)
=
\mathrm{Pr}_C(y_T^W\mid z_{\rm ref},W)
=:c_{\rm ref}\in[0,1]
\end{align}
eventually. Since
$\alpha/(2\pi)\notin\mathbb Q$, Weyl equidistribution~\eqref{eq:fdrn-weyl-residue-average} gives
\begin{align}
\lim_{T\to\infty}
\frac1T\sum_{t=1}^T |S(t)-S_C(t)|
&=
\frac{1}{2\pi}\int_0^{2\pi}|f(x)-c_{\rm ref}|\,dx \nonumber \\
&\ge
\kappa_{\mathrm{FRDN}}.
\end{align}
Thus, the claim holds in this case. Hence, from now on, assume
$\Phi_C(t)>0$ for all $t$. Fix a residue class $r\in\{0,\ldots,L_N-1\}$. By
Lemma~\ref{lem:fdrn-classical-residue-convergence},
\begin{align}
    S_C(nL_N+r)
    \longrightarrow
    c_r.
\end{align}
Applying Lemma~\ref{lem:cesaro-weyl-stability} with
\begin{align}
    F(x)&:=f(x),
    &
    c_n&:=S_C(nL_N+r),
    &
    c&:=c_r,
\end{align}
gives
\begin{align}
    \lim_{K\to\infty}
    \frac1K
    \sum_{n=0}^{K-1}
    |S(nL_N+r)-S_C(nL_N+r)| \nonumber\\
    =
    \frac{1}{2\pi}
    \int_0^{2\pi}|f(x)-c_r|\,dx
    \geq
    \kappa_{\mathrm{FRDN}}.
    \label{eq:fdrn-prediction-error-residue-limit}
\end{align}

For $T=KL_N$, decomposition into residue classes gives
\begin{align}
    &\frac{1}{KL_N}
    \sum_{t=0}^{KL_N-1}|S(t)-S_C(t)|
    \nonumber\\
    &\qquad=
    \frac{1}{L_N}
    \sum_{r=0}^{L_N-1}
    \frac1K
    \sum_{n=0}^{K-1}
    |S(nL_N+r)-S_C(nL_N+r)|.
\end{align}
Taking $K\to\infty$ and using
\eqref{eq:fdrn-prediction-error-residue-limit} gives
\begin{align}
\liminf_{K\to\infty}
\frac{1}{KL_N}
\sum_{t=0}^{KL_N-1}|S(t)-S_C(t)|
\geq
\kappa_{\mathrm{FRDN}}.
\end{align}
Since the summands lie in $[0,1]$, shifting the indices from
$0,\ldots,KL_N-1$ to $1,\ldots,KL_N$ is asymptotically irrelevant.
For arbitrary $T$, let $K=\lfloor T/L_N\rfloor$. Then since all terms are nonnegative,
\begin{align}
\frac1T\sum_{t=1}^{T}|S(t)-S_C(t)|
&\geq
\frac{KL_N}{T}
\left(
\frac{1}{KL_N}
\sum_{t=1}^{KL_N}|S(t)-S_C(t)|
\right).
\end{align}
Since $KL_N/T\to1$, taking the lower limit proves
\eqref{eq:fdrn-probability-gap}.
\end{proof}

\subsection{Model-selected actions and decision loss}
\label{subsec:fdrn-model-selected-decisions}

We now study the action selected by an agent that acts greedily with respect
to a finite-dimensional classical world model on the all-Tick query
histories. To make the subsection self-contained, we repeat the decision
quantities introduced in the main text.

For a query history $h$ and world model $\mathsf M$, fix a
\emph{model-greedy action}
\begin{align}
\widehat a_M(h)
\in
\operatorname*{arg\,max}_{a\in\mathcal A_{\mathrm{FRDN}}}
Q_M^{\mathrm{opt}}(h,a).
\label{eq:fdrn-model-greedy-action}
\end{align}
Choose likewise a true-world optimal action
\begin{align}
a_\star(h)
\in
\operatorname*{arg\,max}_{a\in\mathcal A_{\mathrm{FRDN}}}
Q_\star^{\mathrm{opt}}(h,a).
\end{align}
The corresponding model and true-world decision margins are
\begin{align}
g_M(h)
&:=
Q_M^{\mathrm{opt}}
\!\left(
h,\widehat a_M(h)
\right)
-
\max_{a\neq\widehat a_M(h)}
Q_M^{\mathrm{opt}}(h,a),
\nonumber\\
g_\star(h)
&:=
Q_\star^{\mathrm{opt}}
\!\left(
h,a_\star(h)
\right)
-
\max_{a\neq a_\star(h)}
Q_\star^{\mathrm{opt}}(h,a).
\label{eq:fdrn-decision-margins}
\end{align}
Each margin is the difference between the largest and second-largest
action-values, and therefore vanishes when the two largest values are tied.

The true-world loss incurred by deploying the model-greedy action is
\begin{align}
\ell_M(h)
&:=
V_\star^{\mathrm{opt}}(h)
-
Q_\star^{\mathrm{opt}}
\!\left(
h,\widehat a_M(h)
\right).
\label{eq:fdrn-decision-loss}
\end{align}
This is the return sacrificed by the current model-selected action when all
subsequent actions are chosen optimally in the true world.

Along a reachable action--outcome trajectory
$\mathscr F=h_1h_2\cdots$, with prefixes
$\ph_t=h_1\cdots h_t$, define
\begin{align}
\overline{\ell}_M(\mathscr F)
&:=
\liminf_{T\to\infty}
\frac{1}{T}
\sum_{t=1}^{T}
\ell_M(\ph_t).
\label{eq:fdrn-mean-decision-loss}
\end{align}

\begin{definition}[Loss of decision resolution]
\label{def:fdrn-loss-decision-resolution}
A world model $\mathsf M$ loses decision resolution along a reachable
trajectory $\mathscr F=h_1h_2\cdots$ if there exists
$\varepsilon>0$ such that, for every $\delta>0$ and every
$T\in\mathbb N$, there is a $t\geq T$ satisfying
\begin{align}
g_\star(\ph_t)
&\geq
\varepsilon,
&
g_M(\ph_t)
&<
\delta.
\label{eq:fdrn-loss-decision-resolution}
\end{align}
\end{definition}

For a finite-dimensional classical world model along
$\mathscr F_{\mathrm{tick}}$, write
\begin{align}
\widehat a_C(t)
&:=
\widehat a_C(\ph_t),
&
\ell_C(t)
&:=
\ell_C(\ph_t),
\nonumber\\
g_C(t)
&:=
g_C(\ph_t),
&
g_\star(t)
&:=
g_\star(\ph_t).
\label{eq:fdrn-decision-shorthand}
\end{align}
At each prefix $\ph_t$, all three candidate actions are evaluated,
independently of the Wait action that extends the all-Tick trajectory.

Fix $0<\eta<1-\lambda$ and specialize the reward parameters to
\begin{align}
R_W
&:=
-(1+\eta),
&
C_W
&:=
1+\eta,
\nonumber\\
C_M
&:=
\eta,
&
R_P
&:=
1-\lambda-\eta,
\nonumber\\
C_P
&:=
\lambda+\eta.
\label{eq:fdrn-decision-rewards}
\end{align}
Thus, Wait gives reward $-(1+\eta)$ after either outcome, Maintain gives the
deterministic reward $-\eta$, and Probe gives reward
$1-\lambda-\eta$ after Tick and $-(\lambda+\eta)$ after Break. Maintain and
both Probe outcomes reset the clock to age zero. Consequently, Maintain and
Probe have the same discounted continuation term; their comparison depends
only on their expected immediate rewards. The next lemma computes the
resulting true-world optimal actions and decision margins.

\begin{lemma}
\label{lem:fdrn-true-decisions-all-gamma}
For the rewards in~\eqref{eq:fdrn-decision-rewards}, every
$\gamma\in[0,1)$ and $t\geq0$ satisfy
\begin{align}
&Q_\star^{\mathrm{opt}}(\ph_t,P)
-Q_\star^{\mathrm{opt}}(\ph_t,M)
=
S(t)-\lambda,
\nonumber\\
&Q_\star^{\mathrm{opt}}(\ph_t,W)
-Q_\star^{\mathrm{opt}}(\ph_t,M)
<
-\lambda.
\label{eq:fdrn-decision-wait-dominated}
\end{align}
Consequently, the true optimal-action set is
\begin{align}
\mathcal A_\star(t)
&:=
\operatorname*{arg\,max}_{a\in\mathcal A_{\mathrm{FRDN}}}
Q_\star^{\mathrm{opt}}(\ph_t,a)
\nonumber\\
&=
\begin{cases}
\{P\}, & S(t)>\lambda,\\
\{P,M\}, & S(t)=\lambda,\\
\{M\}, & S(t)<\lambda,
\end{cases}
\label{eq:fdrn-decision-true-action}
\end{align}
and
\begin{align}
g_\star(t)
=
|S(t)-\lambda|.
\label{eq:fdrn-true-decision-margin}
\end{align}
\end{lemma}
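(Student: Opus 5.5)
The plan is to use the Bellman relation \eqref{eq:true-optimal-bellman-relations} at the history $\ph_t$ together with the structure of the FRDN kernel \eqref{eq:fdrn-true-kernel}. The key observation is that Maintain and both Probe outcomes reset the clock. So every successor history $h' \in \{\ph_t M y_B^M,\ \ph_t P y_T^P,\ \ph_t P y_B^P\}$ has $\ell(h')=0$.

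\textbf{Step 1: the reset value $V_0$.} The true kernel depends on the history only through $\ell(h)$, so $V_\star^{\mathrm{opt}}$ depends only on $\ell(h)$. To make this rigorous, I will note that the Bellman operator $\mathcal T_\star$ maps functions of $\ell(h)$ to functions of $\ell(h)$. Its unique bounded fixed point therefore has this form, and I write $V_\star^{\mathrm{opt}}(h)=v(\ell(h))$. Set $V_0:=v(0)$.

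\textbf{Step 2: the Probe--Maintain comparison.} From the previous step,
\begin{align}
Q_\star^{\mathrm{opt}}(\ph_t,M)&=-\eta+\gamma V_0,\\
Q_\star^{\mathrm{opt}}(\ph_t,P)&=(R_P+C_P)S(t)-C_P+\gamma V_0 = S(t)-\lambda-\eta+\gamma V_0,
\end{align}
using $R_P+C_P=1$. Subtracting gives the first identity in \eqref{eq:fdrn-decision-wait-dominated}.

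\textbf{Step 3: Wait is dominated.} This is the main obstacle, since Wait leads to a non-reset history whose value must be controlled. Wait pays $-(1+\eta)$ deterministically, so
\begin{align}
Q_\star^{\mathrm{opt}}(\ph_t,W)=-(1+\eta)+\gamma\bigl[S(t)\,v(t+1)+(1-S(t))V_0\bigr].
\end{align}
I will bound $v$ above uniformly, using bounds on $S$ and on the rewards. Every action's immediate expected reward is at most $\max(-\eta,\ S-\lambda-\eta)\le 1-\lambda-\eta$, because $S\le1$. A sharper bound is also available for all $s$:
\begin{align}
v(s)\le \max_a r_\star + \gamma\sup v \le (1-\lambda-\eta)/(1-\gamma).
\end{align}
The cleaner route is to compare directly with Maintain:
\begin{align}
Q(W)-Q(M) = -1+\gamma S(t)\,\bigl(v(t+1)-V_0\bigr).
\end{align}
It therefore suffices to show $\gamma S(t)\,(v(t+1)-V_0)<1-\lambda$.

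To bound $v(s)-V_0$, I will use a coupling argument. From age $s$, any policy can be compared with the same policy started at age $0$. Until the first reset, the only gain from being at a nonzero age is in Probe's immediate reward (the $S$ term), and Wait is costly. Since $S\le 1$ always, each step's advantage is at most $1\cdot$ the probability difference, and after a reset the continuations coincide. More concretely, I expect the simplest bound to be
\begin{align}
v(s)\le \max(1-\lambda-\eta,\ \gamma v(s+1)-1-\eta+\dots)
\end{align}
and a one-step argument: from any age, the best non-Wait action yields at most $1-\lambda-\eta+\gamma V_0$. Against this, Maintain from age $0$ yields $V_0\ge -\eta+\gamma V_0$. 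Hence $V_0\ge -\eta/(1-\gamma)$.

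Combining these, I aim to prove by induction on the value-iteration iterates $\mathcal T_\star^n 0$ that
\begin{align}
v_n(s)-v_n(0)\le 1-\lambda \quad\text{for all } s.
\end{align}
Then $\gamma S(t)\,(v(t+1)-V_0)\le\gamma(1-\lambda)<1-\lambda$, which handles the strictness. If $\gamma=0$ the bound is immediate. The inductive step compares each action's Bellman term at age $s$ and at age $0$. Maintain terms are equal. Probe terms differ by $S(s)-S(0)\le 1-\lambda$, since $S(0)=\Phi(1)$. I need $S(0)\ge\lambda$ here, or else I will instead use $\eta$ slack; I will check this against the FRDN law. The Wait term at $s$ is dominated by the Maintain term at $0$ by the same inductive inequality.

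\textbf{Step 4: conclusion.} With Wait strictly below Maintain, the argmax set \eqref{eq:fdrn-decision-true-action} follows from the sign of $S(t)-\lambda$. In the case $S(t)>\lambda$ the second-best action is $M$, and in the case $S(t)<\lambda$ it is $P$. Either way the margin is $|S(t)-\lambda|$, and when $S(t)=\lambda$ both give zero.
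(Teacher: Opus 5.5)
Your route differs from the paper's and can be made to work, but one sub-step of your induction fails as written.

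\textbf{Comparison with the paper.} The paper proves no a~priori bound on $v$. It writes down the candidate $U(h)=u_{\ell(h)}$ with $u_t=-\eta+\gamma u_0+(S(t)-\lambda)_+$ and $u_0=(-\eta+(S(0)-\lambda)_+)/(1-\gamma)$. It checks $\mathcal T_\star U=U$ action by action and concludes by uniqueness of the bounded fixed point. In its Wait check, the quantity you want to control appears as $u_{t+1}-u_0=(S(t+1)-\lambda)_+-(S(0)-\lambda)_+\le 1-\lambda$, read off from the closed form.

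You instead reduce to functions of age and prove the invariant $v(s)-v(0)\le 1-\lambda$ by induction on $\mathcal T_\star^n0$. Your Step~1 is correct, since $\mathcal T_\star$ preserves the closed subspace of bounded functions of $\ell(h)$. Your approach avoids having to guess the fixed point. The paper's guess-and-verify is shorter and needs neither the age reduction nor a limit argument. It also immediately yields the explicit optimal values reused in Lemma~\ref{lem:fdrn-all-gamma-true-value}.

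\textbf{The step that fails.} You compare Probe at age $s$ with Probe at age $0$, which gives a difference of $S(s)-S(0)$. This is at most $1-\lambda$ only if $S(0)\ge\lambda$, and that is false in general. Indeed $S(0)=\Phi(1)=\sum_{\ell\ge1}\lambda^\ell\sin^2(\ell\alpha/2)=O(\alpha^2)$ for small admissible $\alpha$. Your fallback of ``$\eta$ slack'' cannot help, because $\eta$ cancels in every such difference.

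The fix is to compare Probe at age $s$ with \emph{Maintain} at age $0$, as you already do for Wait. Both have continuation $\gamma v_n(0)$, and $v_{n+1}(0)\ge-\eta+\gamma v_n(0)$. So the three backups at age $s$ exceed $v_{n+1}(0)$ by at most the following amounts:
\begin{itemize}
\item Probe, $S(s)-\lambda-\eta+\gamma v_n(0)$: at most $S(s)-\lambda\le1-\lambda$, with no condition on $S(0)$.
\item Maintain: at most $0$.
\item Wait: at most $-1+\gamma(1-\lambda)<0$, using the inductive hypothesis at $s+1$.
\end{itemize}
The induction then closes, and the bound passes to the limit by uniform convergence.

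\textbf{Two smaller points.}
\begin{itemize}
\item The crude bounds earlier in Step~3, $v\le(1-\lambda-\eta)/(1-\gamma)$ and $V_0\ge-\eta/(1-\gamma)$, only give $\gamma S(t)(v(t+1)-V_0)\le\gamma(1-\lambda)/(1-\gamma)$. This is not below $1-\lambda$ once $\gamma\ge1/2$. So the induction is essential, and the coupling sketch should be dropped.
\item In Step~4, naming $P$ as runner-up when $S(t)<\lambda$ requires $Q_\star^{\mathrm{opt}}(\ph_t,W)<Q_\star^{\mathrm{opt}}(\ph_t,P)$. This follows from $Q_\star^{\mathrm{opt}}(\ph_t,W)-Q_\star^{\mathrm{opt}}(\ph_t,M)<-\lambda\le S(t)-\lambda$, using $S(t)\ge0$. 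That is exactly why the lemma states the bound with $-\lambda$ rather than $0$.
\end{itemize}
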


\begin{proof}
For $t\geq0$, define
\begin{align}
H_t
&:=
\bigl(S(t)-\lambda\bigr)_+ .
\label{eq:fdrn-Ht-definition}
\end{align}
We construct a candidate Bellman fixed point whose value depends on a history
only through its clock age. At age zero, we define
\begin{align}
u_0
&:=
\frac{-\eta+H_0}{1-\gamma},
\label{eq:fdrn-u0-definition}
\end{align}
and for $t\geq1$, define
\begin{align}
u_t
&:=
-\eta+\gamma u_0+H_t.
\label{eq:fdrn-ut-definition}
\end{align}
By the defining equation for $u_0$, the same formula also holds at $t=0$.

Given the above quantities, we now define our candidate function that will be the fixed-point equation of the Bellman value function equation~\eqref{eq:optimal-bellman-operator} with the optimal Bellman operator $\mathcal{T}_\star$ defined in~\eqref{eq:true-optimal-bellman-operator}. For a history $h$, let
\begin{align}
U(h)
&:=
u_{\ell(h)}.
\label{eq:fdrn-U-definition}
\end{align}
Since $0\leq H_t\leq1-\lambda$, the function $U$ is bounded. To verify that $U$ is the optimal value, it is useful to separate the
action-specific terms entering the Bellman operator. For a bounded
 function $F$ and an action $a$, define
\begin{align}
\bigl(\mathcal B_{\star,a}F\bigr)(h)
&:=
\sum_{y\in\mathcal Y_{\mathrm{FRDN}}}
\mathrm{Pr}_\star(y\mid h,a)
\left[
r_y+\gamma F(hay)
\right].
\label{eq:fdrn-action-specific-bellman-backup}
\end{align}
Then
\begin{align}
(\mathcal T_\star F)(h)
=
\max_{a\in\mathcal A_{\mathrm{FRDN}}}
\bigl(\mathcal B_{\star,a}F\bigr)(h).
\end{align}
Later we will identify $\mathcal B_{\star,a}U$ with
$Q_\star^{\mathrm{opt}}$, only after we prove that
$U=V_\star^{\mathrm{opt}}$.

Fix a history $h$ with $\ell(h)=t$. Maintain gives the deterministic reward
$-\eta$ and resets the clock to age zero. Its continuation value under $U$
is therefore $u_0$, so
\begin{align}
\bigl(\mathcal B_{\star,M}U\bigr)(h)
&=
-\eta+\gamma u_0.
\label{eq:fdrn-maintain-value-in-proof}
\end{align}

For Probe, Tick occurs with probability $S(t)$ and gives reward
$1-\lambda-\eta$, whereas Break occurs with probability $1-S(t)$ and gives
reward $-(\lambda+\eta)$. Both outcomes reset the clock to age zero. Hence
\begin{align}
\bigl(\mathcal B_{\star,P}U\bigr)(h)
&=
S(t)
\left[
1-\lambda-\eta+\gamma u_0
\right]
\nonumber\\
&\quad+
\bigl(1-S(t)\bigr)
\left[
-\lambda-\eta+\gamma u_0
\right]
\nonumber\\
&=
-\eta+\gamma u_0+S(t)-\lambda.
\label{eq:fdrn-probe-value-in-proof}
\end{align}
Subtracting the Maintain~\eqref{eq:fdrn-maintain-value-in-proof} gives
\begin{align}
\bigl(\mathcal B_{\star,P}U\bigr)(h)
-
\bigl(\mathcal B_{\star,M}U\bigr)(h)
&=
S(t)-\lambda.
\label{eq:fdrn-probe-minus-maintain}
\end{align}
Therefore, the larger of the Maintain and Probe is
\begin{align}
\max\left\{
\bigl(\mathcal B_{\star,M}U\bigr)(h),
\bigl(\mathcal B_{\star,P}U\bigr)(h)
\right\}
&=
-\eta+\gamma u_0
+
\bigl(S(t)-\lambda\bigr)_+
\nonumber\\
&=
-\eta+\gamma u_0+H_t
\nonumber\\
&=
u_t.
\label{eq:fdrn-max-maintain-probe}
\end{align}

It remains to show that Wait never exceeds this value. Wait gives reward
$-(1+\eta)$ after either outcome. A Tick, occurring with probability $S(t)$,
increases the clock age to $t+1$, whereas a Break resets it to zero. Thus,
\begin{align}
\bigl(\mathcal B_{\star,W}U\bigr)(h)
&=
-(1+\eta)
+
\gamma
\left[
S(t)u_{t+1}
+
\bigl(1-S(t)\bigr)u_0
\right].
\label{eq:fdrn-wait-value-in-proof}
\end{align}
Subtracting the Maintain and using
$u_{t+1}-u_0=H_{t+1}-H_0$ gives
\begin{align}
&\bigl(\mathcal B_{\star,W}U\bigr)(h)
-
\bigl(\mathcal B_{\star,M}U\bigr)(h)
=
-1+\gamma S(t)(u_{t+1}-u_0)
\nonumber\\
&\qquad=
-1+\gamma S(t)(H_{t+1}-H_0).
\end{align}
Since $0\leq S(t)\leq1$, $H_0\geq0$, and
$H_{t+1}\leq1-\lambda$,
\begin{align}
\bigl(\mathcal B_{\star,W}U\bigr)(h)
-
\bigl(\mathcal B_{\star,M}U\bigr)(h)
&\leq
-1+\gamma(1-\lambda)
\nonumber\\
&<
-1+(1-\lambda)
\nonumber\\
&=
-\lambda.
\label{eq:fdrn-wait-uniform-dominance}
\end{align}
Thus, Wait is strictly below Maintain. Combining this with
\eqref{eq:fdrn-max-maintain-probe}, we obtain
\begin{align}
(\mathcal T_\star U)(h)
&=
\max_{a\in\mathcal A_{\mathrm{FRDN}}}
\bigl(\mathcal B_{\star,a}U\bigr)(h)
\nonumber\\
&=
u_t
=
U(h).
\end{align}
Hence $U$ is a bounded fixed point of the true-world Bellman optimality
operator. By uniqueness of the bounded fixed point,
\begin{align}
U(h)
=
V_\star^{\mathrm{opt}}(h)
\qquad
\text{for every }h\in\mathscr H.
\label{eq:fdrn-U-equals-optimal-value}
\end{align}

We may now identify the action-specific operators $\mathcal B_{\star,a}$ with the optimal
action-values through
\begin{align}
Q_\star^{\mathrm{opt}}(h,a)
&=
\bigl(\mathcal B_{\star,a}U\bigr)(h),
\end{align}
for which we use the expression in~\eqref{eq:true-optimal-action-value}. Equations~\eqref{eq:fdrn-probe-minus-maintain} and
\eqref{eq:fdrn-wait-uniform-dominance} therefore give
\begin{align}\label{eq:action_difference_frdn}
Q_\star^{\mathrm{opt}}(\ph_t,P)
-
Q_\star^{\mathrm{opt}}(\ph_t,M)
&=
S(t)-\lambda,
\nonumber\\
Q_\star^{\mathrm{opt}}(\ph_t,W)
-
Q_\star^{\mathrm{opt}}(\ph_t,M)
&<
-\lambda.
\end{align}
Moreover, because $S(t)\geq0$,
\begin{align}
Q_\star^{\mathrm{opt}}(\ph_t,W)
-
Q_\star^{\mathrm{opt}}(\ph_t,M)
&<
-\lambda \leq
S(t)-\lambda.
\end{align}
Thus using~\eqref{eq:action_difference_frdn} we determine that Wait is also strictly below Probe, and the two largest action-values are always those of Probe and Maintain.

Consequently using~\eqref{eq:action_difference_frdn}, Probe is uniquely optimal when $S(t)>\lambda$, Maintain is
uniquely optimal when $S(t)<\lambda$, and they are tied when
$S(t)=\lambda$. This proves
\eqref{eq:fdrn-decision-true-action}. Since the top two action-values are
those of Probe and Maintain, their separation is
\begin{align}
g_\star(t)
&=
\left|
Q_\star^{\mathrm{opt}}(\ph_t,P)
-
Q_\star^{\mathrm{opt}}(\ph_t,M)
\right|
\nonumber\\
&=
|S(t)-\lambda|,
\end{align}
which proves~\eqref{eq:fdrn-true-decision-margin}.
\end{proof}

We next determine how often the true optimal action switches. From
~\eqref{eq:periodic_funtion_tick},
\begin{align}
f(x)-\lambda
&=
\frac{
2\lambda B\sin(\alpha/2)
\sin(x+\varphi+\alpha/2)
}{
A-B\cos(x+\varphi)
}.
\label{eq:fdrn-threshold-sine}
\end{align}
The denominator is strictly positive. Moreover,
$\sin(\alpha/2)\neq0$ because $\alpha/\pi$ is irrational. Thus,
$f(x)-\lambda$ is a nonzero multiple of a shifted sine divided by a positive
function. The sets on which it is positive and negative each occupy one
half of a period. Since $S(t)=f(t\alpha)$ for $t\geq1$, Weyl
equidistribution~\eqref{eq:fdrn-weyl-residue-average} gives, for every
$r\in\{0,\ldots,L_N-1\}$,
\begin{align}
\lim_{K\to\infty}
\frac{1}{K}
\sum_{n=0}^{K-1}
\mathbbm 1\!\left\{S(nL_N+r)>\lambda\right\}
&=
\frac{1}{2},
\nonumber\\
\lim_{K\to\infty}
\frac{1}{K}
\sum_{n=0}^{K-1}
\mathbbm 1\!\left\{S(nL_N+r)<\lambda\right\}
&=
\frac{1}{2}.
\label{eq:fdrn-half-tick-residue}
\end{align}
The possible term with $nL_N+r=0$ does not affect either limit. Irrationality
also implies that $S(t)=\lambda$ for at most one integer $t\geq1$, so ties
do not affect the asymptotic frequencies.

A small result we will need for our proof is the continuity of the action-value and value functions with respect to the distributions of classical states.

\begin{lemma}[Continuity of finite-classical value functions]
\label{lem:classical-planning-values-continuous}
Fix a finite-dimensional classical world model and
$\gamma\in[0,1)$. Then
$v_C^{\mathrm{opt}}$ is continuous on $\Delta_{N-1}$, and
$q_C^{\mathrm{opt}}(\cdot,a)$ is continuous for every
$a\in\mathcal A$.
\end{lemma}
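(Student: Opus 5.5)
The plan is to remove the only source of discontinuity, the renormalization in $T_{C,y}^a(z)=D_y^{(a)}z/\langle\mathbf 1,D_y^{(a)}z\rangle$, by lifting the Bellman recursion~\eqref{eq:optimal-bellman-operator} to unnormalized vectors. On a zero-probability branch the update jumps to $z_{\mathrm{ref}}$. However, in the Bellman operator the continuation term always appears weighted by $\mathrm{Pr}_C(y\mid z,a)=\langle\mathbf 1,D_y^{(a)}z\rangle$. So the natural object is the weighted product
\begin{align}
\langle\mathbf 1,D_y^{(a)}z\rangle\, v\!\left(\frac{D_y^{(a)}z}{\langle\mathbf 1,D_y^{(a)}z\rangle}\right)=w\bigl(D_y^{(a)}z\bigr),
\end{align}
where $w$ is the degree-one homogeneous extension of $v$ to the cone $\mathbb R^N_{\geq0}$.

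\emph{Step 1: the function space.} Let $\mathcal W$ be the set of functions $w:\mathbb R^N_{\geq0}\to\mathbb R$ that are continuous and positively homogeneous of degree one, with $w(0)=0$. Equip it with the norm $\|w\|:=\sup_{z\in\Delta_{N-1}}|w(z)|$. Homogeneity gives $|w(x)|\le\|w\|\langle\mathbf 1,x\rangle$, which is what makes continuity at the origin automatic. The restriction map is a bijection between $\mathcal W$ and $C(\Delta_{N-1})$, with inverse $v\mapsto \langle\mathbf 1,x\rangle v(x/\langle\mathbf 1,x\rangle)$. Continuity of this inverse away from $0$ is clear, and at $0$ it follows from the bound above. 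Hence $\mathcal W$ is a Banach space.

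\emph{Step 2: the lifted operator is a contraction.} Define
\begin{align}
(\widetilde{\mathcal T}w)(x):=\max_{a\in\mathcal A}\Bigl[\sum_{y}\langle\mathbf 1,D_y^{(a)}x\rangle r_y+\gamma\sum_y w\bigl(D_y^{(a)}x\bigr)\Bigr].
\end{align}
Each bracket is a composition of continuous, degree-one homogeneous maps, because $x\mapsto D_y^{(a)}x$ is linear and nonnegative. A finite maximum preserves both properties, so $\widetilde{\mathcal T}$ maps $\mathcal W$ into itself. For the contraction estimate, $|w(D_y^{(a)}z)-w'(D_y^{(a)}z)|\le\|w-w'\|\langle\mathbf 1,D_y^{(a)}z\rangle$. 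Column stochasticity of $\sum_yD_y^{(a)}$ then gives $\|\widetilde{\mathcal T}w-\widetilde{\mathcal T}w'\|\le\gamma\|w-w'\|$. Banach's fixed-point theorem yields a unique fixed point $w^\ast\in\mathcal W$.

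\emph{Step 3: identification.} Restrict $w^\ast$ to the simplex and set $v:=w^\ast|_{\Delta_{N-1}}$. On branches with $\langle\mathbf 1,D_y^{(a)}z\rangle>0$, homogeneity gives $w^\ast(D_y^{(a)}z)=\mathrm{Pr}_C(y\mid z,a)\,v(T_{C,y}^a(z))$. On branches with zero probability, $D_y^{(a)}z=0$ because the vector is nonnegative with zero sum. Both sides then vanish, whatever convention $z_{\mathrm{ref}}$ uses. Therefore $v$ is a bounded fixed point of $\mathcal T_C$, and uniqueness of the bounded fixed point gives $v=u_C=v_C^{\mathrm{opt}}$. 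So $v_C^{\mathrm{opt}}$ is continuous.

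The same rewriting applied to~\eqref{eq:bellman_action_value_optimal} gives
\begin{align}
q_C^{\mathrm{opt}}(z,a)=\sum_y\langle\mathbf 1,D_y^{(a)}z\rangle r_y+\gamma\sum_y w^\ast\bigl(D_y^{(a)}z\bigr).
\end{align}
This is continuous in $z$.

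The main obstacle is the boundary behaviour of Step 3, at points where some branch probability tends to zero and the normalized update has no limit. I expect the homogeneity bound $|w(x)|\le\|w\|\langle\mathbf 1,x\rangle$ to handle it, and I will check carefully that $\mathcal W$ is closed under uniform limits on the simplex.
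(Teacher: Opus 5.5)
Your proposal is correct and is essentially the paper's argument in homogeneous coordinates: your term $w(D_y^{(a)}z)$ is exactly the paper's weighted branch term $\mathrm{Pr}_C(y\mid z,a)\,v(T_{C,y}^{a}(z))$, and your bound $|w(x)|\le\|w\|\,\langle\mathbf 1,x\rangle$ near the origin is the paper's estimate $|F^v_{a,y}(z_n)|\le\|v\|_\infty\,\mathrm{Pr}_C(y\mid z_n,a)\to0$ at zero-probability points. Likewise, your Banach fixed point in $\mathcal W\cong C(\Delta_{N-1})$, together with uniqueness of the bounded fixed point, does the job of the paper's observation that value iteration from the zero function stays continuous and converges uniformly to $v_C^{\mathrm{opt}}$; in both versions, continuity of $q_C^{\mathrm{opt}}(\cdot,a)$ then follows from the Bellman relation.
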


\begin{proof}
Let $v$ be a bounded continuous function on $\Delta_{N-1}$. For each
action $a$ and outcome $y$, consider the weighted branch term
\begin{align}
F_{a,y}^{v}(z)
:=
\mathrm{Pr}_C(y\mid z,a)\,
v\!\left(T_{C,y}^{a}(z)\right).
\end{align}
It is continuous wherever $\mathrm{Pr}_C(y\mid z,a)>0$. If $z_n\to z$ and
$\mathrm{Pr}_C(y\mid z,a)=0$, then
\begin{align}
\left|F_{a,y}^{v}(z_n)\right|
\leq
\|v\|_\infty \mathrm{Pr}_C(y\mid z_n,a)
\longrightarrow
0.
\end{align}
Hence every weighted branch term extends continuously through
zero-probability points.

The expected one-step reward $r_C(\cdot,a)$ is linear and therefore
continuous. It follows that the Bellman optimality operator maps continuous
functions to continuous functions. Starting from the zero function, its
iterates are continuous and converge uniformly to the unique fixed point
$v_C^{\mathrm{opt}}$. Thus $v_C^{\mathrm{opt}}$ is continuous.
Equation~\eqref{eq:bellman_action_value_optimal} then implies that
$q_C^{\mathrm{opt}}(\cdot,a)$ is continuous for every action $a$.
\end{proof}

We now combine continuity with the residue-class convergence of finite
classical memories. On each residue class, the model's action-value vector
converges. Its limiting vector either has a tie at the top, in which case the
predicted decision margin vanishes, or has a unique maximizer, in which case
the model eventually selects one fixed action. The latter cannot track the
true switching between Probe and Maintain.

\begin{theorem}[Limits of finite classical decisions]
\label{thm:fdrn-greedy-action-obstruction}
For every discount factor $\gamma\in[0,1)$, every $N\in\mathbb N$, every
$N$-dimensional classical world model over
$(\mathcal A_{\mathrm{FRDN}},\mathcal Y_{\mathrm{FRDN}})$, and every
model-greedy selection in~\eqref{eq:fdrn-model-greedy-action}, at least one
of the following alternatives holds:

\begin{itemize}

\item The decision margin of the classical model becomes arbitrarily small,
\begin{align}
\liminf_{t\to\infty}g_C(t)
=
0.
\label{eq:fdrn-vanishing-greedy-gap}
\end{align}

\item The model-selected action is truly suboptimal on at least half of the
prefixes of $\mathscr F_{\mathrm{tick}}$ in the long run,
\begin{align}
\liminf_{T\to\infty}
\frac1T
\sum_{t=1}^{T}
\mathbbm 1
\left\{
\widehat a_C(t)\notin\mathcal A_\star(t)
\right\}
\geq
\frac12.
\label{eq:fdrn-wrong-action-frequency}
\end{align}

\end{itemize}

Moreover, the first alternative holds if and only if there is a residue
$r\in\{0,\ldots,L_N-1\}$ for which
\begin{align}
g_C(nL_N+r)
\longrightarrow
0.
\label{eq:fdrn-greedy-gap-residue-zero}
\end{align}
\end{theorem}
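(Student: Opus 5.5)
The argument works one residue class modulo $L_N$ at a time, using the residue-class convergence of classical memories. First dispose of the degenerate case $\Phi_C(t_0)=0$: then $E_C(\ph_t)=z_{\mathrm{ref}}$ eventually, so the action-value vector is eventually constant. This is the $L_N=1$ instance of the argument below. Otherwise, Corollary~\ref{cor:fdrn-classical-belief-residue-convergence} gives $E_C(\ph_{nL_N+r})\to z_{\infty,r}$ for each $r$. Lemma~\ref{lem:classical-planning-values-continuous} then shows that the vector $\bigl(Q_C^{\mathrm{opt}}(\ph_{nL_N+r},a)\bigr)_{a\in\mathcal A}=\bigl(q_C^{\mathrm{opt}}(E_C(\ph_{nL_N+r}),a)\bigr)_a$ converges to $\bigl(q_C^{\mathrm{opt}}(z_{\infty,r},a)\bigr)_a=:(q_{r,a})_a$. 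The margin $g_C$ equals the largest minus the second-largest entry, which is a continuous function of the vector and does not depend on which maximizer is selected. Hence $g_C(nL_N+r)\to g_r$, the top-two gap of $(q_{r,a})_a$.

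\emph{Dichotomy.} Either some $g_r=0$, or every $g_r>0$. In the first case $g_C(nL_N+r)\to0$ along that residue class, so $\liminf_t g_C(t)=0$, which is alternative one. In the second case each limit vector has a unique maximizer $a_r$. Convergence gives $\widehat a_C(nL_N+r)=a_r$ for all large $n$, whatever greedy tie-breaking is used, since ties cannot occur near a strict maximum. Moreover $\liminf_t g_C(t)\ge\min_r g_r>0$, because every $t$ lies in one of finitely many residue classes, each with a positive limit. This also establishes the ``moreover'' equivalence: $\liminf_t g_C(t)=0$ holds iff some $g_r=0$, and that holds iff $g_C(nL_N+r)\to0$ for some $r$.

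\emph{Wrong-action frequency in the second case.} Fix $r$, and use Lemma~\ref{lem:fdrn-true-decisions-all-gamma}. If $a_r=W$, then $a_r\notin\mathcal A_\star(t)$ for all $t$, so the frequency is $1$. If $a_r=P$, the action is wrong whenever $S(t)<\lambda$. If $a_r=M$, it is wrong whenever $S(t)>\lambda$. By \eqref{eq:fdrn-half-tick-residue}, along the class $nL_N+r$ each of these events has limiting frequency $\tfrac12$. The finitely many $n$ before $\widehat a_C$ stabilizes do not affect Cesàro limits. So for each $r$,
\[
\liminf_{K\to\infty}\frac1K\sum_{n<K}\mathbbm 1\{\widehat a_C(nL_N+r)\notin\mathcal A_\star(nL_N+r)\}\ge\tfrac12 .
\]
Averaging over $r$ with $T=KL_N$, and then passing to general $T$ with $K=\lfloor T/L_N\rfloor$ exactly as in the proof of Proposition~\ref{prop:fdrn_tick_probability_uniform_average_gap}, yields \eqref{eq:fdrn-wrong-action-frequency}.

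\emph{Main obstacle.} The delicate step is the degenerate and boundary bookkeeping. One must check that convergence of beliefs indeed passes through $q_C^{\mathrm{opt}}$; this is where the continuity lemma, including continuity at zero-probability branches, is needed. One must also confirm that the half-frequency statement holds for every residue class, which needs Weyl equidistribution with the irrational step $L_N\alpha$, and that the at most one tie $S(t)=\lambda$ is asymptotically negligible. I expect these to be straightforward given the earlier lemmas.
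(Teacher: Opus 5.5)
Your proposal is correct and follows essentially the same route as the paper. Both arguments use residue-class convergence of the encoded beliefs (with the $z_{\mathrm{ref}}$ convention in the degenerate case), continuity of $q_C^{\mathrm{opt}}$ to get convergence of the action-value vectors and their top-two gap, the dichotomy on whether some $g_r$ vanishes, and then Lemma~\ref{lem:fdrn-true-decisions-all-gamma} with the half-frequency identity~\eqref{eq:fdrn-half-tick-residue} on each residue class. The only difference is cosmetic: you prove the equivalence via $\liminf_t g_C(t)\ge\min_r g_r$ rather than the paper's equality $\liminf_t g_C(t)=\min_r g_r^\infty$, and your inequality suffices.
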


\begin{proof}
If the model eventually assigns zero probability to the all-Tick branch, the
fixed-reference convention makes $E_C(\ph_t)$ eventually equal to
$z_{\mathrm{ref}}$. Otherwise,
Corollary~\ref{cor:fdrn-classical-belief-residue-convergence} applies.
Hence, in either case, for every residue
$r\in\{0,\ldots,L_N-1\}$ there is a distribution over states $z_{\infty,r}\in\Delta_{N-1}$ such that
\begin{align}
E_C(\ph_{nL_N+r})
\longrightarrow
z_{\infty,r}.
\end{align}

Using the bridge relation~\eqref{eq:model-optimal-q-bridge} and
Lemma~\ref{lem:classical-planning-values-continuous}, the corresponding
action-value vectors satisfy
\begin{align}
\mathbf q_{n,r}
&:=
\left(
Q_C^{\mathrm{opt}}(\ph_{nL_N+r},a)
\right)_{a\in\mathcal A_{\mathrm{FRDN}}}
\nonumber\\
&\longrightarrow
\left(
q_C^{\mathrm{opt}}(z_{\infty,r},a)
\right)_{a\in\mathcal A_{\mathrm{FRDN}}}
=:
\mathbf q_r.
\label{eq:fdrn-decision-classical-q-residue-limit}
\end{align}
Let $g_r^\infty$ be the difference between the largest and second-largest
components of $\mathbf q_r$. The top two gap is a continuous function of a
finite vector, so
\begin{align}
g_C(nL_N+r)
\longrightarrow
g_r^\infty.
\label{eq:fdrn-greedy-gap-residue-limit}
\end{align}
Since the number of residue classes is finite,
\begin{align}
\liminf_{t\to\infty}g_C(t)
=
\min_{0\leq r<L_N}g_r^\infty.
\label{eq:fdrn-greedy-gap-liminf}
\end{align}
Equations~\eqref{eq:fdrn-greedy-gap-residue-limit} and
~\eqref{eq:fdrn-greedy-gap-liminf} prove the final equivalence in the
theorem. In particular, if $g_r^\infty=0$ for some $r$, the first
alternative holds.

Suppose instead that $g_r^\infty>0$ for every residue. Then $\mathbf q_r$ has a unique maximizing action, denoted by $a_r$, which is the model-greedy action~\eqref{eq:fdrn-model-greedy-action}. Thus, convergence implies
\begin{align}
\widehat a_C(nL_N+r)
=
a_r
\end{align}
for all sufficiently large $n$.

If $a_r=P$, Lemma~\ref{lem:fdrn-true-decisions-all-gamma} shows that the
selected action is suboptimal whenever
$S(nL_N+r)<\lambda$. If $a_r=M$, it is suboptimal whenever
$S(nL_N+r)>\lambda$. If $a_r=W$, it is suboptimal for every $n$.
Equation~\eqref{eq:fdrn-half-tick-residue} therefore shows that the
selected action is suboptimal on at least half of the histories in every
residue class. Averaging over the residue classes first for
$T=KL_N$, and then observing that an incomplete final block is negligible,
proves~\eqref{eq:fdrn-wrong-action-frequency}.
\end{proof}

The first alternative becomes an operational failure only if the predicted
margin vanishes away from true-world ties. The following corollary shows
that this is precisely what happens.

\begin{corollary}[Loss of decision resolution]
\label{cor:fdrn-loss-decision-resolution}
Under the hypotheses of
Theorem~\ref{thm:fdrn-greedy-action-obstruction}, suppose that
$\liminf_{t\to\infty}g_C(t)=0$. Then the classical world model loses decision resolution along
$\mathscr F_{\mathrm{tick}}$ in the sense of
Definition~\ref{def:fdrn-loss-decision-resolution}. More precisely, there is a
constant $\varepsilon_{\mathrm{res}}>0$ and an increasing sequence
$t_j\to\infty$ such that
\begin{align}
g_\star(t_j)
&\geq
\varepsilon_{\mathrm{res}},
&
g_C(t_j)
&\longrightarrow
0.
\label{eq:fdrn-classical-loss-decision-resolution}
\end{align}
The constant $\varepsilon_{\mathrm{res}}$ depends only on the true-world
function $f$ and the threshold $\lambda$, and is therefore independent of
$\gamma$, the classical memory dimension, and the chosen classical model.
\end{corollary}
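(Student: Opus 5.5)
The plan is to use the residue-class structure from the proof of Theorem~\ref{thm:fdrn-greedy-action-obstruction}. By \eqref{eq:fdrn-greedy-gap-residue-zero}, the hypothesis $\liminf_{t\to\infty}g_C(t)=0$ gives a residue $r\in\{0,\ldots,L_N-1\}$ with $g_C(nL_N+r)\to0$ as $n\to\infty$. It therefore suffices to find infinitely many $n$ in this residue class at which the true margin $g_\star(nL_N+r)=|S(nL_N+r)-\lambda|$ (Lemma~\ref{lem:fdrn-true-decisions-all-gamma}) is bounded below by a constant independent of $N$, $r$, $\gamma$ and the model. Taking $t_j$ to be these times, in increasing order, then yields \eqref{eq:fdrn-classical-loss-decision-resolution}. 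Loss of decision resolution in the sense of Definition~\ref{def:fdrn-loss-decision-resolution} follows immediately: for any $\delta>0$ and $T$, pick $j$ large enough that $t_j\geq T$ and $g_C(t_j)<\delta$.

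To define the constant, I would use the explicit form \eqref{eq:fdrn-threshold-sine}. The function $f(x)-\lambda$ equals a nonzero constant times $\sin(x+\varphi+\alpha/2)$, divided by the positive function $A-B\cos(x+\varphi)$. Let $I\subset[0,2\pi)$ be the arc where $|\sin(x+\varphi+\alpha/2)|\geq 1/2$; this arc has total length $4\pi/3>0$. On $I$,
\begin{align}
|f(x)-\lambda|\geq \frac{\lambda B|\sin(\alpha/2)|}{A+B}=:\varepsilon_{\mathrm{res}}>0 ,
\end{align}
and this bound depends only on $f$ and $\lambda$. (Alternatively, one could take any level $\varepsilon$ for which $\{x:|f(x)-\lambda|\geq\varepsilon\}$ has positive measure.)

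The key step is to show that $(nL_N+r)\alpha \bmod 2\pi$ lands in $I$ for infinitely many $n$. This follows from Weyl equidistribution \eqref{eq:fdrn-weyl-residue-average}, applied to the indicator of $I$, which is Riemann integrable. The asymptotic frequency of such $n$ equals $|I|/(2\pi)=2/3>0$, so there are infinitely many of them. We must also exclude the single index $nL_N+r=0$, where $S(0)$ is not given by $f$; this is harmless.

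Along this infinite subsequence the true margin is at least $\varepsilon_{\mathrm{res}}$. Since the whole residue class satisfies $g_C\to0$, so does the subsequence. The only mild obstacle is ensuring the chosen times lie in the same residue class where the model margin vanishes. Equidistribution within each residue class handles exactly this, so no density or independence issues arise.
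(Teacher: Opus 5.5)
Your proof is correct and follows the paper's argument. You select the residue class $r$ with $g_C(nL_N+r)\to0$ from the theorem's final assertion, then apply Weyl equidistribution of $(nL_N+r)\alpha$ to the indicator of a set where $|f-\lambda|$ is bounded below, which extracts the times $t_j$ after discarding $t=0$. The only difference is how the constant is chosen: the paper takes $\varepsilon_{\mathrm{res}}=\tfrac12\max_x|f(x)-\lambda|$ and gets the interval by continuity, whereas you read off the explicit bound $\lambda B|\sin(\alpha/2)|/(A+B)$ from \eqref{eq:fdrn-threshold-sine}, which works equally well.
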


\begin{proof}
By the final assertion of
Theorem~\ref{thm:fdrn-greedy-action-obstruction}, choose a residue
$r\in\{0,\ldots,L_N-1\}$ such that
\begin{align}
g_C(nL_N+r)
\longrightarrow
0.
\end{align}
It remains to choose a subsequence in this residue class on which the true
decision margin remains positive.

Equation~\eqref{eq:fdrn-threshold-sine} shows that
$f-\lambda$ is not identically zero. Define
\begin{align}
\varepsilon_{\mathrm{res}}
&:=
\frac{1}{2}
\max_{x\in[0,2\pi]}
|f(x)-\lambda|.
\end{align}
Then $\varepsilon_{\mathrm{res}}>0$. By continuity, there exists a nonempty
open interval $I_{\mathrm{res}}\subset[0,2\pi)$ such that
\begin{align}
|f(x)-\lambda|
&\geq
\varepsilon_{\mathrm{res}}
\qquad
\text{for every }x\in I_{\mathrm{res}}.
\end{align}
Define the $2\pi$-periodic indicator
\begin{align}
\chi_{I_{\mathrm{res}}}(x)
&:=
\mathbbm 1
\left\{
x\bmod 2\pi\in I_{\mathrm{res}}
\right\}.
\end{align}
This function is Riemann integrable, since it has discontinuities only at
the endpoints of $I_{\mathrm{res}}$. Because
$L_N\alpha/(2\pi)\notin\mathbb Q$, Weyl equidistribution~\eqref{eq:fdrn-weyl-residue-average} gives
\begin{align}
&\lim_{K\to\infty}
\frac{1}{K}
\sum_{n=0}^{K-1}
\chi_{I_{\mathrm{res}}}
\bigl((nL_N+r)\alpha\bigr)=
\frac{|I_{\mathrm{res}}|}{2\pi}
>
0.
\end{align}
Hence the phases $(nL_N+r)\alpha\bmod2\pi$ enter
$I_{\mathrm{res}}$ infinitely often. Choose an increasing sequence of such
visits $n_j$ and set $t_j
:=
n_jL_N+r$,
discarding a possible initial term with $t_j=0$. Then
\begin{align}
|S(t_j)-\lambda|
&=
\left|
f(t_j\alpha)-\lambda
\right|
\geq
\varepsilon_{\mathrm{res}}.
\end{align}
Using~\eqref{eq:fdrn-true-decision-margin}, we obtain $g_\star(t_j)
\geq
\varepsilon_{\mathrm{res}}$.
At the same time, $g_C(t_j)\to0$ by the choice of the residue class $r$.
This is precisely the condition in
\eqref{eq:fdrn-loss-decision-resolution}.
\end{proof}

If the predicted margin stays positive, each residue class instead has an
eventually fixed selected action. We now lower-bound the mean decision loss
of these actions on $\mathscr F_{\mathrm{tick}}$.

\begin{corollary}[Decision loss]
\label{cor:fdrn-decision-regret-gap}
Under the hypotheses of
Theorem~\ref{thm:fdrn-greedy-action-obstruction}, suppose that
\begin{align}
\liminf_{t\to\infty}
g_C(t)
&>
0.
\label{eq:fdrn-positive-greedy-margin}
\end{align}
Then the mean decision loss defined in
\eqref{eq:fdrn-mean-decision-loss} satisfies
\begin{align}
\overline{\ell}_C
\!\left(\mathscr F_{\mathrm{tick}}\right)
&\geq
\varepsilon_{\mathrm{dec}}
>
0,
\label{eq:fdrn-mean-decision-loss-bound}
\end{align}
where
\begin{align}
\varepsilon_{\mathrm{dec}}
&:=
\min\Bigg\{
\frac{1}{2\pi}
\int_0^{2\pi}
\bigl(\lambda-f(x)\bigr)_+\,dx,
\nonumber\\[-0.3em]
&\hspace{4.2em}
\frac{1}{2\pi}
\int_0^{2\pi}
\bigl(f(x)-\lambda\bigr)_+\,dx
\Bigg\}
>
0.
\label{eq:fdrn-decision-regret-constant}
\end{align}
The constant $\varepsilon_{\mathrm{dec}}$ depends only on the true world and
is independent of both the classical memory dimension and the chosen model.
\end{corollary}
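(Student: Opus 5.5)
Since $\liminf_t g_C(t)>0$, the final equivalence in Theorem~\ref{thm:fdrn-greedy-action-obstruction} gives $g_r^\infty>0$ for every residue $r\in\{0,\ldots,L_N-1\}$. As in that proof, each limiting action-value vector $\mathbf q_r$ then has a unique maximiser $a_r$, and $\widehat a_C(nL_N+r)=a_r$ for all sufficiently large $n$. The plan is to compute the loss $\ell_C(t)$ exactly along each residue class in terms of $S(t)$, using Lemma~\ref{lem:fdrn-true-decisions-all-gamma}. We then average with Weyl equidistribution and recombine the residue classes as in the proof of Proposition~\ref{prop:fdrn_tick_probability_uniform_average_gap}.

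\textbf{Step 1 (pointwise loss).} From Lemma~\ref{lem:fdrn-true-decisions-all-gamma} we have $V_\star^{\mathrm{opt}}(\ph_t)=Q_\star^{\mathrm{opt}}(\ph_t,M)+(S(t)-\lambda)_+$, and $Q_\star^{\mathrm{opt}}(\ph_t,P)=Q_\star^{\mathrm{opt}}(\ph_t,M)+S(t)-\lambda$. Hence
\begin{align}
\ell_C(t)=\begin{cases}(\lambda-S(t))_+, & \widehat a_C(t)=P,\\ (S(t)-\lambda)_+, & \widehat a_C(t)=M,\end{cases}
\end{align}
and when $\widehat a_C(t)=W$,
\begin{align}
\ell_C(t)> (S(t)-\lambda)_++\lambda\geq (S(t)-\lambda)_+,
\end{align}
by the Wait dominance bound. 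So for each eventually fixed $a_r$ there is a continuous $2\pi$-periodic function $F_r\in\{(\lambda-f)_+,(f-\lambda)_+\}$ with $\ell_C(nL_N+r)\geq F_r((nL_N+r)\alpha)$ for all large $n$, using $S(t)=f(t\alpha)$ for $t\geq1$.

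\textbf{Step 2 (residue averages).} Apply Weyl equidistribution~\eqref{eq:fdrn-weyl-residue-average} to the continuous periodic $F_r$. This gives
\begin{align}
\liminf_{K\to\infty}\frac1K\sum_{n=0}^{K-1}\ell_C(nL_N+r)\geq\frac1{2\pi}\int_0^{2\pi}F_r(x)\,dx\geq\varepsilon_{\mathrm{dec}}.
\end{align}
Finitely many initial terms are irrelevant because $\ell_C$ is bounded, which follows from bounded rewards. Equivalently, one may invoke Lemma~\ref{lem:cesaro-weyl-stability} with $c_n=c=0$.

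\textbf{Step 3 (recombination).} Average the $L_N$ residue bounds for $T=KL_N$. For general $T$, use $K=\lfloor T/L_N\rfloor$ together with nonnegativity of $\ell_C$, exactly as at the end of Proposition~\ref{prop:fdrn_tick_probability_uniform_average_gap}. This yields $\overline\ell_C(\mathscr F_{\mathrm{tick}})\geq\varepsilon_{\mathrm{dec}}$.

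\textbf{Positivity of $\varepsilon_{\mathrm{dec}}$.} By~\eqref{eq:fdrn-threshold-sine}, $f-\lambda$ equals a nonzero multiple of $\sin(x+\varphi+\alpha/2)$ divided by a positive function. Both its positive and negative parts are therefore continuous, nonnegative, and nonzero on open half-periods, so both integrals are strictly positive. The constant depends only on $f$ and $\lambda$, so it is independent of $N$, $\gamma$, and the model.

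\textbf{Main obstacle.} The step needing most care is Step 1. We must confirm that the case split on $a_r$ is exhaustive and that the Wait case is correctly lower-bounded. Tie-breaking in the greedy selection causes no trouble, because uniqueness of $a_r$ holds eventually on each class.
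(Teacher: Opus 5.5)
Your proposal is correct and follows essentially the same route as the paper: the greedy action is eventually constant on each residue class, the loss is evaluated exactly via Lemma~\ref{lem:fdrn-true-decisions-all-gamma}, each residue class is averaged with Weyl equidistribution, and the classes are then recombined. The only difference is in the Wait case, where you bound the loss below by $(S(t)-\lambda)_+$ and so reduce it to the Maintain case, whereas the paper uses $\ell_C(t)>\lambda\geq\varepsilon_{\mathrm{dec}}$; both arguments are valid.
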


\begin{proof}
By~\eqref{eq:fdrn-greedy-gap-liminf}, the assumption
\eqref{eq:fdrn-positive-greedy-margin} implies
$g_r^\infty>0$ for every residue class. The proof of
Theorem~\ref{thm:fdrn-greedy-action-obstruction} therefore gives, for each
$r\in\{0,\ldots,L_N-1\}$, an action $a_r$ such that
\begin{align}
\widehat a_C(nL_N+r)
=
a_r
\end{align}
for all sufficiently large $n$.

Fix a residue $r$ and write $t=nL_N+r$. Lemma
\ref{lem:fdrn-true-decisions-all-gamma} shows that Wait is strictly below
both Probe and Maintain. Hence
\begin{align}
V_\star^{\mathrm{opt}}(\ph_t)
=
\max\left\{
Q_\star^{\mathrm{opt}}(\ph_t,P),
Q_\star^{\mathrm{opt}}(\ph_t,M)
\right\}.
\end{align}

If $a_r=P$, then, for all sufficiently large $n$,
\begin{align}
\ell_C(t)
&=
V_\star^{\mathrm{opt}}(\ph_t)
-
Q_\star^{\mathrm{opt}}(\ph_t,P)
\nonumber\\
&=
\left(
Q_\star^{\mathrm{opt}}(\ph_t,M)
-
Q_\star^{\mathrm{opt}}(\ph_t,P)
\right)_+
\nonumber\\
&=
\bigl(\lambda-S(t)\bigr)_+.
\end{align}
If $a_r=M$, then
\begin{align}
\ell_C(t)
&=
V_\star^{\mathrm{opt}}(\ph_t)
-
Q_\star^{\mathrm{opt}}(\ph_t,M)
\nonumber\\
&=
\left(
Q_\star^{\mathrm{opt}}(\ph_t,P)
-
Q_\star^{\mathrm{opt}}(\ph_t,M)
\right)_+
\nonumber\\
&=
\bigl(S(t)-\lambda\bigr)_+.
\label{eq:fdrn-probe-maintain-decision-loss}
\end{align}
Finally, if $a_r=W$, then
\begin{align}
\ell_C(t)
&\geq
Q_\star^{\mathrm{opt}}(\ph_t,M)
-
Q_\star^{\mathrm{opt}}(\ph_t,W)
>
\lambda.
\label{eq:fdrn-wait-decision-loss}
\end{align}

For the Probe case, Weyl equidistribution~\eqref{eq:fdrn-weyl-residue-average} gives
\begin{align}
&\lim_{K\to\infty}
\frac{1}{K}
\sum_{n=0}^{K-1}
\bigl(
\lambda-S(nL_N+r)
\bigr)_+
\nonumber\\
&\qquad=
\frac{1}{2\pi}
\int_0^{2\pi}
\bigl(\lambda-f(x)\bigr)_+
\,dx.
\end{align}
For the Maintain case, it gives
\begin{align}
&\lim_{K\to\infty}
\frac{1}{K}
\sum_{n=0}^{K-1}
\bigl(
S(nL_N+r)-\lambda
\bigr)_+
\nonumber\\
&\qquad=
\frac{1}{2\pi}
\int_0^{2\pi}
\bigl(f(x)-\lambda\bigr)_+
\,dx.
\end{align}
Both phase averages are strictly positive by
\eqref{eq:fdrn-threshold-sine}. By the definition
\eqref{eq:fdrn-decision-regret-constant}, each is at least
$\varepsilon_{\mathrm{dec}}$.

Moreover,
\begin{align}
0
\leq
\bigl(\lambda-f(x)\bigr)_+
\leq
\lambda,
\end{align}
and therefore
\begin{align}
\varepsilon_{\mathrm{dec}}
\leq
\frac{1}{2\pi}
\int_0^{2\pi}
\bigl(\lambda-f(x)\bigr)_+
\,dx
\leq
\lambda.
\end{align}
Thus, the uniform Wait loss in
\eqref{eq:fdrn-wait-decision-loss} is also at least
$\varepsilon_{\mathrm{dec}}$. Consequently, every residue class has
asymptotic mean decision loss at least $\varepsilon_{\mathrm{dec}}$.
Averaging over the finitely many residue classes gives
\begin{align}
\overline{\ell}_C
\!\left(\mathscr F_{\mathrm{tick}}\right)
\geq
\varepsilon_{\mathrm{dec}}.
\end{align}
\end{proof}

With
$\varepsilon_{\mathrm{tr}}
:=
\min\{\varepsilon_{\mathrm{res}},\varepsilon_{\mathrm{dec}}\}>0$,
Theorem~\ref{thm:fdrn-greedy-action-obstruction} and
Corollaries~\ref{cor:fdrn-loss-decision-resolution}
and~\ref{cor:fdrn-decision-regret-gap} show that, along the fixed reachable
trajectory $\mathscr F_{\mathrm{tick}}$, every finite-dimensional classical
world model either loses decision resolution at level
$\varepsilon_{\mathrm{tr}}$ or selects a true-world-suboptimal action with
lower asymptotic frequency at least $1/2$ and has mean decision loss at
least $\varepsilon_{\mathrm{tr}}$. Both the trajectory and the constant are
independent of the classical memory dimension and the chosen model.

\subsubsection{About other optimal actions}

The choice of competing actions in the preceding results is not a property of the dynamics, but a consequence of the reward assignment. That choice makes Probe and Maintain differ by $S(t)-\lambda$ for every
$\gamma\in[0,1)$, which gives a particularly direct arbitrary-discount proof.
An analogous one-step construction can instead make Wait and Maintain the
competing actions. At $\gamma=0$, choose
\begin{align}
R_W
&=
1-\lambda-\eta,
&
C_W
&=
\lambda+\eta,
&
C_M
&=
\eta,
&
R_P
&=
-C_P,
\end{align}
where $0<\eta<1-\lambda$ and $C_P>\eta$. The corresponding true action-values on the all-Tick prefixes are
\begin{align}
Q_\star^{\mathrm{opt}}(\ph_t,W)
&=
S(t)-\lambda-\eta,
&
Q_\star^{\mathrm{opt}}(\ph_t,M)
&=
-\eta, \nonumber \\
Q_\star^{\mathrm{opt}}(\ph_t,P)
&=
-C_P.
\end{align}
Probe is then strictly suboptimal, whereas Wait is optimal when
$S(t)>\lambda$ and Maintain is optimal when $S(t)<\lambda$. By
\eqref{eq:fdrn-threshold-sine} and
\eqref{eq:fdrn-half-tick-residue}, each case has asymptotic frequency
$1/2$ on every residue class. The same residue-class convergence argument
therefore gives the same decision dichotomy: every finite-dimensional
classical world model either loses decision resolution or, if its predicted
margin remains positive, selects a true-world suboptimal action on at least
half of the queried histories and incurs a strictly positive mean decision
loss. Thus, the use of Probe and Maintain in the arbitrary-discount
construction is a technical convenience and should not be interpreted as
implying that advancing the clock with Wait can never be optimal.

\subsection{Estimation errors for value functions}
\label{subsec:fdrn-planning-value-errors}

The preceding subsection concerned the ordering of the action-values. We now
study their numerical accuracy along the all-Tick trajectory.

For a world model $\mathsf M$ and history $h$, define
\begin{align}
e_{\mathrm Q}^{M}(h)
&:=
\max_{a\in\mathcal A_{\mathrm{FRDN}}}
\left|
Q_\star^{\mathrm{opt}}(h,a)
-
Q_M^{\mathrm{opt}}(h,a)
\right|,
\label{eq:fdrn-action-value-error}
\\
e_V^{M}(h)
&:=
\left|
V_\star^{\mathrm{opt}}(h)
-
V_M^{\mathrm{opt}}(h)
\right|.
\label{eq:fdrn-optimal-value-error}
\end{align}
Along a reachable trajectory $\mathscr F=h_1h_2\cdots$, with prefixes
$\ph_t=h_1\cdots h_t$, define
\begin{align}
\overline e_{\mathrm Q}^{M}(\mathscr F)
&:=
\liminf_{T\to\infty}
\frac{1}{T}
\sum_{t=1}^{T}
e_{\mathrm Q}^{M}(\ph_t),
\label{eq:fdrn-mean-action-value-error}
\\
\overline e_V^{M}(\mathscr F)
&:=
\liminf_{T\to\infty}
\frac{1}{T}
\sum_{t=1}^{T}
e_V^{M}(\ph_t).
\label{eq:fdrn-mean-optimal-value-error}
\end{align}
We apply these quantities below to finite-dimensional classical world
models along $\mathscr F_{\mathrm{tick}}$.

Lemma~\ref{lem:fdrn-true-decisions-all-gamma} already determines the
relative true-world action-values: Probe differs from Maintain by
$S(t)-\lambda$, while Wait is strictly below Maintain. To obtain the
absolute value functions needed below, it therefore remains only to compute
the common baseline supplied by Maintain. Because Maintain gives the
immediate reward $-\eta$ and resets the clock, this requires a single
Bellman step.

For convenience, define the continuous $2\pi$-periodic function
\begin{align}
H(x)
&:=
\bigl(f(x)-\lambda\bigr)_+,
\label{eq:fdrn-optimal-value-profile}
\end{align}
and the age-independent reset baseline
\begin{align}
b_\gamma
&:=
-\eta
+\gamma V_\star^{\mathrm{opt}}(\ph_0).
\label{eq:fdrn-reset-baseline}
\end{align}

\begin{lemma}[True FRDN value functions]
\label{lem:fdrn-all-gamma-true-value}
For the rewards in~\eqref{eq:fdrn-decision-rewards}, every
$\gamma\in[0,1)$, and every $t\geq0$, the optimal value and Probe
action-value at the all-Tick prefix $\ph_t$ satisfy
\begin{align}
V_\star^{\mathrm{opt}}(\ph_t)
&=
-\eta
+\gamma V_\star^{\mathrm{opt}}(\ph_0)
+\bigl(S(t)-\lambda\bigr)_+,
\label{eq:fdrn-all-gamma-true-value}
\\
Q_\star^{\mathrm{opt}}(\ph_t,P)
&=
-\eta
+\gamma V_\star^{\mathrm{opt}}(\ph_0)
+S(t)-\lambda .
\label{eq:fdrn-all-gamma-true-probe-value}
\end{align}
In particular, for $t\geq1$,
\begin{align}
V_\star^{\mathrm{opt}}(\ph_t)
&=
-\eta
+\gamma V_\star^{\mathrm{opt}}(\ph_0)
+H(t\alpha),
\label{eq:fdrn-true-planning-phase-value}
\\
Q_\star^{\mathrm{opt}}(\ph_t,P)
&=
-\eta
+\gamma V_\star^{\mathrm{opt}}(\ph_0)
+f(t\alpha)-\lambda .
\label{eq:fdrn-true-planning-phase-probe}
\end{align}
Moreover, the reset value is
\begin{align}
V_\star^{\mathrm{opt}}(\ph_0)
&=
\frac{
-\eta+\bigl(S(0)-\lambda\bigr)_+
}{
1-\gamma
}.
\label{eq:fdrn-reset-value}
\end{align}
\end{lemma}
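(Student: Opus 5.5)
The plan is to reuse the explicit Bellman fixed point built in the proof of Lemma~\ref{lem:fdrn-true-decisions-all-gamma}. In that proof we defined $H_t=(S(t)-\lambda)_+$, $u_0=(-\eta+H_0)/(1-\gamma)$, $u_t=-\eta+\gamma u_0+H_t$, and $U(h)=u_{\ell(h)}$. We showed that $U$ is the unique bounded fixed point of $\mathcal T_\star$, so $U=V_\star^{\mathrm{opt}}$ on all of $\mathscr H$, and that $Q_\star^{\mathrm{opt}}(h,a)=(\mathcal B_{\star,a}U)(h)$. The present lemma should therefore follow by evaluating these identities along the all-Tick prefixes.

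First, since $\ell(\ph_0)=\ell(\emptyset)=0$, we get $V_\star^{\mathrm{opt}}(\ph_0)=U(\ph_0)=u_0=(-\eta+(S(0)-\lambda)_+)/(1-\gamma)$, which is~\eqref{eq:fdrn-reset-value}. Next, since $\ell(\ph_t)=t$, we get
\begin{align}
V_\star^{\mathrm{opt}}(\ph_t)=u_t=-\eta+\gamma u_0+(S(t)-\lambda)_+ .
\end{align}
Substituting $u_0=V_\star^{\mathrm{opt}}(\ph_0)$ gives~\eqref{eq:fdrn-all-gamma-true-value}. For this step I will note explicitly that the defining relation for $u_0$ makes the $u_t$ formula valid also at $t=0$, as was remarked in that proof. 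For the Probe action-value, I will cite~\eqref{eq:fdrn-probe-value-in-proof}: $(\mathcal B_{\star,P}U)(h)=-\eta+\gamma u_0+S(t)-\lambda$ for any $h$ with $\ell(h)=t$. This holds because both Probe outcomes reset the age to zero and contribute continuation value $u_0$. Taking $h=\ph_t$ then gives~\eqref{eq:fdrn-all-gamma-true-probe-value}.

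Finally, for $t\ge1$ we have $S(t)=f(t\alpha)$ by~\eqref{eq:fdrn-survival-closed-form}. Hence $(S(t)-\lambda)_+=H(t\alpha)$ by the definition~\eqref{eq:fdrn-optimal-value-profile}, and $S(t)-\lambda=f(t\alpha)-\lambda$, which yields the phase forms~\eqref{eq:fdrn-true-planning-phase-value} and~\eqref{eq:fdrn-true-planning-phase-probe}. The definition of $b_\gamma$ then lets these be written compactly as $b_\gamma+H(t\alpha)$ and $b_\gamma+f(t\alpha)-\lambda$.

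There is no real obstacle here. The only point needing care is the identification $Q_\star^{\mathrm{opt}}=\mathcal B_{\star,a}U$, which rests on $U=V_\star^{\mathrm{opt}}$ together with the Bellman relation~\eqref{eq:true-optimal-bellman-relations}; both were already established in the earlier lemma. As an independent check, I would confirm the self-consistency $u_0=-\eta+\gamma u_0+H_0$, which is the one-line calculation verifying that the reset value solves its own Bellman equation.
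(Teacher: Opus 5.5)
Your proposal is correct and essentially matches the paper's argument: both evaluate the Bellman structure established in Lemma~\ref{lem:fdrn-true-decisions-all-gamma} along the all-Tick prefixes, then substitute $S(t)=f(t\alpha)$ for $t\geq1$. The one difference is that the paper uses only the stated conclusions of that lemma (the Probe--Maintain difference $S(t)-\lambda$ and Wait dominance) together with the Maintain relation $Q_\star^{\mathrm{opt}}(\ph_t,M)=b_\gamma$, and obtains~\eqref{eq:fdrn-reset-value} by solving the scalar equation at $t=0$; you instead read everything directly off the fixed point $U=V_\star^{\mathrm{opt}}$ constructed inside that proof, where the reset value is $u_0$ by construction.
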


\begin{proof}
Fix $t\geq0$. Maintain gives the deterministic immediate reward $-\eta$
and resets the clock to age zero. Hence, by the Bellman action-value
relation~\eqref{eq:true-optimal-bellman-relations},
\begin{align}
Q_\star^{\mathrm{opt}}(\ph_t,M)
&=
-\eta
+\gamma V_\star^{\mathrm{opt}}(\ph_0)=
b_\gamma.
\label{eq:fdrn-maintain-reset-baseline}
\end{align}
Thus, $b_\gamma$ is the age-independent action-value of Maintain.

The first relation in
\eqref{eq:fdrn-decision-wait-dominated} gives
\begin{align}
Q_\star^{\mathrm{opt}}(\ph_t,P)
&=
Q_\star^{\mathrm{opt}}(\ph_t,M)
+
S(t)-\lambda
\nonumber\\
&=
b_\gamma+S(t)-\lambda
\nonumber\\
&=
-\eta
+\gamma V_\star^{\mathrm{opt}}(\ph_0)
+S(t)-\lambda.
\end{align}
This proves~\eqref{eq:fdrn-all-gamma-true-probe-value}.

The second relation in
\eqref{eq:fdrn-decision-wait-dominated} shows that Wait is strictly below
Maintain. Therefore, the value--action-value relation
\eqref{eq:true-optimal-value-action-value-relation} reduces to
\begin{align}
V_\star^{\mathrm{opt}}(\ph_t)
&=
\max\left\{
Q_\star^{\mathrm{opt}}(\ph_t,M),
Q_\star^{\mathrm{opt}}(\ph_t,P)
\right\}
\nonumber\\
&=
\max\left\{
b_\gamma,
b_\gamma+S(t)-\lambda
\right\}
\nonumber\\
&=
b_\gamma
+\bigl(S(t)-\lambda\bigr)_+
\nonumber\\
&=
-\eta
+\gamma V_\star^{\mathrm{opt}}(\ph_0)
+\bigl(S(t)-\lambda\bigr)_+.
\end{align}
This proves~\eqref{eq:fdrn-all-gamma-true-value}.

Taking $t=0$ in this identity gives
\begin{align}
V_\star^{\mathrm{opt}}(\ph_0)
&=
-\eta
+\gamma V_\star^{\mathrm{opt}}(\ph_0)
+\bigl(S(0)-\lambda\bigr)_+.
\end{align}
Solving this scalar fixed-point equation yields
\begin{align}
V_\star^{\mathrm{opt}}(\ph_0)
&=
\frac{
-\eta+\bigl(S(0)-\lambda\bigr)_+
}{
1-\gamma
},
\end{align}
which is~\eqref{eq:fdrn-reset-value}.

Finally, for $t\geq1$,
\eqref{eq:fdrn-survival-closed-form} gives
$S(t)=f(t\alpha)$, and hence
\begin{align}
\bigl(S(t)-\lambda\bigr)_+
&=
\bigl(f(t\alpha)-\lambda\bigr)_+
=
H(t\alpha).
\end{align}
Substitution into
\eqref{eq:fdrn-all-gamma-true-value} and
\eqref{eq:fdrn-all-gamma-true-probe-value} gives
\eqref{eq:fdrn-true-planning-phase-value} and
\eqref{eq:fdrn-true-planning-phase-probe}, respectively.
\end{proof}

The baseline $b_\gamma$ in~\eqref{eq:fdrn-reset-baseline} is the
age-independent action-value of Maintain. The Probe action-value and optimal
value can therefore be written as
\begin{align}
Q_\star^{\mathrm{opt}}(\ph_t,P)
&=
b_\gamma+S(t)-\lambda,
\nonumber\\
V_\star^{\mathrm{opt}}(\ph_t)
&=
b_\gamma+\bigl(S(t)-\lambda\bigr)_+.
\label{eq:fdrn-true-planning-baseline-form}
\end{align}
Thus, Probe adds the signed, age-dependent advantage $S(t)-\lambda$, while
optimization between Probe and Maintain replaces this signed profile by its
positive part. Consequently, for $t\geq1$, the discount factor changes only
the additive baseline $b_\gamma$; the nonconstant phase profiles
$f-\lambda$ and $H$ are independent of $\gamma$. 

\begin{lemma}[Residue limits of finite-dimensional classical value functions]
\label{lem:fdrn-classical-planning-residue-limits}
Fix $\gamma\in[0,1)$ and an $N$-dimensional classical world model over
$(\mathcal A_{\mathrm{FRDN}},\mathcal Y_{\mathrm{FRDN}})$. For every
$r\in\{0,\ldots,L_N-1\}$, there exist constants $q_{P,r}$ and $v_r$ such
that
\begin{align}
Q_C^{\mathrm{opt}}(\ph_{nL_N+r},P)
&\longrightarrow
q_{P,r},
&
V_C^{\mathrm{opt}}(\ph_{nL_N+r})
&\longrightarrow
v_r.
\label{eq:fdrn-classical-planning-residue-limits}
\end{align}
\end{lemma}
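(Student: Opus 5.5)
The plan is to reproduce the first step of the proof of Theorem~\ref{thm:fdrn-greedy-action-obstruction}. Residue-class convergence of the encoded memory is combined with continuity of the classical optimal value functions.

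\textbf{Step 1: residue limits of the memory.} Fix $r\in\{0,\ldots,L_N-1\}$. If the model eventually assigns zero probability to the all-Tick branch, then $\Phi_C(t_0)=0$ for some $t_0$. By column-substochasticity, $\Phi_C(t)=0$ for all $t\geq t_0$. The fixed-reference convention then gives $E_C(\ph_t)=z_{\mathrm{ref}}$ eventually, so the sequence $E_C(\ph_{nL_N+r})$ is eventually constant and we set $z_{\infty,r}:=z_{\mathrm{ref}}$. Otherwise $\Phi_C(t)>0$ for all $t$, and Corollary~\ref{cor:fdrn-classical-belief-residue-convergence} gives $z_{\infty,r}\in\Delta_{N-1}$ with
\begin{align}
E_C(\ph_{nL_N+r})\longrightarrow z_{\infty,r}.
\end{align}

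\textbf{Step 2: continuity.} By the bridge relations~\eqref{eq:model-optimal-value-bridge} and~\eqref{eq:model-optimal-q-bridge},
\begin{align}
Q_C^{\mathrm{opt}}(\ph_t,P)=q_C^{\mathrm{opt}}(E_C(\ph_t),P),\qquad
V_C^{\mathrm{opt}}(\ph_t)=v_C^{\mathrm{opt}}(E_C(\ph_t)).
\end{align}
Lemma~\ref{lem:classical-planning-values-continuous} states that $q_C^{\mathrm{opt}}(\cdot,P)$ and $v_C^{\mathrm{opt}}$ are continuous on $\Delta_{N-1}$. We therefore define
\begin{align}
q_{P,r}:=q_C^{\mathrm{opt}}(z_{\infty,r},P),\qquad v_r:=v_C^{\mathrm{opt}}(z_{\infty,r}),
\end{align}
and the convergences in~\eqref{eq:fdrn-classical-planning-residue-limits} follow by continuity along the sequence from Step 1.

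\textbf{Main point to check.} The only real subtlety is that the bridge relations must be applied with the canonical encoder $E_C$ defined through forward filtering. That encoder coincides with the normalized vector $D_T^t z_0/\Phi_C(t)$ whenever $\Phi_C(t)>0$, and with $z_{\mathrm{ref}}$ otherwise, so the two cases of Step 1 cover every possibility. Continuity through zero-probability branches is already handled inside Lemma~\ref{lem:classical-planning-values-continuous}, so no further argument is needed there.
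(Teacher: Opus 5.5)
Your proposal is correct and follows the paper's proof essentially verbatim. In both, the two-case argument supplies the limit $z_{\infty,r}$: either $\Phi_C$ eventually vanishes, forcing $E_C(\ph_t)=z_{\mathrm{ref}}$, or Corollary~\ref{cor:fdrn-classical-belief-residue-convergence} applies; the bridge relations and Lemma~\ref{lem:classical-planning-values-continuous} then transfer this limit to $q_{P,r}=q_C^{\mathrm{opt}}(z_{\infty,r},P)$ and $v_r=v_C^{\mathrm{opt}}(z_{\infty,r})$.
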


\begin{proof}
If $\Phi_C(t_0)=0$ for some $t_0$, the fixed-reference convention makes
$E_C(\ph_t)$ equal to $z_{\mathrm{ref}}$ for every $t\geq t_0$. Otherwise,
Corollary~\ref{cor:fdrn-classical-belief-residue-convergence} gives, for
each residue $r$, a distribution over states $z_{\infty,r}$ such that
$E_C(\ph_{nL_N+r})\to z_{\infty,r}$. Hence such a limit $z_{\infty,r}$ exists in either case.
The bridge relations~\eqref{eq:model-optimal-value-bridge}
and~\eqref{eq:model-optimal-q-bridge}, together with
Lemma~\ref{lem:classical-planning-values-continuous}, give
\eqref{eq:fdrn-classical-planning-residue-limits}.
\end{proof}

We first use the Probe action to lower-bound the action-value error.

\begin{theorem}[Action-value error for finite classical models]
\label{thm:fdrn_q_gap}
Fix the rewards in~\eqref{eq:fdrn-decision-rewards}. For every
$\gamma\in[0,1)$, every $N\in\mathbb N$, and every $N$-dimensional
classical world model over
$(\mathcal A_{\mathrm{FRDN}},\mathcal Y_{\mathrm{FRDN}})$,
\begin{align}
\overline e_{\mathrm Q}^{C}
\!\left(
\mathscr F_{\mathrm{tick}}
\right)
&\geq
\kappa_{\mathrm{FRDN}}
>
0.
\label{eq:fdrn-mean-action-value-error-bound}
\end{align}
\end{theorem}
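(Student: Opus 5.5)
We will lower-bound the maximum over actions in $e_{\mathrm Q}^{C}$ by the Probe term alone:
\begin{align}
e_{\mathrm Q}^{C}(\ph_t)\geq\left|Q_\star^{\mathrm{opt}}(\ph_t,P)-Q_C^{\mathrm{opt}}(\ph_t,P)\right|.
\end{align}
Then we compare the Probe action-values residue class by residue class modulo $L_N$, in direct analogy with the proof of Proposition~\ref{prop:fdrn_tick_probability_uniform_average_gap}.

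\textbf{Step 1: the true and model Probe values on each residue class.} By Lemma~\ref{lem:fdrn-all-gamma-true-value}, for $t\geq1$ the true Probe value is
\begin{align}
Q_\star^{\mathrm{opt}}(\ph_t,P)=b_\gamma-\lambda+f(t\alpha),
\end{align}
where $b_\gamma$ is an age-independent constant. By Lemma~\ref{lem:fdrn-classical-planning-residue-limits}, for each residue $r\in\{0,\ldots,L_N-1\}$ we have $Q_C^{\mathrm{opt}}(\ph_{nL_N+r},P)\to q_{P,r}$; this also covers the degenerate case $\Phi_C(t_0)=0$, through the $z_{\mathrm{ref}}$ convention. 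Define
\begin{align}
c_n:=Q_C^{\mathrm{opt}}(\ph_{nL_N+r},P)-b_\gamma+\lambda,
\end{align}
so that $c_n\to c_r:=q_{P,r}-b_\gamma+\lambda$. For $nL_N+r\geq1$, the Probe error then equals $|f((nL_N+r)\alpha)-c_n|$.

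\textbf{Step 2: Weyl averaging.} Lemma~\ref{lem:cesaro-weyl-stability} with $F=f$ gives
\begin{align}
\lim_{K\to\infty}\frac1K\sum_{n=0}^{K-1}\left|f\bigl((nL_N+r)\alpha\bigr)-c_n\right|=\frac1{2\pi}\int_0^{2\pi}|f(x)-c_r|\,dx.
\end{align}

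\textbf{Step 3: where the limit $c_r$ lies (the main obstacle).} The constant $\kappa_{\mathrm{FRDN}}$ is defined as a minimum over $c\in[0,1]$ only, but $c_r$ could a priori lie outside $[0,1]$. We handle this by noting that $f$ takes values in $[0,1]$. Hence for $c<0$ we have $|f-c|\geq|f-0|$ pointwise, and for $c>1$ we have $|f-c|\geq|f-1|$ pointwise. So the minimum of $c\mapsto\frac{1}{2\pi}\int_0^{2\pi}|f(x)-c|\,dx$ over all of $\mathbb R$ is attained on $[0,1]$, and every residue class has asymptotic mean Probe error at least $\kappa_{\mathrm{FRDN}}$. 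This avoids having to show that the classical $Q$-values are suitably bounded.

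\textbf{Step 4: assembling the residue classes.} We average over the $L_N$ residue classes for $T=KL_N$. The single term with $t=0$ is bounded, since all values are bounded by $R_{\max}/(1-\gamma)$, so dropping it is harmless. For general $T$ we use the floor argument of Proposition~\ref{prop:fdrn_tick_probability_uniform_average_gap}, which applies because the summands are nonnegative. This yields
\begin{align}
\liminf_{T\to\infty}\frac1T\sum_{t=1}^{T}e_{\mathrm Q}^{C}(\ph_t)\geq\kappa_{\mathrm{FRDN}}.
\end{align}
The bound holds uniformly in $\gamma$, in $N$, and in the choice of classical model.
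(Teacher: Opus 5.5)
Your proposal is correct and follows the paper's proof essentially step for step: you bound $e_{\mathrm Q}^{C}$ below by the Probe error, and you shift the classical Probe values by $b_\gamma$ and $\lambda$ so that Lemma~\ref{lem:cesaro-weyl-stability} applies to $f$ on each residue class modulo $L_N$. You then move an out-of-range limit $c_r$ into $[0,1]$ using $f\in[0,1]$, and reassemble the residue classes with the floor argument, with the $t=0$ term and the degenerate case $\Phi_C(t_0)=0$ handled exactly as in the paper.
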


\begin{proof}
Fix the classical model and a residue
$r\in\{0,\ldots,L_N-1\}$. By
Lemma~\ref{lem:fdrn-classical-planning-residue-limits},
\begin{align}
c_{n,r}
&:=
\lambda
+Q_C^{\mathrm{opt}}(\ph_{nL_N+r},P)
-b_\gamma
\nonumber\\
&\longrightarrow
c_r
:=
\lambda+q_{P,r}-b_\gamma , 
\label{eq:fdrn-classical-probe-rescaled-limit}
\end{align}
where $ b_\gamma = -\eta+\gamma V_\star^{\mathrm{opt}}(\ph_0)$.
For $nL_N+r\geq1$, Lemma~\ref{lem:fdrn-all-gamma-true-value} gives
\begin{align}
&\left|
Q_\star^{\mathrm{opt}}(\ph_{nL_N+r},P)
-Q_C^{\mathrm{opt}}(\ph_{nL_N+r},P)
\right|
\nonumber\\
&\qquad=
\left|
f\bigl((nL_N+r)\alpha\bigr)-c_{n,r}
\right|.
\label{eq:fdrn-probe-error-rescaled}
\end{align}
The possible term with $nL_N+r=0$ does not affect the average. Applying
Lemma~\ref{lem:cesaro-weyl-stability} to $f$ gives
\begin{align}
&\lim_{K\to\infty}
\frac1K
\sum_{n=0}^{K-1}
\left|
Q_\star^{\mathrm{opt}}(\ph_{nL_N+r},P)
-Q_C^{\mathrm{opt}}(\ph_{nL_N+r},P)
\right|
\nonumber\\
&\qquad=
\frac1{2\pi}
\int_0^{2\pi}|f(x)-c_r|\,dx
\geq
\kappa_{\mathrm{FRDN}}.
\label{eq:fdrn-probe-error-residue-average}
\end{align}
For the final inequality, if $c_r\notin[0,1]$, project it onto $[0,1]$.
Since $f(x)\in[0,1]$, this projection cannot increase the integrand, and
the claim follows from~\eqref{eq:kappa-fdrn}.

By~\eqref{eq:fdrn-action-value-error}, the error $e_{\mathrm Q}^{C}$ dominates the Probe action-value error.
Decomposing an average of length $KL_N$ into the $L_N$ residue classes and
using~\eqref{eq:fdrn-probe-error-residue-average} therefore gives
\begin{align}
\liminf_{K\to\infty}
\frac1{KL_N}
\sum_{t=0}^{KL_N-1}e_{\mathrm Q}^{C}(\ph_t)
&\geq
\kappa_{\mathrm{FRDN}}.
\end{align}
For fixed $\gamma$, all value functions are bounded, so shifting from the
indices $0,\ldots,KL_N-1$ to $1,\ldots,KL_N$ is asymptotically
irrelevant. For arbitrary $T$, let $K=\lfloor T/L_N\rfloor$.
Nonnegativity gives
\begin{align}
\frac1T\sum_{t=1}^T e_{\mathrm Q}^{C}(\ph_t)
&\geq
\frac{KL_N}{T}
\left(
\frac1{KL_N}\sum_{t=1}^{KL_N}e_{\mathrm Q}^{C}(\ph_t)
\right),
\end{align}
and $KL_N/T\to1$. This proves
\eqref{eq:fdrn-mean-action-value-error-bound}.
\end{proof}

The previous theorem uses a single action. For the optimal value, the
relevant quantity that gives the lower bound is
\begin{align}
\kappa_{\mathrm{val}}
&:=
\min_{c\in[0,1-\lambda]}
\frac1{2\pi}
\int_0^{2\pi}|H(x)-c|\,dx.
\label{eq:fdrn-optimal-value-gap-constant}
\end{align}
Equation~\eqref{eq:fdrn-threshold-sine} shows that $H$ vanishes on one
nonempty open set and is positive on another. Hence $H$ is continuous and
nonconstant, so $\kappa_{\mathrm{val}}>0$.

\begin{theorem}[Optimal-value error for finite classical models]
\label{thm:fdrn_value_gap}
Fix the rewards in~\eqref{eq:fdrn-decision-rewards}. For every
$\gamma\in[0,1)$, every $N\in\mathbb N$, and every $N$-dimensional
classical world model over
$(\mathcal A_{\mathrm{FRDN}},\mathcal Y_{\mathrm{FRDN}})$,
\begin{align}
\overline e_V^{C}
\!\left(
\mathscr F_{\mathrm{tick}}
\right)
&\geq
\kappa_{\mathrm{val}}
>
0.
\label{eq:fdrn-mean-optimal-value-error-bound}
\end{align}
\end{theorem}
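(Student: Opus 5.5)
The argument is the one used for Theorem~\ref{thm:fdrn_q_gap}, with the Probe action-value replaced by the optimal value and $f$ replaced by $H$. Fix $\gamma\in[0,1)$, the classical model, and a residue $r\in\{0,\ldots,L_N-1\}$.

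\emph{Step 1 (recentre the classical values).} By Lemma~\ref{lem:fdrn-classical-planning-residue-limits} we have $V_C^{\mathrm{opt}}(\ph_{nL_N+r})\to v_r$. Define
\begin{align}
c_{n,r}:=V_C^{\mathrm{opt}}(\ph_{nL_N+r})-b_\gamma\longrightarrow c_r:=v_r-b_\gamma .
\end{align}

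\emph{Step 2 (rewrite the error through $H$).} For $nL_N+r\geq1$, Lemma~\ref{lem:fdrn-all-gamma-true-value} gives $V_\star^{\mathrm{opt}}(\ph_t)=b_\gamma+H(t\alpha)$. Hence
\begin{align}
e_V^C(\ph_{nL_N+r})=\bigl|H\bigl((nL_N+r)\alpha\bigr)-c_{n,r}\bigr|.
\end{align}
The single possible term with $nL_N+r=0$ does not affect the average.

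\emph{Step 3 (Weyl averaging on the residue class).} $H$ is continuous and $2\pi$-periodic, so Lemma~\ref{lem:cesaro-weyl-stability} with $F=H$ shows that the residue-class Ces\`aro average of $e_V^C$ converges to $\frac{1}{2\pi}\int_0^{2\pi}|H(x)-c_r|\,dx$.

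\emph{Step 4 (reduce to $\kappa_{\mathrm{val}}$).} The minimum in~\eqref{eq:fdrn-optimal-value-gap-constant} is taken only over $c\in[0,1-\lambda]$, while $c_r$ may be any real number, so we project it. Since $f\in[0,1]$, we have $0\leq H(x)\leq 1-\lambda$ pointwise. Replacing $c_r$ by its projection $\bar c_r$ onto $[0,1-\lambda]$ therefore cannot increase $|H(x)-c|$ at any $x$. It follows that
\begin{align}
\frac{1}{2\pi}\int_0^{2\pi}|H(x)-c_r|\,dx\geq\frac{1}{2\pi}\int_0^{2\pi}|H(x)-\bar c_r|\,dx\geq\kappa_{\mathrm{val}} .
\end{align}
Positivity of $\kappa_{\mathrm{val}}$ was already noted after its definition: $H$ is continuous and nonconstant by~\eqref{eq:fdrn-threshold-sine}, and the minimum over the compact interval is attained.

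\emph{Step 5 (from residue classes to all $T$).} This is verbatim as in Theorem~\ref{thm:fdrn_q_gap}. Average the bound over the $L_N$ residue classes for $T=KL_N$. Shifting the index range from $0,\ldots,KL_N-1$ to $1,\ldots,KL_N$ is harmless because values are bounded for fixed $\gamma$. For general $T$, take $K=\lfloor T/L_N\rfloor$ and use nonnegativity of $e_V^C$ together with $KL_N/T\to1$. This yields $\overline e_V^{C}(\mathscr F_{\mathrm{tick}})\geq\kappa_{\mathrm{val}}$.

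The degenerate case in which the model eventually assigns zero probability to the all-Tick branch is already covered by Lemma~\ref{lem:fdrn-classical-planning-residue-limits}, since the encoded state is then eventually $z_{\mathrm{ref}}$.

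The only point that differs from the Probe argument is the projection in Step 4. It needs the range of $H$ to lie in $[0,1-\lambda]$, which follows from $f\in[0,1]$ and $\lambda>0$. So I expect no real obstacle beyond verifying that range bound.
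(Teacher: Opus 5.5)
Your proposal is correct and follows the paper's proof essentially line by line. Like the paper, you recentre $V_C^{\mathrm{opt}}$ by $b_\gamma$, rewrite the error as $|H((nL_N+r)\alpha)-c_{n,r}|$ via Lemma~\ref{lem:fdrn-all-gamma-true-value}, apply Lemma~\ref{lem:cesaro-weyl-stability} with $F=H$, project the limiting constant onto $[0,1-\lambda]$ using $0\leq H\leq 1-\lambda$, and assemble the residue classes exactly as in Theorem~\ref{thm:fdrn_q_gap}, including the degenerate $z_{\mathrm{ref}}$ case.
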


\begin{proof}
Fix the classical model and a residue
$r\in\{0,\ldots,L_N-1\}$. By
Lemma~\ref{lem:fdrn-classical-planning-residue-limits},
\begin{align}
d_{n,r}
&:=
V_C^{\mathrm{opt}}(\ph_{nL_N+r})-b_\gamma
\longrightarrow
d_r
:=
v_r-b_\gamma,
\label{eq:fdrn-classical-value-shifted-limit}
\end{align}
where $ b_\gamma = -\eta+\gamma V_\star^{\mathrm{opt}}(\ph_0)$. For $nL_N+r\geq1$, Lemma~\ref{lem:fdrn-all-gamma-true-value} gives
\begin{align}
&\left|
V_\star^{\mathrm{opt}}(\ph_{nL_N+r})
-V_C^{\mathrm{opt}}(\ph_{nL_N+r})
\right|
\nonumber\\
&\qquad=
\left|
H\bigl((nL_N+r)\alpha\bigr)-d_{n,r}
\right|.
\end{align}
Lemma~\ref{lem:cesaro-weyl-stability} applied to $H$ therefore gives
\begin{align}
&\lim_{K\to\infty}
\frac1K
\sum_{n=0}^{K-1}
\left|
V_\star^{\mathrm{opt}}(\ph_{nL_N+r})
-V_C^{\mathrm{opt}}(\ph_{nL_N+r})
\right|
\nonumber\\
&\qquad=
\frac1{2\pi}
\int_0^{2\pi}|H(x)-d_r|\,dx
\geq
\kappa_{\mathrm{val}}.
\label{eq:fdrn-value-error-residue-average}
\end{align}
As before, the possible index $nL_N+r=0$ is irrelevant. If
$d_r\notin[0,1-\lambda]$, projection onto this interval cannot increase
the distance from $H(x)\in[0,1-\lambda]$, which proves the final
inequality.

Decomposing into residue classes as in the proof of
Theorem~\ref{thm:fdrn_q_gap}, and then accounting for the incomplete final
block, proves~\eqref{eq:fdrn-mean-optimal-value-error-bound}.
\end{proof}

Taking
$\varepsilon_{\mathrm{est}}
:=
\min\{\kappa_{\mathrm{FRDN}},\kappa_{\mathrm{val}}\}>0$,
Theorems~\ref{thm:fdrn_q_gap} and~\ref{thm:fdrn_value_gap} show that both
mean errors are at least $\varepsilon_{\mathrm{est}}$ along the same fixed
trajectory $\mathscr F_{\mathrm{tick}}$, uniformly over all finite
classical memory dimensions and all $\gamma\in[0,1)$.

\section{Exact qutrit realization}
\label{sec:fdrn-quantum-realization}

This appendix constructs a quantum world model
$\mathsf Q_{\mathrm{FRDN}}$ with a three-dimensional memory, together with
a separate encoder $E_Q$, such that the pair
$(\mathsf Q_{\mathrm{FRDN}},E_Q)$ exactly realizes the FRDN true world of
Definition~\ref{def:fdrn-controlled-world}.

The construction adapts the qutrit realization of the FRDN renewal process
in~\cite{fanizza2024quantum} to the quantum-instrument world models of
Definition~\ref{def:quantum-world-model}. We first specify the instrument
associated with each action, then identify the memory states used to
initialize history-dependent queries, and finally verify the probability and
memory-update conditions in~\eqref{eq:predictive-consistency-probability} and~\eqref{eq:predictive-consistency-update}.

\subsection{Qutrit instruments for Wait, Maintain, and Probe}\label{subsec:fdrn-quantum-realization}

Let $\mathcal H_Q\simeq\mathbb C^3$, with computational basis
$\{|0\rangle,|1\rangle,|2\rangle\}$, and let
\begin{align}
    \Pi_{01}
    :=
    |0\rangle\langle0|
    +
    |1\rangle\langle1| .
\end{align}
We write $X$ and $Z$ for the Pauli matrices on
$\Pi_{01}\mathcal H_Q$, and write $\mathbb I_3$ for the identity operator on
$\mathcal H_Q$. Since $\alpha/\pi\notin\mathbb Q$, one has
\begin{align}
    0
    <
    \frac{1-\lambda}{|1-\lambda e^{i\alpha}|}
    <
    1.
\end{align}
Because $\tanh(2r)$ increases continuously from $0$ to $1$ for $r>0$,
there is a unique $r>0$ satisfying
\begin{align}
    \tanh(2r)
    =
    \frac{1-\lambda}{|1-\lambda e^{i\alpha}|}.
    \label{eq:fdrn-qutrit-r}
\end{align}

We start by defining the operators that will form the Wait branch. Define the Tick operator
\begin{align}
    A_T
    &:=
    K_{r,\alpha}\Pi_{01},
    &
    K_{r,\alpha}
    &:=
    \sqrt{\lambda}\,
    e^{-rX}e^{i\alpha Z/2}e^{rX}.
    \label{eq:fdrn-qutrit-tick-operator}
\end{align}
Thus $A_T|2\rangle=0$. For the choice~\eqref{eq:fdrn-qutrit-r}, a direct
calculation gives
\begin{align}
    \det(K_{r,\alpha}^{\dagger}K_{r,\alpha})
    =
    \lambda^2,
    \qquad
    \operatorname{Tr}(K_{r,\alpha}^{\dagger}K_{r,\alpha})
    =
    1+\lambda^2 .
\end{align}
The eigenvalues of $K_{r,\alpha}^{\dagger}K_{r,\alpha}$ are therefore
$1$ and $\lambda^2$. Hence
$A_T^\dagger A_T\leq\mathbb I_3$, so
$\mathbb I_3-A_T^\dagger A_T$ is positive. The two effects
$A_T^\dagger A_T$ and
$\mathbb I_3-A_T^\dagger A_T$ will represent Tick and Break,
respectively.

We now specify the reset state. Choose $\theta_\alpha$ by
\begin{align}
    \tan\theta_\alpha
    =
    e^{2r}
    \tan\!\left[
        \frac12
        \arctan\!\left(
            \frac{\lambda\sin\alpha}{1-\lambda\cos\alpha}
        \right)
    \right],
\end{align}
and define
\begin{align}
    |\xi\rangle
    :=
    \frac{e^{i\theta_\alpha}}{\sqrt2}|0\rangle
    +
    \frac{e^{-i\theta_\alpha}}{\sqrt2}|1\rangle .
\end{align}
Finally set
\begin{align}
    \omega
    &:=
    \frac{
        \lambda(1+\lambda)\sin^2(\alpha/2)
    }{
        (1-\lambda)(1+\lambda^2-2\lambda\cos\alpha)
    },
    \\
    \rho_0
    &:=
    \omega|\xi\rangle\langle\xi|
    +(1-\omega)|2\rangle\langle2| .
    \label{eq:fdrn-qutrit-reset-state}
\end{align}
For $0<\lambda\leq1/2$, one has $0<\omega\leq1$.

For each action, the following are the only nonzero branches of the
corresponding instrument; all other branches indexed by
$y\in\mathcal Y_{\mathrm{FRDN}}$ are the zero map.

For Wait,
\begin{align}
    \mathcal E_{y_T^W}^{(W)}(\rho)
    &:=
    A_T\rho A_T^\dagger,
    &
    \mathcal E_{y_B^W}^{(W)}(\rho)
    &:=
    \operatorname{Tr}\!\left[
        (\mathbb I_3-A_T^\dagger A_T)\rho
    \right]\rho_0 .
    \label{eq:fdrn-qutrit-wait-branches}
\end{align}
For Maintain,
\begin{align}
    \mathcal E_{y_B^M}^{(M)}(\rho)
    &:=
    \operatorname{Tr}(\rho)\rho_0 .
    \label{eq:fdrn-qutrit-maintain-branches}
\end{align}
For Probe,
\begin{align}
    \mathcal E_{y_T^P}^{(P)}(\rho)
    &:=
    \operatorname{Tr}\!\left[
        A_T^\dagger A_T\rho
    \right]\rho_0, \\
    \mathcal E_{y_B^P}^{(P)}(\rho)
    &:=
    \operatorname{Tr}\!\left[
        (\mathbb I_3-A_T^\dagger A_T)\rho
    \right]\rho_0 .
    \label{eq:fdrn-qutrit-probe-branches}
\end{align}
These maps define a valid quantum instrument for each action. The Wait--Tick
branch is a Kraus map, while the remaining nonzero branches are
measure-and-prepare maps associated with positive effects. Moreover,
\begin{align}
    \operatorname{Tr}(A_T^\dagger A_T\rho)
    +
    \operatorname{Tr}\!\left[
        (\mathbb I_3-A_T^\dagger A_T)\rho
    \right]
    =
    \operatorname{Tr}\rho .
\end{align}
Thus the Wait and Probe instruments are trace preserving when their branches
are summed. The Maintain map is trace preserving because
$\operatorname{Tr}\rho_0=1$.

\subsection{Clock memories and query initialization}

To derive the normalized clock memories, it is useful first to retain the
probability of the all-Tick branch in the trace of a subnormalized operator.
Define
\begin{align}
    \widetilde\rho_t
    :=
    \left(\mathcal E_{y_T^W}^{(W)}\right)^t(\rho_0),
    \qquad
    t\geq0 .
\end{align}
For $t\geq1$, the first Tick branch removes the $|2\rangle$ component, hence
\begin{align}
    \widetilde\rho_t
    & =
    \omega\,K_{r,\alpha}^t
    |\xi\rangle\langle\xi|
    (K_{r,\alpha}^\dagger)^t,
    \\
    K_{r,\alpha}^t
    & =
    \lambda^{t/2}
    e^{-rX}e^{it\alpha Z/2}e^{rX}.
    \label{eq:fdrn-qutrit-tick-power}
\end{align}
Using~\eqref{eq:fdrn-qutrit-r}--\eqref{eq:fdrn-qutrit-reset-state}, one obtains
\begin{align}
    \operatorname{Tr}\widetilde\rho_t
    &=
    \lambda^t
    \left(
        \frac{1}{2(1-\lambda)}
        -
        \frac{1}{4}
        \frac{e^{it\alpha}}{1-\lambda e^{i\alpha}}
        -
        \frac{1}{4}
        \frac{e^{-it\alpha}}{1-\lambda e^{-i\alpha}}
    \right)
    \nonumber\\
    &=
    \sum_{\ell=t}^{\infty}
    \lambda^\ell
    \sin^2\!\left(\frac{\ell\alpha}{2}\right)
    =
    \Phi(t).
    \label{eq:fdrn-qutrit-survival-tail}
\end{align}
The case $t=0$ gives $\operatorname{Tr}\rho_0=1=\Phi(0)$. Define the normalized
clock memories
\begin{align}
    \rho_t
    :=
    \frac{\widetilde\rho_t}{\Phi(t)},
    \qquad
    t\geq0 .
    \label{eq:fdrn-normalized-qutrit-clock-states}
\end{align}
For $t\geq1$, these states are pure
\begin{align}\label{eq:memory_states_fdrn_pure}
\rho_t
=
|\eta_t\rangle\langle\eta_t|,
\qquad
|\eta_t\rangle
=
\frac{
e^{-rX}e^{it\alpha Z/2}e^{rX}|\xi\rangle
}{
\|
e^{-rX}e^{it\alpha Z/2}e^{rX}|\xi\rangle
\|
} .
\end{align}
A conditional rollout after $t\geq1$ consecutive Wait--Tick outcomes may
therefore be initialized by preparing $|\eta_t\rangle$ in the qubit
subspace, while a reset initializes the mixed state $\rho_0$. Moreover, for every $t\geq1$, there exists a qutrit unitary $U_t$ such that
\begin{align}
U_t|0\rangle
&=
|\eta_t\rangle,
&
\rho_t
&=
U_t|0\rangle\langle0|U_t^\dagger.
\end{align}
Thus the clock state required after $t$ consecutive Wait--Tick outcomes can
be prepared directly. The mixed reset state $\rho_0$ is initialized by its
fixed preparation procedure.

Let $E_Q$ be the canonical encoder associated with the instrument products
as in~\eqref{eq:encoder_quantum}, choosing
$\rho_{\mathrm{ref}}=\rho_0$ on zero-probability histories. Since every
nonzero instrument operation other than Wait--Tick resets the memory, the
encoder satisfies
\begin{align}
E_Q(h)
&=
\rho_{\ell(h)},
\qquad
h\in\mathscr H_\star .
\label{eq:fdrn-qutrit-encoding}
\end{align}
In particular,
$E_Q(\ph_t)=\rho_t$. 

\subsection{Exactness of the qutrit world model}\label{sec:exact_qutrit_worldmodel}

The reset state~\eqref{eq:fdrn-qutrit-reset-state} and the instrument
operations
\eqref{eq:fdrn-qutrit-wait-branches}--%
\eqref{eq:fdrn-qutrit-probe-branches}
define the three-dimensional quantum world model
\begin{align}
\mathsf Q_{\mathrm{FRDN}}
&:=
\left(
\mathcal A_{\mathrm{FRDN}},
\mathcal Y_{\mathrm{FRDN}},
\mathcal H_Q,
\rho_0,
\boldsymbol{\mathcal E}_{\mathrm{FRDN}}
\right),
\nonumber\\
\boldsymbol{\mathcal E}_{\mathrm{FRDN}}
&:=
\left\{
\mathcal E_y^{(a)}
\right\}_{(a,y)\in
\mathcal A_{\mathrm{FRDN}}\times\mathcal Y_{\mathrm{FRDN}}},
\end{align}
where all unlisted instrument operations are the zero map. The encoder
$E_Q$ in~\eqref{eq:fdrn-qutrit-encoding} is the separate initialization
interface used for history-indexed queries. We use $\mathrm{Pr}_Q$ and $T_Q$ below as
shorthand for the outcome law and conditional updates induced by
$\boldsymbol{\mathcal E}_{\mathrm{FRDN}}$.

Using
$\widetilde\rho_{t+1}
 =\mathcal E_{y_T^W}^{(W)}(\widetilde\rho_t)$ and
$\operatorname{Tr}\widetilde\rho_t=\Phi(t)$, the nonzero branch identities
are, for every $t\geq0$,
\begin{align}
    \mathrm{Pr}_Q(y_T^W\mid\rho_t,W)
    &=
    \frac{\operatorname{Tr}\widetilde\rho_{t+1}}{\Phi(t)}
   =
    \frac{\Phi(t+1)}{\Phi(t)}
    =
    S(t),
    \nonumber\\
    T_{Q,y_T^W}^{W}(\rho_t)
    &=
    \rho_{t+1},
    \nonumber\\
    \mathrm{Pr}_Q(y_B^W\mid\rho_t,W)
    &=
    1-S(t),
    \nonumber\\
    T_{Q,y_B^W}^{W}(\rho_t)
    &=
    \rho_0,
    \nonumber\\
    \mathrm{Pr}_Q(y_B^M\mid\rho_t,M)
    &=
    1,
    \nonumber\\
    T_{Q,y_B^M}^{M}(\rho_t)
    &=
    \rho_0,
    \nonumber\\
    \mathrm{Pr}_Q(y_T^P\mid\rho_t,P)
    &=
    S(t),
    \nonumber\\
    T_{Q,y_T^P}^{P}(\rho_t)
    &=
    \rho_0,
    \nonumber\\
    \mathrm{Pr}_Q(y_B^P\mid\rho_t,P)
    &=
    1-S(t),
    \nonumber\\
    T_{Q,y_B^P}^{P}(\rho_t)
    &=
    \rho_0 .
    \label{eq:fdrn-qutrit-branch-identities}
\end{align}

We write $V_Q^\pi(h)$ and $Q_Q^\pi(h,a)$ for the history-indexed policy
values of $\mathsf Q_{\mathrm{FRDN}}$ defined in
Definition~\ref{def:model_policy_value_functions}, and
$V_Q^{\mathrm{opt}}(h)$ and $Q_Q^{\mathrm{opt}}(h,a)$ for its
history-indexed optimal values defined in
Definition~\ref{def:model-optimal-planning-values}. The optimal
quantities have the equivalent memory-state representations given by
\eqref{eq:model-optimal-value-bridge} and
\eqref{eq:model-optimal-q-bridge}.

\begin{theorem}[Exact qutrit realization]
\label{thm:qutrit_realization_exactplanning}
The tuple $\mathsf Q_{\mathrm{FRDN}}$ defined above is a quantum world model
of memory dimension $3$ in the sense of
Definition~\ref{def:quantum-world-model}. Together with the encoder
$E_Q$ in~\eqref{eq:fdrn-qutrit-encoding}, it forms an exact
model--encoder pair for the FRDN true world
$\mathsf W_\star^{\mathrm{FRDN}}$ of
Definition~\ref{def:fdrn-controlled-world}, in the sense of
Definition~\ref{def:predictive_consistency}.

Consequently, for every discount factor $\gamma\in[0,1)$, policy
$\pi\in\Pi_{\mathscr H}$, reachable history
$h\in\mathscr H_\star$, and action
$a\in\mathcal A_{\mathrm{FRDN}}$, the policy values defined in
Definition~\ref{def:model_policy_value_functions} satisfy
\begin{align}
V_Q^\pi(h)
&=
V_\star^\pi(h),
&
Q_Q^\pi(h,a)
&=
Q_\star^\pi(h,a).
\end{align}
The optimal quantities defined in
Definitions~\ref{def:true-optimal-planning-values}
and~\ref{def:model-optimal-planning-values} likewise satisfy
\begin{align}
V_Q^{\mathrm{opt}}(h)
&=
V_\star^{\mathrm{opt}}(h),
&
Q_Q^{\mathrm{opt}}(h,a)
&=
Q_\star^{\mathrm{opt}}(h,a).
\end{align}
Hence, by
\eqref{eq:fdrn-action-value-error}
and~\eqref{eq:fdrn-optimal-value-error},
\begin{align}
e_{\mathrm Q}^{Q}(h)
&=
e_V^{Q}(h)
=
0
\end{align}
after every reachable history.

For the rewards~\eqref{eq:fdrn-decision-rewards}, every
$\gamma\in[0,1)$, and every reachable history
$h\in\mathscr H_\star$, each model-greedy action
$\widehat a_Q(h)$ satisfying
\eqref{eq:fdrn-model-greedy-action} is true-world optimal
\begin{align}
\widehat a_Q(h)
\in
\operatorname*{arg\,max}_{a\in\mathcal A_{\mathrm{FRDN}}}
Q_\star^{\mathrm{opt}}(h,a).
\end{align}
The corresponding decision margin and decision loss satisfy
\begin{align}
g_Q(h)
&=
g_\star(h),
&
\ell_Q(h)
&=
0.
\end{align}
Consequently, for every reachable action--outcome trajectory $\mathscr F$,
\begin{align}
\overline e_{\mathrm Q}^{Q}(\mathscr F)
&=
\overline e_V^{Q}(\mathscr F)
=
\overline\ell_Q(\mathscr F)
=
0.
\end{align}
\end{theorem}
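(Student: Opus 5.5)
The proof reduces to Theorem~\ref{thm:value_equivalence_exact_world_models}, so the main work is checking exactness in the sense of Definition~\ref{def:predictive_consistency}. First we verify that $\mathsf Q_{\mathrm{FRDN}}$ is a valid three-dimensional quantum world model. The Wait--Tick branch is a single Kraus map $\rho\mapsto A_T\rho A_T^\dagger$. The remaining nonzero branches are measure-and-prepare maps with effects $A_T^\dagger A_T$, $\mathbb I_3-A_T^\dagger A_T$ or $\mathbb I_3$. These are positive because $K_{r,\alpha}^\dagger K_{r,\alpha}$ has eigenvalues $1$ and $\lambda^2$, hence $A_T^\dagger A_T\le\mathbb I_3$. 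Summing the branches of each action reproduces $\operatorname{Tr}\rho$, as already displayed. We also need $\rho_0$ to be a state, i.e.\ $0<\omega\le1$.

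Next comes exactness on $\mathscr H_\star$. We prove by induction on the length of a reachable history $h$ that $E_Q(h)=\rho_{\ell(h)}$. The base case is $E_Q(\emptyset)=\rho_0$. For the step, every nonzero branch other than Wait--Tick outputs a positive multiple of $\rho_0$, so normalization returns $\rho_0$, matching $\ell(hay)=0$. The Wait--Tick branch maps $\widetilde\rho_t$ to $\widetilde\rho_{t+1}$, so the normalized state moves from $\rho_t$ to $\rho_{t+1}$, matching $\ell(hay)=\ell(h)+1$. The unnormalized product $\mathcal E_h(\rho_0)$ is a positive scalar times $\widetilde\rho_{\ell(h)}$, and this scalar is nonzero on reachable histories. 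Hence the canonical encoder agrees with~\eqref{eq:fdrn-qutrit-encoding}.

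Predictive consistency then reduces to the branch identities~\eqref{eq:fdrn-qutrit-branch-identities}, compared term by term with the kernel~\eqref{eq:fdrn-true-kernel}. The key input is $\operatorname{Tr}\widetilde\rho_t=\Phi(t)$, from~\eqref{eq:fdrn-qutrit-survival-tail}. It gives $\operatorname{Tr}[A_T^\dagger A_T\rho_t]=\Phi(t+1)/\Phi(t)=S(t)$ for Wait and for Probe, and the complementary effect gives $1-S(t)$. The update condition~\eqref{eq:predictive-consistency-update} is exactly the induction step above.

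The main obstacle is the trace identity~\eqref{eq:fdrn-qutrit-survival-tail}. To check it, I would compute $\langle\xi|e^{rX}e^{-it\alpha Z/2}e^{-2rX}e^{it\alpha Z/2}e^{rX}|\xi\rangle$ by expanding $e^{-2rX}=\cosh 2r-\sinh 2r\,X$. This produces a constant term and a term $\propto\cos(t\alpha+\text{phase})$. I would then match the constant to $A$ and the amplitude and phase to $B,\varphi$, using $\tanh 2r$, $\theta_\alpha$ and $\omega$; this is the algebra carried over from~\cite{fanizza2024quantum}.

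Once exactness is established, Theorem~\ref{thm:value_equivalence_exact_world_models} yields equality of all policy values and optimal values, and of the argmax sets. Hence $e_{\mathrm Q}^Q=e_V^Q=0$, and any model-greedy action is true-optimal, so $\ell_Q(h)=V_\star^{\mathrm{opt}}(h)-Q_\star^{\mathrm{opt}}(h,\widehat a_Q)=0$. The action-value vectors of the model and the true world coincide, so the top-two gaps agree and $g_Q=g_\star$. Averaging these pointwise zeros along any reachable $\mathscr F$ gives the vanishing means.
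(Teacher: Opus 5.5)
Your proposal is correct and follows the paper's proof essentially step for step: instrument validity, identification of the canonical encoder with $\rho_{\ell(h)}$, the branch identities~\eqref{eq:fdrn-qutrit-branch-identities} (resting on $\operatorname{Tr}\widetilde\rho_t=\Phi(t)$) giving both exactness conditions, and then Theorem~\ref{thm:value_equivalence_exact_world_models} for the value, greedy-action, margin and loss consequences. You are somewhat more explicit than the paper, which asserts the encoder identity and the trace identity without derivation, and your planned expansion does work: $\langle\xi|e^{rX}e^{-it\alpha Z/2}e^{-2rX}e^{it\alpha Z/2}e^{rX}|\xi\rangle$ equals a positive constant times $\cosh 2r-\sinh 2r\cos(t\alpha+\psi)$, so the amplitude-to-constant ratio is exactly $\tanh 2r=B/A$ by~\eqref{eq:fdrn-qutrit-r}.
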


\begin{proof}
Fix a reachable history $h\in\mathscr H_\star$ and set $t=\ell(h)$. By the
encoder construction~\eqref{eq:fdrn-qutrit-encoding},
\begin{align}
E_Q(h)
&=
\rho_t.
\end{align}
Using the outcome and update rules
\eqref{eq:quantum-world-model-probability}
and~\eqref{eq:quantum-world-model-update}, the branch identities
\eqref{eq:fdrn-qutrit-branch-identities}, and the zero maps for all
unlisted outcomes, the qutrit model reproduces the true-world
kernel~\eqref{eq:fdrn-true-kernel}. Therefore, for every
$a\in\mathcal A_{\mathrm{FRDN}}$ and
$y\in\mathcal Y_{\mathrm{FRDN}}$,
\begin{align}
\mathrm{Pr}_Q\!\left(y\mid E_Q(h),a\right)
&=
\mathrm{Pr}_\star(y\mid h,a).
\end{align}
This is precisely the probability condition
~\eqref{eq:predictive-consistency-probability} in
Definition~\ref{def:predictive_consistency}.

The same branch identities reproduce the clock recursion
~\eqref{eq:fdrn-history-clock}: Wait--Tick maps $\rho_t$ to
$\rho_{t+1}$, whereas Wait--Break, Maintain, and either Probe outcome prepare
$\rho_0$. Hence, whenever $\mathrm{Pr}_\star(y\mid h,a)>0$,
\begin{align}
T_{Q,y}^{a}\!\left(E_Q(h)\right)
&=
E_Q(hay).
\end{align}
This is the memory-update condition
~\eqref{eq:predictive-consistency-update}. The pair $(\mathsf Q_{\mathrm{FRDN}},E_Q)$ is therefore exact on the complete
reachable history tree according to
Definition~\ref{def:predictive_consistency}.

The policy-value equalities now follow from
Theorem~\ref{thm:value_equivalence_exact_world_models}. Taking the suprema
over the same class of history policies gives the corresponding equalities
for the optimal values.

Equality of the quantum and true-world optimal action-values implies that
every action satisfying the model-greedy rule
\eqref{eq:fdrn-model-greedy-action} is true-world optimal. Substitution into
\eqref{eq:fdrn-decision-margins}
and~\eqref{eq:fdrn-decision-loss} gives
\begin{align}
g_Q(h)
&=
g_\star(h),
&
\ell_Q(h)
&=
0.
\end{align}
Since these identities hold after every reachable history, the corresponding
trajectory means vanish along every reachable action--outcome trajectory.
\end{proof}

\section{Numerical procedures}\label{app:FRDN-classical-numerical}

We numerically optimize the finite-dimensional classical world models introduced above.
Specifically, we specialize the classical world model definition in Definition~\ref{def:classical-world-model} to the Wait--Tick branch and focus on
$D_T := D_{y_T^W}^{(W)}$,
where $W$ denotes the Wait action and $y_T^W$ denotes the Tick outcome under Wait. The initial hidden distribution over states is
$z_0\in\Delta_{N-1}$, and $D_T\in\mathbb{R}_{\geq 0}^{N\times N}$ is column-substochastic:
$\mathbf{1}^{\top}D_T \leq \mathbf{1}^{\top}.$
For the consecutive Wait--Tick history $\ph_t$, the classical
survival probability is
\begin{align}
\Phi_C(t)
:=
\left\langle\mathbf 1,D_T^t z_0\right\rangle .
\end{align}
Whenever $\Phi_C(t)>0$, its conditional probability of one further Tick is
\begin{align}
S_C(t)
&=
\mathrm{Pr}_C\!\left(
    y_T^W
    \mid
    E_C(\ph_t),W
\right)
=
\frac{\Phi_C(t+1)}{\Phi_C(t)}.
\end{align}
This approximates the true conditional probability
\begin{align}
S(t)
&=
\mathrm{Pr}_\star\!\left(
    y_T^W
    \mid
    \ph_t,W
\right)
=
\frac{\Phi(t+1)}{\Phi(t)}.
\end{align}
We optimize $D_T$ and $z_0$ by minimizing
the weighted Bernoulli cross-entropy
\begin{align}
    \mathcal{L}
    =
    -\sum_{t=0}^{T_{\mathrm{fit}}} w_t
    \left[
        S(t)\log S_C(t)
        +
        \bigl(1-S(t)\bigr)
        \log\bigl(1-S_C(t)\bigr)
    \right].
\end{align}

For each classical memory dimension $N$, we fix the fitting horizon to
$T_{\mathrm{fit}}=1000$ and fit the model on the complete age grid
$t=0,\ldots,T_{\mathrm{fit}}$. Thus, each optimization step uses
$1001$ analytically evaluated conditional Tick probabilities. We use
the natural-frequency weighting
$ w_t=\frac{\Phi(t)}
{\sum_{s=0}^{T_{\mathrm{fit}}}\Phi(s)},
$ which weights each age according to its occurrence probability under the uncontrolled renewal process.

We optimize the initial distribution over states $z_0$ and the Tick-transition matrix
$D_T$ using Adam with learning rate $10^{-3}$ for $3000$ optimization
steps. After each Adam update, we project $z_0$ onto
$\Delta_{N-1}$ and each augmented transition column onto
$\Delta_N$. This projected-gradient procedure preserves
non-negativity, normalization of the initial distribution over states, and
column-substochasticity of $D_T$, while allowing exact zero entries.
For each $N$, we perform $10$ independent random restarts and retain
the feasible iterate with the lowest training loss across all
restarts. The resulting parameters provide a best-found classical approximation of memory dimension $N$ to the FRDN conditional Tick sequence.

After fitting, the selected model is evaluated without further optimization or model selection. The evaluation horizon is specified
separately for each figure to display the relevant short- or long-horizon behavior.

\bibliographystyle{ultimateunsrt}
\bibliography{biblio_qagents}

\end{document}